\documentclass[12pt, a4paper]{article}

\usepackage[margin=0.9in]{geometry}

\usepackage[authoryear]{natbib}

\usepackage[ruled,vlined,linesnumbered]{algorithm2e}

\SetAlFnt{\small}
\SetAlCapFnt{\small}
\SetAlCapNameFnt{\small}
\SetAlCapHSkip{0pt}
%\IncMargin{-\parindent}

% Metadata Information
%\acmVolume{ }
%\acmNumber{ }
%\acmArticle{ }
%\acmYear{ }
%\acmMonth{ }

% Copyright
%\setcopyright{acmcopyright}
%\setcopyright{acmlicensed}
%\setcopyright{rightsretained}
%\setcopyright{usgov}
%\setcopyright{usgovmixed}
%\setcopyright{cagov}
%\setcopyright{cagovmixed}

% DOI
%\doi{0000001.0000001}

%ISSN
%\issn{1234-56789}

\usepackage{booktabs}
\usepackage{amsmath,amsfonts,amsthm,amssymb}
\usepackage[retainorgcmds]{IEEEtrantools}
\usepackage{fourier}
\usepackage{subfigure}
\usepackage{float}
\usepackage{url}
\usepackage{tikz}
\usepackage{changepage}
\providecommand{\DontPrintSemicolon}{\dontprintsemicolon}

\theoremstyle{plain}
\newtheorem{theorem}{Theorem}[section]

\newtheorem{lemma}[theorem]{Lemma}

\newtheorem{myclaim}[theorem]{Claim}
%\spnewtheorem{myclaim}[theorem]{Claim}{\bfseries}{\itshape}

\newtheorem{definition}[theorem]{Definition}
\newtheorem{corollary}[theorem]{Corollary}

\newtheorem{example}{Example}

\DeclareMathOperator*{\argmin}{arg\,min}
\DeclareMathOperator*{\argmax}{arg\,max}

\newcommand{\mms}{\boldsymbol{\mu}}

\newcommand{\mysetminusD}{\hbox{\tikz{\draw[line width=0.6pt,line cap=round] (3pt,0) -- (0,6pt);}}}
\newcommand{\mysetminusT}{\mysetminusD}
\newcommand{\mysetminusS}{\hbox{\tikz{\draw[line width=0.45pt,line cap=round] (2pt,0) -- (0,4pt);}}}
\newcommand{\mysetminusSS}{\hbox{\tikz{\draw[line width=0.4pt,line cap=round] (1.5pt,0) -- (0,3pt);}}}
\newcommand{\mysetminus}{\mathbin{\mathchoice{\mysetminusD}{\mysetminusT}{\mysetminusS}{\mysetminusSS}}}
\allowdisplaybreaks

%========================
\title{Approximation Algorithms for Computing Maximin Share Allocations\thanks{A preliminary conference version of this work has appeared in ICALP 2015 \citep{AMNS15}.}}
\author{
	Georgios Amanatidis\thanks{Department of Informatics, Athens University of Economics and Business, Athens, Greece.}\hspace{-10pt} 
	\and
	Evangelos Markakis\footnotemark[2]\hspace{-10pt}
	\and
	Afshin Nikzad\thanks{Department of Management Science and Engineering, Stanford University, CA, USA.}\hspace{-10pt}
	\and
	Amin Saberi\footnotemark[3]
}
%==========================

\begin{document}

\maketitle

\begin{abstract}
We study the problem of computing maximin share allocations, a recently introduced fairness notion. Given a set of $n$ agents and a set of goods, the maximin share of an agent is the best she can guarantee to herself, if she is allowed to partition the goods in any way she prefers, into $n$ bundles, and then receive her least desirable bundle. 
The objective then is to find a partition, where each agent is guaranteed her maximin share. 
Such allocations do not always exist, hence we resort to approximation algorithms. 
Our main result is a $2/3$-approximation, that runs in polynomial time for any number of agents and goods. This improves upon the algorithm of~\citet{PW14}, which is also a $2/3$-approximation but runs in polynomial time only for a constant number of agents. 
To achieve this, we redesign certain parts of the algorithm in~\citet{PW14}, exploiting the construction of carefully selected matchings in a bipartite graph representation of the problem. 
Furthermore, motivated by the apparent difficulty in establishing lower bounds, we undertake a probabilistic analysis. We prove that 
in randomly generated instances, maximin share allocations exist with high probability. This can be seen as a justification of previously reported experimental evidence.
Finally, we provide further positive results for two special cases arising from previous works. The first is the intriguing case of $3$ agents, where we provide an improved $7/8$-approximation. The second case is when all item values belong to $\{0, 1, 2\}$, where we obtain an exact algorithm.
\end{abstract}

\section{Introduction}
\label{sec:intro}

%Paragraph 1 - General about fair division
We study a recently proposed fair division problem in the context of allocating indivisible goods. Fair division has attracted the attention 
of various scientific disciplines, including among others, mathematics, economics, and political science.
Ever since the first attempt for a formal treatment by Steinhaus, Banach, and
Knaster~\citep{Steinhaus48}, many interesting and challenging
questions have emerged. Over the past decades, a vast literature has
developed, see e.g., \citep{BT96,RW98},
and several notions of fairness have been suggested. 
The area gradually gained popularity in computer science as well, as most of the questions are inherently algorithmic, see~\citep{EP84,EP06-focs,WS07}, among others, for earlier works and the surveys by~\citet{Procaccia14-survey} and by~\citet{BCM16-survey} on more recent results.
%\citep{EPfocs,CLPP10}, 

%Paragraph 2 - The setting and classical notions
The objective in fair division problems is to allocate a set of resources to a set of $n$ agents in a way that leaves every agent satisfied.
In the continuous case, the available resources are typically represented by the interval [0, 1], whereas in the discrete case, we have a set of distinct, indivisible goods. %Combinations of divisible and indivisible goods may also be allowed.
The preferences of each agent are represented by a valuation function, which is usually an additive function (additive on the set of goods in the discrete case, or a probability distribution on $[0, 1]$ in the continuous case).
Given such a setup, many solution concepts have been proposed as to what constitutes a fair solution. Some of the standard ones include {\it proportionality, envy-freeness, equitability} and several variants of them. 
The most related concept to our work is proportionality, 
where an allocation is called proportional, if each agent receives a bundle of goods that is worth at least $1/n$ of the total value  according to her valuation function.

%Paragraph 3 - Contrast between continuous and discrete case
Interestingly, all the above mentioned solutions and several others can be attained in the continuous case. Apart from mere existence, in some cases we can also have efficient algorithms, see e.g.,~\citep{EP84} for proportionality and \citep{AM16} for some recent progress on envy-freeness. In the presence of indivisible goods however, the picture is quite different. We cannot guarantee existence and it is even NP-hard to decide whether a given instance admits fair allocations. In fact, in most cases it is hard to produce decent approximation guarantees. 

%Paragraph 4 - About MMS, relaxing proportionality
Motivated by the question of what can we guarantee in the discrete case, we focus on a concept recently introduced by~\citet{Budish11}, that can be seen as a relaxation of proportionality. The rationale is as follows: suppose that an agent, say agent
$i$, is asked to partition the goods into $n$ bundles and then the rest of the agents choose a bundle before $i$. In the worst case, agent $i$ 
will be left with her least valuable bundle. Hence, a risk-averse agent would choose a partition that maximizes the minimum value of a 
bundle in the partition. This value is called the maximin share of agent $i$. The objective then is to find an allocation where every 
agent receives at least her maximin share. Even for this notion, existence is not guaranteed under indivisible goods \citep{PW14,KPW16}, despite the encouraging experimental evidence \citep{BL16,PW14}. However, it is possible 
to have constant factor approximations, as has been recently shown~\citep{PW14} (see also our related work section).   \bigskip

\noindent {\bf Contribution:} Our main result, in Section~\ref{sec:2/3}, is a $(2/3-\varepsilon)$-approximation algorithm, for any constant  $\varepsilon >0$, that runs in polynomial time for any number of agents and any number of goods. That is, the algorithm produces an allocation where every agent receives a bundle worth at least $2/3 - \varepsilon$ of her maximin share. Our result improves upon the $2/3$-approximation of~\citet{PW14}, which runs in polynomial time only for a constant number of agents. To achieve this, we redesign certain parts of their algorithm, arguing about the existence of appropriate, carefully constructed matchings in a bipartite graph representation of the problem. 
Before that, in Section~\ref{sec:1/2}, we  provide a much simpler and faster $1/2$-approximation algorithm.  Despite the worse factor, this algorithm still has its own merit due to its simplicity.

Moreover, we study two special cases, motivated by previous works. The first one is the case of $n=3$ agents. This  is an interesting turning point on the approximability of the problem; for $n=2$, there always exist maximin share allocations, but adding a third agent makes the problem significantly more complex, and the best known ratio was $3/4$~\citep{PW14}. We provide an algorithm with an approximation guarantee of $7/8$, by examining more deeply the set of allowed matchings that we can use to satisfy the agents. 
%the -approximation, improving on the previously known result of .  
The second case is the setting where all item values belong to $\{0, 1, 2\}$. This is an extension of the $\{0, 1\}$ setting studied by~\citet{BL16} and we show that there always exists a maximin share allocation, for any number of agents.

Finally, motivated by the apparent difficulty in finding impossibility results on the approximability of the problem, we undertake a probabilistic analysis in Section~\ref{sec:random}. Our analysis shows that in randomly generated instances, maximin share allocations exist with high probability. This may be seen as a justification of the reported experimental evidence~\citep{BL16,PW14}, which show that maximin share allocations exist in most cases.\bigskip

\noindent {\bf Related Work:} 
%\subsection{Related Work}
For an overview of the classic fairness notions and related results, we refer the reader to the books of~\citet{BT96}, and~\citet{RW98}. The notion we 
study here was introduced by~\citet{Budish11} for ordinal utilities (i.e., agents have rankings over alternatives), building on concepts by~\citet{Moulin90}. Later on, ~\citet{BL16} defined the notion for cardinal utilities, in the form that we study it here, and  provided many important insights as well as experimental evidence. The first constant factor approximation algorithm was given by~\citet{PW14}, achieving a $2/3$-approximation but in time exponential in the number of agents. 

On the negative side, constructions of instances where no maximin share allocation exists, even for $n=3$, have been provided both by~\citet{PW14}, and by~\citet{KPW16}. These elaborate constructions, along with the extensive experimentation of~\citet{BL16}, reveal 
that it has been challenging to produce better lower bounds, i.e., instances where no $\alpha$-approximation of a maximin share allocation exists, even for 
$\alpha$ very close to $1$. 
Driven by these observations, a probabilistic analysis, similar in spirit but more general than ours, is carried out by~\citet{KPW16}.
%covering a wide range of distributions on the valuation functions. 
In our analysis in Section \ref{sec:random}, all values are uniformly drawn from $[0, 1]$; \citet{KPW16} show a 
similar result with ours but for a a wide range of distributions over $[0, 1]$, establishing that maximin share allocations exist with high probability under all such distributions. 
However, their analysis, general as it may be, needs very large values of $n$ to 
guarantee relatively high probability, hence it does not fully justify the experimental results discussed above. 

Recently, some variants of the problem have also been considered. \citet{BM17} gave a constant factor approximation of $1/10$ for the case where the agents have submodular valuation functions. It remains an interesting open problem to determine whether better factors are achievable for submodular, or other non-additive functions. Along a different direction, \citet{CKMPSW16} introduced the notion of {\it pairwise maximin share guarantee} and provided approximation algorithms. Although conceptually this is not too far apart from maximin shares, the two notions are incomparable.   

Another aspect that has been studied is the design of truthful mechanisms providing approximate maximin share fairness guarantees. Note that our work here does not deal with incentive issues. Looking at this as a mechanism design problem without money,~\citet{ABM16} provide both positive and negative results exhibiting a clear separation between what can be achieved with and without the truthfulness constraint. Even further, \citet{ABCM17} completely characterized truthful mechanisms for two agents, which in turn implied tight bounds on the approximability of maximin share fairness by truthful mechanisms.    

Finally, a seemingly related problem is that of max-min fairness (also known as the Santa Claus problem)~\citep{AS07,BS06,BD05}. In this problem we want to find an allocation where the value of the least happy person is maximized. With identical agents, this coincides with our problem, but beyond this special case the two problems exhibit very different behavior.

%%%%%%%%%%%%%%%%%%%%%%%%%%%%%%%%%%%%%%%%%%%
%%%%%%%%%%%%%%%%%%%%%%%%%%%%%%%%%%%%%%%%%%%
\section{Definitions and Notation}
\label{sec:defs}
%%%%%%%%%%%%%%%%%%%%%%%%%%%%%%%%%%%%%%%%%%%
%%%%%%%%%%%%%%%%%%%%%%%%%%%%%%%%%%%%%%%%%%%

For any $k\in\mathbb N$, we denote by $[k]$ the set $\{1,\ldots ,k\}$.
Let $N = [n]$ be a set of $n$ agents and $M = [m]$ be a set of indivisible items.
Following the usual setup in the fair division literature, we assume each agent has an additive valuation function $v_i(\cdot)$, so that for every $S\subseteq M$, 
$v_i(S) = \sum_{j\in S} v_i(\{j\})$. For $j\in M$, we will use $v_{ij}$ instead of $v_i(\{j\})$.

Given any subset $S\subseteq M$, an allocation of $S$ to the $n$ agents is a partition $T = (T_1,...,T_n)$, where $T_i\cap T_j = \emptyset$ and $\bigcup T_i = S$.
Let $\Pi_n(S)$ be the set of all partitions of a set $S$ into $n$ bundles.

%The notions below were originally defined by Budish~\citep{Budish11} and later on by~\citep{BL14} in the same setting that we study here.
\begin{definition}
Given a set of $n$ agents, and any set $S\subseteq M$, the $n$-maximin share of an agent $i$ with respect to $S$, is:
\[ \mms_i(n, S) = \displaystyle\max_{T\in\Pi_n(S)} \min_{T_j\in T} v_i(T_j) \, .\]
%where $\Pi_n(S)$ is the set of all $n$-partitions of $S$.
\end{definition}

Note that $\mms_i(n, S)$ depends on the valuation function $v_i(\cdot)$ but is independent of any other function $v_j(\cdot)$ for $j\neq i$. When $S=M$, we refer to $\mms_i(n, M)$ as the maximin share of agent $i$. The solution concept we study asks for a partition that gives each agent her maximin share. 
\begin{definition}
Given a set of agents $N$, and a set of goods $M$, a partition $T = (T_1,...,T_n)\allowbreak \in \Pi_n(M)$ is called a maximin share (MMS) allocation if $v_i(T_i)\geq \mms_i(n, M)\,$, for every agent $i\in N$.
\end{definition} 

Before we continue, a few words are in order regarding the appeal of this new concept. First of all, it is very easy to see that having a maximin share guarantee to every agent forms a relaxation of proportionality, see Claim \ref{cl:upper_bound_1}. Given the known impossibility results for proportional allocations under indivisible items, it is worth investigating whether such relaxations are easier to attain. Second, the maximin share guarantee has an intuitive interpretation; for an agent $i$, it is the value that could be achieved if we run the generalization of the cut-and-choose protocol for multiple agents, with $i$ being the cutter. In other words, it is the value that agent $i$ can guarantee to himself, if he were given the advantage to control the partition of the items into bundles, but not the allocation of the bundles to the agents.

\begin{example}
\label{ex:mms}
Consider an instance with three agents and five items: % as given by Table \ref{tab:example}.
%\vspace{-12pt}
\begin{center}
\begin{tabular}{@{} *6c @{}}    
 & $a$ & $b$ & $c$ & $d$ & $e$ \\\toprule
\ Agent 1 & $1/2$ & $1/2$ & $1/3$ & $1/3$ & $1/3$ \ \\ 
\ Agent 2 & $1/2$ & $1/4$ & $1/4$ & $1/4$ & $0$\ \\ 
\ Agent 3 & $1/2$ & $1/2$ & $1$ & $1/2$ & $1/2$\ \\\bottomrule
 %\hline
\end{tabular}\vspace{3pt}
\end{center}
If $M = \{a, b, c, d, e\}$ is the set of items, one can see that $\mms_1(3, M) = 1/2$, $\mms_2(3, M) = 1/4$, $\mms_3(3, M) = 1$. E.g., for agent $1$, no matter how she partitions the items into three bundles, the worst bundle will be worth at most $1/2$ for her, and she achieves this with the  partition $(\{a\}, \{b, c\}, \{d, e\})$. 
%Similarly agent 2 cannot produce any partition of three bundles where the worst bundle will be worth more than $1/4$ for her. 
Similarly, agent $3$ can guarantee a value of $1$ (which is best possible as it is equal to $v_3(M)/n$) by the partition $(\{a, b\}, \{c\}, \{d, e\})$. 

Note that this instance admits a maximin share allocation, e.g., $(\{a\}, \{b, c\}, \{d, e\})$, and in fact this is not unique.
Note also that if we remove some agent, say agent 2, the maximin values for the other two agents increase. E.g., $\mms_1(2, M) = 1$, achieved by the partition $(\{a, b\}, \{c, d, e\})$. Similarly, $\mms_3(2, M) = 3/2$.  
\qed
\end{example}

As shown in~\citep{PW14}, maximin share allocations do not always exist. Hence, our focus is on approximation algorithms, i.e., on algorithms that
produce a partition where each agent $i$ receives a bundle worth (according to $v_i$) at least $\rho \cdot \mms_i(n, M)$, for some $\rho\leq 1$.

%%%%%%%%%%%%%%%%%%%%%%%%%%%%%%%%%%%%%%%%%%%
%%%%%%%%%%%%%%%%%%%%%%%%%%%%%%%%%%%%%%%%%%% 
\section{Warmup: Some Useful Properties and a Polynomial Time $1/2$-approximation}
\label{sec:1/2}
%%%%%%%%%%%%%%%%%%%%%%%%%%%%%%%%%%%%%%%%%%%
%%%%%%%%%%%%%%%%%%%%%%%%%%%%%%%%%%%%%%%%%%%

We find it instructive to provide first a simpler and faster algorithm that achieves a worse approximation of $1/2$. In the course of obtaining this algorithm, we also identify some important properties and insights that we will use in the next sections.
  
We start with an upper bound on our solution for each agent. The maximin share guarantee is a relaxation of proportionality, so we trivially have:

\begin{myclaim}
\label{cl:upper_bound_1}
For every $i\in N$ and every $S\subseteq M$, $\mms_i(n,S) \leq \displaystyle\frac{v_i(S)}{n} = \frac{\sum_{j\in S}v_{ij}}{n}$.
\end{myclaim}
\begin{proof}
This follows by the definition of  maximin share. If there existed a partition where the minimum value for agent $i$  exceeded the above bound, then the total value for agent $i$ would be more than $\sum_{j\in S} v_{ij}$. 
%\qed
\end{proof}

Based on this, we now show how to get an additive approximation. Algorithm 1 below achieves an additive approximation of $v_{max}$, where $v_{max} = \max_{i, j} v_{ij}$. This simple algorithm, which we will refer to as the {\em Greedy Round-Robin Algorithm}, has also been discussed by~\citet{BL16}, where it was shown that when all item values are in $\{0, 1\}$, it produces an exact maximin share allocation. At the same time, we note that the algorithm also achieves envy-freeness up to one item, another solution concept defined by~\citet{Budish11}, and further discussed in~\citet{CKMPSW16}. Finally, some variations of this algorithm have also been used in other allocation problems, see e.g.,~\citet{BK05}, or the protocol in
% {\em elicitation-free sequential protocol}~
\citet{BL11}.  
We discuss further the properties of Greedy Round-Robin in Section~\ref{sec:random}. 

In the statement of the algorithm below, the set $V_N$ is the set of valuation functions $V_N = \{v_i : i\in N\}$, which can be encoded as a valuation matrix since the functions are additive. \vspace{5pt}

\begin{algorithm}[H]
\DontPrintSemicolon 
%{\small
Set $S_i=\emptyset$ for each $i\in N$. \;
Fix an ordering of the agents arbitrarily. \;
\While{$\exists$ unallocated items}{
$S_i = S_i\cup \{j\}$, where $i$ is the next agent to be examined in the current round (proceeding in a round-robin fashion) and $j$ is $i$'s most desired item among the currently unallocated items. \;
%$S_i = S_i\cup \{j\}$ \;
}
\textbf{return} $(S_1,...,S_n)$ \;
%}
\caption{Greedy Round-Robin$(N, M, V_N)$} \label{fig:alg-1}
\end{algorithm} %\vspace{-2pt}

\begin{theorem}
\label{thm:additive}
If $(S_1,...,S_n)$ is the output of Algorithm \ref{fig:alg-1}, then for every $i\in N$,
\begin{equation}
\label{eq:additive}
 v_i(S_i) \geq \frac{\sum_{j\in M} v_{ij}}{n} - v_{max} \geq \mms_i(n, M) - v_{max}\nonumber \,.
 \end{equation}
\end{theorem}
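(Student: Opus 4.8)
The second inequality $\frac{\sum_{j\in M}v_{ij}}{n}\geq \mms_i(n,M)$ is exactly Claim~\ref{cl:upper_bound_1} applied with $S=M$, so the entire content lies in establishing the first inequality $v_i(S_i)\geq \frac{\sum_{j\in M}v_{ij}}{n}-v_{max}$. The plan is to fix an arbitrary agent $i$ and compare the value (under $v_i$) that $i$ accumulates against the value that the other agents accumulate, exploiting that $i$ always grabs a currently most-valuable item. Suppose $i$ occupies position $p$ in the fixed ordering, let $R$ denote the number of rounds in which $i$ actually gets to pick, and let $a_1,\dots,a_R$ be the $v_i$-values of the items $i$ selects in those rounds, so that $v_i(S_i)=\sum_{r=1}^{R}a_r$.

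The crux is the following charging observation: when $i$ makes her $r$-th pick she takes the item of maximum $v_i$-value among all items still available, hence every item removed \emph{later} in the global round-robin order has $v_i$-value at most $a_r$, since it was still present at the moment $i$ chose her $r$-th item. I would then partition the picks of the other agents into blocks indexed by $i$'s turns: the block following $i$'s $r$-th pick consists of the items taken by the agents in positions $p+1,\dots,n$ during round $r$ together with those taken by positions $1,\dots,p-1$ during round $r+1$. Each such block contains at most $n-1$ items, and by the charging observation every item in it has $v_i$-value at most $a_r$. The only items not covered by these blocks are the $p-1$ items removed by positions $1,\dots,p-1$ in the very first round, before $i$ ever picks; these genuinely can be worth more to $i$ than anything $i$ later obtains and cannot be compared to any $a_r$, so I would bound each of them crudely by $v_{max}$.

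Summing, the total $v_i$-value of the items handed to the other agents is at most $(p-1)v_{max}+(n-1)\sum_{r=1}^{R}a_r=(p-1)v_{max}+(n-1)v_i(S_i)$. Adding $v_i(S_i)$ and using additivity of $v_i$ gives $\sum_{j\in M}v_{ij}\leq n\,v_i(S_i)+(p-1)v_{max}$, and since $p\leq n$ we have $\frac{p-1}{n}<1$, so rearranging yields $v_i(S_i)\geq \frac{\sum_{j\in M}v_{ij}}{n}-\frac{p-1}{n}v_{max}\geq \frac{\sum_{j\in M}v_{ij}}{n}-v_{max}$, as required. I expect the main obstacle to be purely bookkeeping: setting up the block decomposition so that the counts are exactly right and so that precisely the $p-1$ first-round items are the ones absorbed into the single additive $v_{max}$ term. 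Degenerate cases, such as $R=0$ when the items run out before $i$'s first turn, or $p=1$ when there is no leading block at all, are handled by the same inequality with the corresponding terms vanishing.
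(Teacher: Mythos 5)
Your proof is correct, including the treatment of the degenerate cases ($p=1$, $R=0$, partial last round), but it is organized differently from the paper's. The paper proceeds via a pairwise envy claim (Claim~\ref{cl:envy}): for every pair $i,j$, $v_i(S_i)\ge v_i(S_j)-v_{max}$, proved by a round-by-round telescoping comparison of $i$'s picks against $j$'s picks, and then obtains the theorem by summing this claim over all $j$ and invoking Claim~\ref{cl:upper_bound_1}. You instead charge globally: every item taken by another agent strictly between $i$'s $r$-th and $(r+1)$-st picks (at most $n-1$ such items per block) is bounded by $a_r$, and only the $p-1$ items taken before $i$'s first turn are bounded crudely by $v_{max}$. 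The elementary observation driving both arguments is identical --- a greedily chosen item dominates everything removed after it --- and indeed your blocks charge each item to exactly the same pick of $i$ that the paper's telescoping implicitly does; what differs is the aggregation. The paper's aggregation isolates the envy claim as a standalone, reusable statement (round-robin is ``envy-free up to one good''), which is of independent interest; your aggregation skips any intermediate lemma and yields the marginally sharper, position-dependent bound $v_i(S_i)\ge \frac{1}{n}\sum_{j\in M}v_{ij}-\frac{p-1}{n}\,v_{max}$, which in particular shows that the agent picking first receives a full proportional share.
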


\begin{proof}
Let $(S_1,...,S_n)$ be the output of Algorithm 1. We first prove the following claim about the envy of each agent towards the rest of the agents:

%\begin{adjustwidth*}{2em}{2em}
\begin{myclaim}
\label{cl:envy}
For every $i, j\in N$, $v_i(S_i) \geq v_i(S_j) - v_{max}$.
\end{myclaim}
\begin{proof} \renewcommand{\qedsymbol}{{\footnotesize $\boxdot$}}
%{Proof of Claim \ref{cl:envy}}
Fix an agent $i$, and let $j\neq i$. We will upper bound the difference 
 $v_i(S_j) - v_i(S_i)$. If $j$ comes after $i$ in the order chosen by the algorithm, then the statement of the claim trivially holds, since $i$ always picks an item at least as desirable as the one $j$ picks.
Suppose that $j$ precedes $i$ in the ordering. The algorithm proceeds in $\ell = \lceil m/n \rceil$ rounds. In each round $k$, let $r_k$ and $r_k'$ be the items allocated to $j$ and $i$ respectively. Then
\[ v_i(S_j) - v_i(S_i) = (v_{i, r_1} - v_{i, r_1'}) + (v_{i, r_2} - v_{i, r_2'}) + \cdots + (v_{i, r_{\ell}} - v_{i, r_{\ell}'})\,.\]  
Note that there may be no item $r_{\ell}'$ in the last round if the algorithm runs out of goods but this does not affect the analysis (simply set $v_{i, r_{\ell}'} = 0$).

Since agent $i$ picks her most desirable item when it is her turn to choose, this means that for two consecutive rounds $k$ and $k+1$ it holds that
$v_{i, r_k'} \geq v_{i, r_{k+1}}$. This directly implies that
\[ v_i(S_j) - v_i(S_i) \leq v_{i, r_1} -  v_{i, r_{\ell}'} \leq v_{i, r_1} \leq v_{max}\,.\qedhere\]
\end{proof}
%\end{adjustwidth*}

If we now sum up the statement of Claim \ref{cl:envy} for each $j$, we get: $n v_i(S_i) \geq \sum_j v_i(S_j) - n v_{max}$, which implies
\[ v_i(S_i) \geq \frac{\sum_j v_i(S_j)}{n} - v_{max} = \frac{\sum_{j\in M} v_{ij}}{n} - v_{max} \geq \mms_i(n, M) - v_{max}\,, \]
where the last inequality holds by Claim \ref{cl:upper_bound_1}. 
\end{proof}

The next important ingredient is the following monotonicity property, which says that we can allocate a single good to an agent without decreasing the maximin share of other agents. Note that this lemma also follows from Lemma 1 of \citet{BL16}, yet, for completeness, we prove it here as well.

\begin{lemma}[Monotonicity property]
\label{lem:monotonicity}
For any agent $i$ and any good $j$, it holds that 
\[\mms_i(n-1, M\mysetminus \{j\}) \geq \mms_i(n, M)\,.\]
\end{lemma}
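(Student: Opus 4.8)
The plan is to start from a partition of the full set $M$ that witnesses the value $\mms_i(n,M)$, and then surgically transform it into a partition of $M\mysetminus\{j\}$ into one fewer bundle, without ever letting any bundle's value drop below the threshold $\mms_i(n,M)$. Concretely, let $T = (T_1,\dots,T_n)\in\Pi_n(M)$ be an optimal partition for agent $i$, so that $v_i(T_k)\geq \mms_i(n,M)$ for every $k\in[n]$ (this is exactly what it means for $T$ to achieve the maximin value, since $\mms_i(n,M)=\min_k v_i(T_k)$).

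Next I would locate the good $j$: it lies in exactly one bundle, say $j\in T_p$. The idea is to delete this bundle as a separate part and fold its surviving contents into a neighbour. Pick any index $q\neq p$ and form the new collection
\[
T' = \bigl(T_k : k\in[n],\ k\notin\{p,q\}\bigr)\ \cup\ \bigl\{(T_p\mysetminus\{j\})\cup T_q\bigr\}\,.
\]
This $T'$ consists of $n-1$ bundles, and since we have removed only the single good $j$ and merged two old bundles into one, $T'$ is a genuine partition of $M\mysetminus\{j\}$ into $n-1$ parts, i.e.\ $T'\in\Pi_{n-1}(M\mysetminus\{j\})$.

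It then remains to bound $v_i$ from below on every bundle of $T'$. The $n-2$ untouched bundles already satisfy $v_i(T_k)\geq \mms_i(n,M)$. For the merged bundle, additivity and the nonnegativity of the item values give
\[
v_i\bigl((T_p\mysetminus\{j\})\cup T_q\bigr) = v_i(T_p\mysetminus\{j\}) + v_i(T_q) \geq v_i(T_q) \geq \mms_i(n,M)\,.
\]
Hence every part of $T'$ is worth at least $\mms_i(n,M)$ to agent $i$, and therefore $\mms_i(n-1,M\mysetminus\{j\})\geq \min_{B\in T'} v_i(B)\geq \mms_i(n,M)$, as claimed.

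Honestly, there is no serious obstacle here: the whole argument rests on the fact that valuations are additive and nonnegative, so discarding one good and consolidating bundles can only \emph{raise} (weakly) the value of the affected bundle. The one point worth stating carefully is the initial observation that an optimal partition has \emph{every} bundle worth at least $\mms_i(n,M)$ — not merely its minimum — which is what lets the $n-2$ leftover bundles carry the bound for free.
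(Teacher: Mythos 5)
Your proof is correct and follows essentially the same route as the paper's: start from a maximin partition of $M$, remove the bundle containing $j$ as a separate part, and fold its leftover items into the remaining bundles (the paper distributes them arbitrarily, you merge them into a single bundle $T_q$ — an immaterial difference given nonnegative additive valuations). Both arguments hinge on the same two observations you highlight: every bundle of an optimal partition is worth at least $\mms_i(n,M)$, and adding items can only weakly increase a bundle's value.
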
 
\begin{proof}
Let us look at agent $i$, and consider a partition of $M$ that attains her maximin share.
Let $(S_1,..., S_n)$ be this partition. Without loss of generality, suppose $j\in S_1$. Consider the remaining partition $(S_2,...,S_n)$ enhanced in an arbitrary way by the items of $S_1\mysetminus\{j\}$.
This is a $(n-1)$-partition of $M\mysetminus \{j\}$ where the value of agent $i$ for any bundle is at least $\mms_i(n, M)$. Thus, 
we have $\mms_i(n-1, M\mysetminus \{j\}) \geq \mms_i(n, M)$.
%\qed
\end{proof}

We are now ready for the $1/2$-approximation, obtained by Algorithm~\ref{fig:alg-1/2} below, which is based on
using Greedy Round-Robin, but only after we allocate first the most valuable goods. This is done so that the value of $v_{max}$ drops to an extent that Greedy Round-Robin can achieve a multiplicative approximation. \vspace{5pt}

\begin{algorithm}[H]
\DontPrintSemicolon 
%{\small 
Set $S =M$ \;
\For{$i = 1$ to $|N|$} {
Let $\alpha_i = \frac{\sum_{j\in S} v_{ij}}{|N|}$  \;   
}
\While{$\exists i, j$ s.t. $v_{ij}\geq \alpha_i/2$}{ \label{line:1st-while}
Allocate $j$ to $i$. \;
$S = S\mysetminus \{j\}$ \;
$N = N\mysetminus \{i\}$ \;
Recompute the $\alpha_i$s.  \label{line:last-while} \;
} 
Run Greedy Round-Robin on the remaining instance. \; 
%}
\caption{$\textsc{apx-mms}_{1/2}(N, M, V_N)$}\label{fig:alg-1/2}
\end{algorithm}

\begin{theorem}
\label{thm:1/2}
Let $N$ be a set of $n$ agents, and let $M$ be a set of goods. Algorithm~\ref{fig:alg-1/2} produces an allocation $(S_1,...,S_n)$  such that 
\[ v_i(S_i) \geq \frac{1}{2} \mms_i(n, M)\,,\ \forall i\in N\,.\]
\end{theorem}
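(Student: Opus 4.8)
The plan is to track each agent through the two phases of the algorithm — the greedy single-good allocations in the while loop (lines~\ref{line:1st-while}--\ref{line:last-while}) and the concluding call to Greedy Round-Robin — and to show separately that each class of agents ends up with a bundle worth at least half of her original maximin share $\mms_i(n,M)$. The whole argument rests on a single invariant: at every point during the execution, for every agent $i$ still present, the recomputed quantity $\alpha_i = v_i(S)/|N|$ (with $S$ and $N$ the \emph{current} good set and agent set) satisfies $\alpha_i \geq \mms_i(n,M)$.

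First I would establish this invariant. Each iteration of the while loop deletes exactly one agent and exactly one good, so the reduced instance is always obtained from the original $(N,M)$ by removing equally many agents and goods. Applying Claim~\ref{cl:upper_bound_1} to the current instance gives $\alpha_i = v_i(S)/|N| \geq \mms_i(|N|, S)$, and applying the monotonicity property of Lemma~\ref{lem:monotonicity} once per deletion yields $\mms_i(|N|, S) \geq \mms_i(n,M)$; chaining these gives the invariant. With the invariant in hand, the while-loop agents are immediate: when agent $i$ receives a single good $j$, the loop condition guarantees $v_{ij} \geq \alpha_i/2 \geq \tfrac{1}{2}\,\mms_i(n,M)$.

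For the agents that survive the while loop, termination means that in the final reduced instance every remaining good $j$ satisfies $v_{ij} < \alpha_i/2$ for each remaining agent $i$; that is, from $i$'s viewpoint every leftover good is ``small''. I would then apply Theorem~\ref{thm:additive} to this reduced instance, but in an agent-specific form — replacing the global $v_{max}$ by $i$'s own largest remaining value — which is legitimate because the telescoping estimate in the proof of Claim~\ref{cl:envy} only ever invokes agent $i$'s valuations. This gives $v_i(S_i) \geq \alpha_i - \max_{j} v_{ij} > \alpha_i - \alpha_i/2 = \alpha_i/2 \geq \tfrac{1}{2}\,\mms_i(n,M)$ by the invariant, completing the argument.

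The main obstacle is precisely the invariant. Its subtlety is that $\alpha_i$ is recomputed on a shrinking instance, so it is not a priori clear that it stays above the original target $\mms_i(n,M)$: removing agents lowers the denominator but removing goods simultaneously lowers the numerator. The monotonicity property (Lemma~\ref{lem:monotonicity}) is exactly what resolves this tension, and the key structural observation that makes it applicable is that each loop iteration removes precisely one agent \emph{together with} one good. A secondary point to handle cleanly is the per-agent strengthening of Theorem~\ref{thm:additive}, which is what converts the ``all leftover goods are small for $i$'' condition into the half-approximation for the round-robin phase.
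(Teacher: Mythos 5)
Your proof is correct and follows essentially the same route as the paper: both arguments split the agents into the while-loop phase and the Greedy Round-Robin phase, bound each via Claim~\ref{cl:upper_bound_1} plus repeated application of Lemma~\ref{lem:monotonicity} (your ``invariant'' is just this chaining stated once up front), and invoke the additive guarantee of Theorem~\ref{thm:additive} for the second phase. Your explicit per-agent strengthening of the additive bound (replacing $v_{max}$ by $\max_j v_{ij}$) is a nice touch of rigor --- the paper's wording ``$v_{max}$ at that point is less than half the current value of $\alpha_i$'' tacitly relies on exactly this agent-specific form, which is indeed what the proof of Claim~\ref{cl:envy} delivers.
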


\begin{proof}
We will distinguish two cases. Consider an agent $i$ who was allocated a single item during the first phase of the algorithm (lines \ref{line:1st-while} - \ref{line:last-while}). Suppose that at the time when $i$ was given her item, there were $n_1$ active agents, $n_1\leq n$, and that $S$ was the set of currently unallocated items. By the design of the algorithm, this means that the value of what $i$ received is at least 
 \[ \frac{\sum_{j\in S} v_{ij}}{2n_1} \geq \frac{1}{2}\mms_i(n_1, S)\]
where the inequality follows by Claim~\ref{cl:upper_bound_1}. 
But now if we apply the monotonicity property (Lemma~\ref{lem:monotonicity}) $n-n_1$ times, we get that $\mms_i(n_1, S) \geq \mms_i(n, M)$, and we are done.

Consider now an agent $i$, who gets  a bundle of goods according to Greedy Round-Robin, in the second phase of the algorithm. Let $n_2$ be the number of active agents at that point, and $S$ be the set of goods that are unallocated before Greedy Round-Robin is executed. We know that $v_{max}$ at that point is less than half the current value of $\alpha_i$ for agent $i$. Hence by the additive guarantee of Greedy Round-Robin, we have that the bundle received by agent $i$ has value at least
\[ \frac{\sum_{j\in S} v_{ij}}{n_2} - v_{max} > \frac{\sum_{j\in S} v_{ij}}{n_2} - \frac{\alpha_i}{2} = \frac{\sum_{j\in S} v_{ij}}{2n_2} \geq \frac{1}{2}\mms_i(n_2, S)\,.\]
Again, after applying the monotonicity property repeatedly, we get that $\mms_i(n_2, S) \geq \mms_i(n, M)$, which completes the proof.
\end{proof}

%%%%%%%%%%%%%%%%%%%%%%%%%%%%%%%%%%%%%%%%%%%
%%%%%%%%%%%%%%%%%%%%%%%%%%%%%%%%%%%%%%%%%%%
\section{A Polynomial Time $\big(\frac{2}{3}-\varepsilon\big)$-approximation}
\label{sec:2/3}
%%%%%%%%%%%%%%%%%%%%%%%%%%%%%%%%%%%%%%%%%%%
%%%%%%%%%%%%%%%%%%%%%%%%%%%%%%%%%%%%%%%%%%%

The main result of this section is  Theorem \ref{thm:2/3}, establishing a polynomial time algorithm for achieving a $2/3$-approximation to the maximin share of each agent.

\begin{theorem}\label{thm:2/3}
Let $N$ be a set of $n$ agents, and let $M$ be a set of goods. 
For any constant $\varepsilon>0$, Algorithm \ref{fig:alg-pw} produces in polynomial time an allocation $(S_1,...,S_n)$, 
such that
\[ v_i(S_i) \geq \left(\frac{2}{3}-\varepsilon\right) \mms_i(n, M)\,,\ \forall i\in N\,.\]
\end{theorem}

Our result is based on the algorithm by~\citet{PW14}, which also guarantees to each agent a $2/3$-approximation. However, their algorithm runs in polynomial time only for a constant number of agents. Here, we identify the source of exponentiality and take a different approach regarding certain parts of the algorithm.
%, in order to turn this into a polynomial time algorithm for any number of agents, with an approximation ratio of $\frac{2}{3}-\varepsilon$, for any $\varepsilon >0$. 
For the
sake of completeness, we first present  the necessary 
%definitions and 
related results
%for the setup of~\citep{PW14}
of~\citet{PW14}, before we discuss the steps that are needed to obtain our result.

First of all, we note that even the computation of the maximin share values is already a hard problem. For a single agent $i$, the problem of deciding whether $\mms_i(n, M)\ge k$ for a given $k$ is NP-complete. 
However, a PTAS follows by the work of~\citet{Woeginger97}. In the original paper, which is in the context of job scheduling, Woeginger gave a PTAS for maximizing the minimum completion time on identical machines. But this scheduling problem is identical to computing a maximin partition with respect to a given agent $i$.
Indeed, from agent $i$'s
perspective, it is enough to think of the machines as identical agents (the only input that we need for computing $\mms_i(n, M)$ is the valuation function of $i$). Hence:

\begin{theorem}[\normalfont{Follows by \citep{Woeginger97}}]
\label{thm:woeginger}
Suppose we have a set $M$ of goods to be divided among $n$ agents. Then, for each agent $i$, there exists a PTAS for approximating $\mms_i(n, M)$.
%and all the agents have the same valuation function 
%for the items. For every $\varepsilon > 0$, there exists an approximation algorithm with time complexity 
%$O(c_{\varepsilon} m \log n)$ that computes a partition whose minimum value of a share is at least 
%$1-\varepsilon$ times the minimum value of a share in the optimum maximin partition.
\end{theorem}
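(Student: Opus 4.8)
The plan is to reduce the problem of computing an agent's maximin share to a classical scheduling problem and invoke a known PTAS. The key observation, as the excerpt already notes, is that for a fixed agent $i$, the quantity $\mms_i(n,M) = \max_{T\in\Pi_n(M)}\min_{T_j\in T} v_i(T_j)$ depends only on the single valuation function $v_i$. So I would first recast the setup: treat the $n$ agents appearing in the partition as $n$ identical machines, and treat each good $j\in M$ as a job whose processing time equals $v_{ij}$ (the value agent $i$ assigns to that good). A partition $T=(T_1,\dots,T_n)$ of the goods into $n$ bundles corresponds exactly to a schedule assigning jobs to the $n$ machines, and the value $\min_{T_j\in T} v_i(T_j)$ of the least-valuable bundle corresponds to the minimum machine load (completion time) under that schedule.

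Under this correspondence, maximizing $\min_{T_j} v_i(T_j)$ over all partitions is precisely the problem of maximizing the minimum completion time over all assignments of jobs to identical machines. This is exactly the scheduling objective addressed by Woeginger in~\cite{Woeginger97}, who gives a PTAS for it. I would therefore state that a feasible partition yielding minimum bundle value $V$ is in bijection with a schedule of minimum machine load $V$, so the optimal values of the two problems coincide: the scheduling optimum equals $\mms_i(n,M)$.

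Having established this identification, the PTAS for the scheduling problem transfers directly. For any constant $\delta>0$, Woeginger's algorithm produces in polynomial time a partition $T$ whose minimum bundle value is at least $(1-\delta)$ times the scheduling optimum, which by the bijection equals $(1-\delta)\,\mms_i(n,M)$. Since the only input needed to run this procedure is the vector $(v_{ij})_{j\in M}$ of agent $i$'s values, the whole thing runs in polynomial time in the size of that input for each fixed $i$, giving the claimed PTAS for approximating $\mms_i(n,M)$.

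The main (and essentially only) obstacle here is making the reduction airtight: I need to verify that the two problems are genuinely the same optimization problem, not merely similar. Concretely, the subtle points are (i) that the number of bundles $n$ in the partition maps to the number of identical machines and is held fixed as part of the instance, and (ii) that "least valuable bundle" and "minimum completion time" coincide under additive valuations, which holds precisely because $v_i$ is additive so that $v_i(T_j)=\sum_{k\in T_j} v_{ik}$ is the total load on machine $j$. Once the dictionary between bundles/jobs and values/processing times is fixed, there is no further calculation—the result is immediate from Theorem in~\cite{Woeginger97}. I do not expect any genuine difficulty beyond stating this correspondence cleanly.
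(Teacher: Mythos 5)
Your reduction is exactly the paper's argument: goods become jobs with processing times $v_{ij}$, the $n$ bundles become $n$ identical machines, additivity of $v_i$ makes bundle value equal machine load, and Woeginger's PTAS for maximizing minimum completion time then applies verbatim. The proposal is correct and takes essentially the same route as the paper.
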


%The PTAS above is used in order to estimate the maximin share we want to approximate for each agent.

A central quantity in the algorithm of~\citet{PW14} is the $n$-{\em density balance parameter}, denoted by $\rho_n$ and defined below. 
Before stating the definition, we give for clarity the high level idea, which can be seen as an attempt to generalize the monotonicity property of Lemma \ref{lem:monotonicity}. Assume that in the course of an algorithm, 
we have used a subset of the items to ``satisfy'' some of the agents, and that those items do not have \lq\lq{}too much\rq\rq{} value 
for the rest of the agents. If $k$ is the number of remaining agents, and $S$ is the remaining set of goods, then we should expect to be able to ``satisfy'' these $k$ agents using the items in $S$. A good approximation in this reduced instance however, would only be an approximation with respect to $\mms_i(k, S)$. Hence, in order to hope for an approximation algorithm for the original instance, we would need to examine how $\mms_i(k, S)$ relates to $\mms_i(n, M)$. Essentially, the parameter $\rho_n$ is the best guarantee one can hope to achieve for the remaining agents, based only on the fact that the complement of the set left to be shared is of relatively small value. Formally:

\begin{definition}[\emph{\citep{PW14}}]
For any number $n$ of agents, let 
\[ \rho_n = \max \left\{ \lambda\ \Bigg | 
	\begin{array}{l}
		\forall  M, \forall \textnormal{ additive } v_i\in  ({\mathbb R}^+)^{2^{M}}, \forall S\subseteq M, 
		\forall k, \ell \textnormal{ s.t. } k+\ell=n, \\
		v_i(M \mysetminus S) \le \ell\lambda \mms_i(n, M) \Rightarrow \mms_i(k, S) \ge \lambda \mms_i(n, M)
	\end{array}
\right\}\,. \]
\end{definition}

After a quite technical analysis, Procaccia and Wang calculate the exact value of $\rho_n$ in the following lemma.
\begin{lemma}[\normalfont{Lemma 3.2 of~\citep{PW14}}]
For any $n\ge 2$,
\[ \rho_n = \frac{2\lfloor n \rfloor_{odd}}{3\lfloor n \rfloor_{odd}-1} > \frac{2}{3}\, , \]
where $\lfloor n \rfloor_{odd}$ denotes the largest odd integer less than or equal to $n$.
\end{lemma}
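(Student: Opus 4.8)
The plan is to prove the Density Balance Lemma by establishing matching upper and lower bounds on $\rho_n$. The lower bound $\rho_n \ge \frac{2\lfloor n \rfloor_{odd}}{3\lfloor n \rfloor_{odd}-1}$ requires showing that the defining implication holds for $\lambda = \frac{2\lfloor n \rfloor_{odd}}{3\lfloor n \rfloor_{odd}-1}$; the upper bound requires exhibiting a single instance $(M, v_i, S, k, \ell)$ witnessing that no larger $\lambda$ can work.

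For the lower bound, I would fix an agent $i$, an instance satisfying $v_i(M \mysetminus S) \le \ell\lambda \mms_i(n, M)$ with $k + \ell = n$, and normalize so that $\mms_i(n, M) = 1$. The goal is to show $\mms_i(k, S) \ge \lambda$, i.e., that $S$ can be partitioned into $k$ bundles each of value at least $\lambda$ for agent $i$. The natural strategy is to start from a partition $(T_1,\dots,T_n)$ of $M$ achieving $\mms_i(n,M) = 1$ (so every $T_j$ has value at least $1$), and then account for how removing the ``deficit'' set $M \mysetminus S$ of total value at most $\ell\lambda$ degrades the guarantee. A bin-packing / bundle-merging argument seems right: one wants to show that out of the $n$ original bundles, enough of them survive (possibly after merging the surviving fragments) to form $k$ bundles of value at least $\lambda$ each. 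The odd-versus-even distinction and the specific form $\frac{2q}{3q-1}$ with $q = \lfloor n\rfloor_{odd}$ strongly suggest that the tight case arises when each high-value item has value close to $\lambda$ and a careful pairing of items is forced — so the combinatorial heart is arguing that one can always merge leftover pieces across at most $\ell$ sacrificed bundles without dropping below the $\lambda$ threshold. I would make this precise by a worst-case analysis over how the value $\ell\lambda$ is distributed among removed items, reducing to a discrete optimization whose extremal configuration pins down $\lambda$.

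For the upper bound, I would construct an explicit tight instance. Set $q = \lfloor n \rfloor_{odd}$ and choose items whose values are calibrated so that $\mms_i(n, M) = 1$ but any $k$-partition of the reduced set $S$ is forced to contain a bundle of value exactly $\lambda = \frac{2q}{3q-1}$, with the removed set $M \mysetminus S$ having value exactly $\ell\lambda$. Concretely, I expect the extremal example to use items of two or three distinct values (for instance some items of value near $\lambda$ and some of value near $1-\lambda$ or $\tfrac{1}{2}$) arranged so that the maximin partition into $n$ bundles is rigid, and that after deleting a set of the prescribed total value, the remaining items cannot be repacked into $k$ bundles each exceeding $\lambda$. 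Verifying that this instance indeed forces the bound reduces to checking that $\mms_i(n,M)=1$ for the chosen values and that the best $k$-partition of $S$ achieves exactly $\lambda$.

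The main obstacle will be the lower bound's combinatorial core: arguing that the surviving item fragments from the sacrificed bundles can always be reassembled into $k$ valid bundles. The difficulty is that additive valuations with indivisible items do not repack freely, so one cannot simply redistribute value continuously; instead one must track individual items and their sizes, and the extremal analysis pinning down the exact constant $\frac{2q}{3q-1}$ (rather than a weaker bound like $\frac{2}{3}$) is where the technical weight lies — this is precisely the ``quite technical analysis'' the excerpt alludes to. I would therefore isolate a clean invariant (e.g., that at most $\ell$ original bundles are destroyed and the value they contributed to $S$ is bounded below), and reduce the whole claim to a finite case analysis on the parity of $n$, treating the odd and even cases separately since $\lfloor n \rfloor_{odd}$ behaves differently in each.
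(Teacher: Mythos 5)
You have not actually produced a proof; you have produced a plan for one. Every load-bearing step is deferred: the lower bound's ``bin-packing / bundle-merging argument'' is described only by what it \emph{should} accomplish (``arguing that one can always merge leftover pieces \dots without dropping below the $\lambda$ threshold''), and the upper bound's tight instance is never constructed --- you say you ``expect'' it to use two or three distinct item values, but you do not specify them, nor verify that $\mms_i(n,M)=1$ and $\mms_i(k,S)=\lambda$ for any concrete choice. The entire difficulty of the lemma lies exactly in these two places: pinning down the constant $\frac{2q}{3q-1}$ (as opposed to some cruder bound like $\frac{2}{3}$ or $\frac{1}{2}$) requires executing the extremal analysis, not gesturing at it. As it stands, nothing in the proposal could be checked for correctness, because no inequality is ever derived. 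One structural point you do get right: since the witness instance can be chosen so that the hypothesis $v_i(M\mysetminus S)\le \ell\lambda\,\mms_i(n,M)$ holds at the tight value while $\mms_i(k,S)$ equals exactly $\lambda\,\mms_i(n,M)$, a single instance does rule out all larger $\lambda$ simultaneously; so the two-sided strategy is sound in outline.

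It is also worth pointing out that the paper you were asked to match does not prove this statement at all: it is quoted verbatim as Lemma 3.2 of Procaccia and Wang~\cite{PW14}, with the remark that its proof is ``a quite technical analysis'' carried out in that reference. So the benchmark here is a citation, and the honest comparison is that your proposal correctly identifies the shape of the argument in~\cite{PW14} (matching lower bound over all instances, plus a rigid extremal construction) but supplies none of its substance. To turn this into a proof you would need, at minimum: (a) the explicit family of item values realizing the tight instance for each parity of $n$, with a verification of its $n$-maximin and $k$-maximin values; and (b) a complete repacking argument showing that whenever $v_i(M\mysetminus S)\le \ell\lambda\,\mms_i(n,M)$ with $\lambda=\frac{2\lfloor n\rfloor_{odd}}{3\lfloor n\rfloor_{odd}-1}$, the surviving items admit a $k$-partition meeting the threshold --- including the case analysis on how the removed value is spread across the original maximin bundles, which is where indivisibility genuinely bites.
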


We are now ready to state our algorithm, referred to as $\textsc{apx-mms}$ (Algorithm~\ref{fig:alg-pw} below). 
%The algorithm follows the same framework as in~\citep{PW14}, and 
We elaborate on the crucial differences between Algorithm~\ref{fig:alg-pw} and the result of~\citet{PW14} 
after the algorithm description (namely after Lemma~\ref{lem:x+}).
%The algorithm $\textsc{apx-mms}$
%takes as input $\varepsilon, N, M$ and the valuation functions $V_N=\{v_i | i\in N\}$. 
At first, the algorithm computes each agent's $(1-\varepsilon')$-approximate
maximin value using Woeginger's PTAS, where $\varepsilon'=\frac{3\varepsilon}{4}$. 
Let $\boldsymbol{\xi}=(\xi_1, \ldots, \xi_n)$ be the vector of 
these values. Hence, $\forall i,\ \mms_i(n, M)\ge \xi_i\ge (1-\varepsilon') \mms_i(n, M)$.  Then, $\textsc{apx-mms}$ makes a call to the recursive algorithm 
$\textsc{rec-mms}$ (Algorithm \ref{fig:alg-pw_rec}) to compute a $\left(\frac{2}{3}-\varepsilon\right)$-approximate partition. 
$\textsc{rec-mms}$ takes the arguments $\varepsilon', n=|N|$, $\boldsymbol{\xi}$, $S$ (the set of items that have not been allocated 
yet), $K$ (the set of agents  that have not received a share of items yet), and the valuation functions $V_K=\{v_i | i\in K \}$. 
The guarantee provided by $\textsc{rec-mms}$ is that as long as the already allocated goods are not worth too much for the currently active agents of $K$, we can satisfy them with the remaining goods. More formally, under the assumption that 
\begin{equation}\label{recursion-balance}
	\forall i\in K,\ \ v_i (M\mysetminus S) \le (n-|K|)\rho_n \mms_i(n, M)\,,
\end{equation}
which we will show that it holds before each call, $\textsc{rec-mms}(\varepsilon', n, \boldsymbol{\xi}, S, K, V_K)$ computes a $|K|$-partition of $S$, so that each agent receives 
items of value at least $(1-\varepsilon')\rho_n \xi_i$. 

The initial call of the recursion is, of course, $\textsc{rec-mms}(\varepsilon', n, \boldsymbol{\xi}, M, N, V_N)$.
%as seen in Algorithm 3. 
Before moving on to the next recursive call, $\textsc{rec-mms}$ appropriately allocates some of the items to some of the agents,
so that they receive value at least $(1-\varepsilon')\rho_n \xi_i$ each. This is achieved by identifying an appropriate matching between some currently unsatisfied agents and certain bundles of items, as described in the algorithm. In particular, the most important step in the algorithm is to first compute the set $X^+$ (line \ref{line:x+}), which is the set of agents that will 
not be matched in the current call. 
The remaining active agents, i.e., $K\mysetminus X^+$, are then guaranteed to get matched in the current round, whereas $X^+$ will be satisfied in the next recursive calls.  In order to ensure this for $X^+$, 
$\textsc{rec-mms}$ guarantees that inequality 
\eqref{recursion-balance} holds for $K = X^+$ and with $S$ being the rest of the items. Note that \eqref{recursion-balance} trivially holds for 
the initial call of $\textsc{rec-mms}$, where $K = N$ and $S = M$. \vspace{5pt}

\begin{algorithm}[H]
	\DontPrintSemicolon 
	%{\small
		%$n=|N|$ \;
		$\varepsilon'=\frac{3\varepsilon}{4}$ \;
		\For{$i = 1$ to $|N|$}{ 
			Use Woeginger's PTAS to compute a $(1-\varepsilon')$-approximation $\xi_i$ of $\mms_i(|N|, M)$. Let $\boldsymbol{\xi}=(\xi_1, \ldots, \xi_n)$. \; 
		}
		\Return $\textsc{rec-mms}(\varepsilon', |N|, \boldsymbol{\xi}, M, N, V_N)$ \;
	%}
	\caption{$\textsc{apx-mms}(\varepsilon, N, M, V_N)$}\label{fig:alg-pw}
\end{algorithm}\vspace{5pt}

For simplicity, in the description of $\textsc{rec-mms}$, we assume that $K=\{1, 2, \ldots$, $|K|\}$.
Also, for the bipartite graph defined below in the algorithm, by $\Gamma(X^+)$ we denote the set of neighbors of the vertices in $X^+$. \vspace{5pt}

\begin{algorithm}[H]
	\DontPrintSemicolon 
	%{\small
		\eIf{$|K|=1$}{ 
			Allocate all of $S$ to agent 1.\; }
		{
			Use Woeginger's PTAS to compute a $(1-\varepsilon')$-approximate $|K|$-maximin 
			partition of $S$ with respect to agent $1$ from $K$, say ($S_1, \ldots, S_{|K|}$). \; \label{line:PTAS}
			Create a bipartite graph $G=(X\cup Y, E)$, where $X=Y=K$ and 
			$E=\{(i,j)\,|\,  i\in X,  j\in Y, \allowbreak v_i(S_j)\ge (1-\varepsilon')\rho_n \xi_i\}$. \;\label{line:G}
			Find a set $X^+\subset X$, as described in Lemma \ref{lem:x+}. \;\label{line:x+}
			Given a perfect matching $A$, between $X\mysetminus X^+$ and a subset of 
			$Y\mysetminus \Gamma(X^+)$, allocate $S_j$ to agent $i$ iff $(i, j)\in A$
			(the matching is a byproduct of line \ref{line:x+}). \; \label{line:allocate}
			\eIf{$X^+=\emptyset$}{
				Output the above allocation.\;}
			{
				Output the above allocation, together with $\textsc{rec-mms}(\varepsilon', n, \boldsymbol{\xi},  S^*,  X^+, V_{X^+})$,
				where $S^*$ is the subset of $S$ not allocated in line \ref{line:allocate}.\;
			}
		}
	%}
	\caption{$\textsc{rec-mms}(\varepsilon', n, \boldsymbol{\xi}, S, K, V_K)$}\label{fig:alg-pw_rec}
\end{algorithm} \vspace{5pt}

To proceed with the analysis, and since the choice of $X^+$ plays an important role (line \ref{line:x+} of Algorithm \ref{fig:alg-pw_rec}),
%we note that the important step is how to identify the set $X^+$ in line \ref{line:x+}.  
we should first clarify what properties of $X^+$ are needed for 
the algorithm to work. The following lemma is the most crucial part in the design of our algorithm.
%\marginpar{This proposition is what makes things work in poly-time and avoids what they do.}
\begin{lemma}\label{lem:x+}
Assume that for $n$, $M$, $S$, $K$, $V_K$ inequality \eqref{recursion-balance} holds and let $G = (X\cup Y, E)$ be the bipartite graph defined
in line \ref{line:G} of $\textsc{rec-mms}$. Then there exists a subset $X^+$ of $X\mysetminus \{1\}$, such that: 
%\vspace{-4pt}
\begin{itemize}
  \item[(i)] $X^+$ can be found efficiently.
  \item[(ii)] There exists a perfect matching between $X\mysetminus X^+$ and a subset of 
  							 	   $Y\mysetminus \Gamma(X^+)$.
  \item[(iii)] If we allocate subsets to agents according to such a matching (as described in line 
  									\ref{line:allocate}) and $X^+\neq\emptyset$, then inequality \eqref{recursion-balance} 
  									holds for $n$, $M$, $S^*$, $X^+$, $V_{X^+}$ where $S^*\subseteq S$ is the unallocated set of items, i.e.:
 \[	\forall i\in X^+,\ \ v_i (M\mysetminus S^*) \le (n-|X^+|)\rho_n \mms_i(n, M)\,.\]
\end{itemize}
\end{lemma}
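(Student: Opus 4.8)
The plan is to construct $X^+$ via an iterative ``deficiency'' process on the bipartite graph $G$, reminiscent of the proof of Hall's theorem, and then verify the three properties. The starting point is the observation that agent $1$, for whom the partition $(S_1,\ldots,S_{|K|})$ was computed as a $(1-\varepsilon')$-approximate maximin partition, is automatically matched to every bundle: by construction each $S_j$ has value at least $(1-\varepsilon')\mms_1(|K|,S)\geq(1-\varepsilon')\rho_n\xi_1$ for agent $1$, since $\rho_n>2/3$ and $\xi_1\le\mms_1$. Thus vertex $1\in X$ has full degree in $G$, which is why we may safely exclude it from $X^+$ and hope to match it last. The idea is to look for a set $W\subseteq X\mysetminus\{1\}$ that is ``expanding'', i.e. $|\Gamma(W)|\ge|W|$, and set $X^+$ to be a maximal \emph{contracting} (Hall-violating) set; concretely, I would take $X^+$ to be the union of all subsets $W\subseteq X\mysetminus\{1\}$ with $|\Gamma(W)|<|W|$, or, if no such set exists, set $X^+=\emptyset$.

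First I would dispose of properties (i) and (ii). For (ii), if $X^+=\emptyset$ then Hall's condition holds on all of $X\mysetminus\{1\}$; combined with the full-degree vertex $1$, a perfect matching of $X=X\mysetminus X^+$ into $Y$ exists by Hall's theorem. If $X^+\neq\emptyset$, the defining property is that $\Gamma(X^+)$ is ``small'' while $X\mysetminus X^+$ satisfies Hall's condition into $Y\mysetminus\Gamma(X^+)$; the standard closure argument shows that taking $X^+$ to be the maximal deficient set makes $X\mysetminus X^+$ expand within $Y\mysetminus\Gamma(X^+)$, so the desired perfect matching exists and is produced by any augmenting-path algorithm, which also gives (i) in polynomial time.

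The substance is property (iii), the \emph{invariant preservation}. Here I would argue that the agents in $X^+$ have not lost too much value to the bundles allocated in line~\ref{line:allocate}. The allocated bundles all lie in $\Gamma(X\mysetminus X^+)$, and by the deficiency choice $|\Gamma(X^+)|<|X^+|$, so at most $|X^+|$ bundles touch $X^+$'s neighborhood; more usefully, for each $i\in X^+$, every bundle $S_j$ that gets allocated and is \emph{not} in $\Gamma(X^+)$ satisfies $v_i(S_j)<(1-\varepsilon')\rho_n\xi_i$ by definition of the edge set $E$, i.e. it was cheap for $i$. The plan is to bound $v_i(M\mysetminus S^*)=v_i(M\mysetminus S)+v_i(S\mysetminus S^*)$: the first term is controlled by the incoming invariant \eqref{recursion-balance}, and the second term, the value to $i$ of the newly allocated bundles, must be shown to add at most $|X\mysetminus X^+|\cdot\rho_n\mms_i(n,M)$ worth, so that the total stays below $(n-|X^+|)\rho_n\mms_i(n,M)$. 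Since $n-|K|$ agents were already removed and $|X\mysetminus X^+|=|K|-|X^+|$ more are removed now, this is exactly the bookkeeping $n-|X^+|=(n-|K|)+(|K|-|X^+|)$ that needs to close.

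\textbf{Main obstacle.}
The hard part will be precisely this value estimate in (iii): bounding the total value $v_i(S\mysetminus S^*)$ that agent $i\in X^+$ assigns to the bundles handed out in the current round. Only those bundles in $\Gamma(X^+)$ can be expensive for $i$, and the deficiency inequality $|\Gamma(X^+)|<|X^+|$ limits their \emph{number}, but each such bundle could individually be worth up to $\mms_i(|K|,S)$ to $i$; the cheap bundles (outside $\Gamma(X^+)$) each contribute less than $(1-\varepsilon')\rho_n\xi_i$. The delicate point is that a generic bundle $S_j$ of the approximate partition can have value much larger than $\mms_i$ for a \emph{different} agent $i$, so one must carefully separate, among the $|K|-|X^+|$ allocated bundles, the few that lie in $\Gamma(X^+)$ (expensive, but few, by the deficiency count) from the many that are cheap, and then invoke the exact value of $\rho_n$ from the Density Balance Lemma to make the two contributions sum correctly. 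I expect that it is exactly the optimality of $\rho_n=\frac{2\lfloor n\rfloor_{odd}}{3\lfloor n\rfloor_{odd}-1}$ — i.e. the tight trade-off in the definition of the density balance parameter — that is engineered so this sum closes, and getting that arithmetic to align with the strict-versus-weak inequalities is where the real care is needed.
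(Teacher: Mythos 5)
Your construction of $X^+$ is where the proposal breaks. You define $X^+$ concretely as the union of \emph{all} Hall-violating subsets of $X\setminus\{1\}$, but a union of Hall violators need not itself be a Hall violator — and worse, it can have strictly more neighbors than vertices, in which case property \textit{(ii)} fails outright. Concretely, take $|K|=7$ with $X=\{1,u,v,t,x_1,x_2,x_3\}$ and $Y=\{w,a_1,b_1,a_2,b_2,a_3,b_3\}$, where vertex $1$ has full degree, $\Gamma(u)=\Gamma(v)=\Gamma(t)=\{w\}$, and $\Gamma(x_i)=\{a_i,b_i\}$. Each of $\{u,v,t\}$ (deficiency $2$) and $\{u,v,t,x_i\}$ (deficiency $1$) is a Hall violator, so your $X^+=\{u,v,t,x_1,x_2,x_3\}$; but then $\Gamma(X^+)=Y$, so $Y\setminus\Gamma(X^+)=\emptyset$ while $X\setminus X^+=\{1\}$, and no matching as in \textit{(ii)} can exist. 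The ``standard closure argument'' you invoke (``if $W\subseteq X\setminus X^+$ violated Hall in the reduced graph, then $X^+\cup W$ would be a larger violator'') silently needs $|X^+|\ge|\Gamma(X^+)|$, which is exactly what fails here: with $W=\{1\}$ one only gets that the deficiency of $X^+\cup W$ is at least $\mathrm{def}(X^+)+1=0$, which is no contradiction. What is true, and what the paper exploits, is that the deficiency must be \emph{maximum}, not merely positive: if $X^+$ maximizes $|Z|-|\Gamma(Z)|$ over $Z \subseteq X$, then a Hall violation by some $W\subseteq X\setminus X^+$ inside the reduced graph would force $\mathrm{def}(X^+\cup W)>\mathrm{def}(X^+)$, a genuine contradiction. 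The paper gets such a set — together with properties \textit{(i)} and \textit{(ii)} in one stroke — by computing a maximum matching $B$ and taking $X^+$ to be the unmatched $X$-vertices plus all $X$-vertices reachable from them by alternating (directed) paths: the restriction of $B$ then already matches $X\setminus X^+$ into $Y\setminus\Gamma(X^+)$, $|X^+|>|\Gamma(X^+)|$ holds by maximality of $B$, and $1\notin X^+$ because vertex $1$ has full degree. Any correct repair of your definition essentially lands on this construction.

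Separately, the ``main obstacle'' you anticipate for property \textit{(iii)} is not there. By \textit{(ii)} the allocated bundles form a subset of $Y\setminus\Gamma(X^+)$, so \emph{no} allocated bundle lies in $\Gamma(X^+)$: every bundle handed out in the current round satisfies $v_i(S_j)<(1-\varepsilon')\rho_n\xi_i\le\rho_n\mms_i(n,M)$ for every $i\in X^+$. Hence $v_i(M\setminus S^*)\le v_i(M\setminus S)+(|K|-|X^+|)\rho_n\mms_i(n,M)\le(n-|X^+|)\rho_n\mms_i(n,M)$, which is exactly inequality \eqref{recursion-balance} for the recursive call; there is no splitting into ``few expensive'' and ``many cheap'' bundles, and the exact value of $\rho_n$ from the Density Balance Lemma plays no role in this step (the paper simply cites this counting as Lemma~\ref{lem:balance}). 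A last small point: agent $1$'s full degree follows from \eqref{recursion-balance} together with the \emph{definition} of $\rho_n$, which give $\mms_1(|K|,S)\ge\rho_n\mms_1(n,M)\ge\rho_n\xi_1$; your justification ``since $\rho_n>2/3$ and $\xi_1\le\mms_1$'' does not by itself yield this.
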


Before we prove Lemma \ref{lem:x+}, we elaborate on the main differences between our setup and the approach of~\citet{PW14}:\medskip

\noindent\textit{\textbf{Choice of $X^+$.}} In~\citet{PW14}, $X^+$ is defined as
${\arg \max}_{Z\subseteq K\mysetminus \{1\}}\{|Z|\,|\, |Z|\ge |\Gamma(Z)|\}$. Clearly, when $n$ is constant, so is $|K|$, and 
thus the computation of $X^+$ is trivial. However, it is not clear how to efficiently find such a set in general, when $n$ is not constant. 
%That is, in 
%\citep{PW14} the three properties given in Lemma \ref{lem:x+} hold for constant $n$, but \textit{(i)} may not hold in general. 
We propose a definition of $X^+$, which is efficiently computable and has the desired properties. 
In short, our $X^+$ is any 
appropriately selected counterexample to Hall's Theorem for the graph $G$ constructed in line \ref{line:G}. \medskip
%Actually, we only prove properties 
%\textit{i.} and \textit{ii.} in the proof of Proposition \ref{prop:x+}, since the proof of \textit{iii.} is almost identical 
%with the corresponding proof in \citep{PW14} and we omit it. This is stated in Lemma \ref{lem:balance}. .

%\vspace{3 pt}
\noindent\textit{\textbf{Choice of $\varepsilon$.}} 
The algorithm works for any $\varepsilon > 0$, but~\citet{PW14} choose an $\varepsilon$ that depends on $n$, and it is such that
$(1-\varepsilon)\rho_n \ge \frac{2}{3}$. This is possible since for any $n,\ \rho_n \ge \frac{2}{3}\big(1+\frac{1}{3n-1}\big)$. 
However, in this case, the running time of Woeginger's PTAS (line \ref{line:PTAS}) is not polynomial in $n$. Here, we consider any fixed $\varepsilon$, independent of $n$, hence the approximation ratio of $\frac{2}{3}-\varepsilon$.\smallskip

The formal definition of $X^+$ is given within the proof of %and establish property \textit{(i)} of
Lemma~\ref{lem:x+} that follows. %For a pictorial representation of how $X^+$ is derived, see also Figure \ref{fig:X+}.% in Appendix~\ref{app-sec:2/3}. 
%We defer the proof for properties 
%\textit{(ii)} and \textit{(iii)} to Appendix \ref{app-sec:2/3}. 
%\noindent {\bf Proof of property \textit{(ii)} of Lemma \ref{lem:x+}}

\begin{proof}[Proof of Lemma \ref{lem:x+}]
We will show that either $X^+=\emptyset$ (in the case where $G$ has a perfect matching), or some set 
$X^+$ with 
$X^+ \in \big\{Z\subseteq X \,:\, |Z|>|\Gamma(Z)| \wedge \exists ~\allowbreak\text{matching of size } 
|X\mysetminus Z| \text{ in } G\mysetminus \{Z\cup\Gamma(Z)\} \big\}$ has the desired properties. 
%\[X^+ \in \left\{Z\subseteq X \,\big|\, \substack{|Z|>|\Gamma(Z)| \text{ and } \exists \text{ perfect matching } \\ \text{between } X\mysetminus Z \text{ and a subset of } Y\mysetminus \Gamma(Z)} \right\}\]
Moreover, we propose  a way to find such a set efficiently. We first find a maximum matching $B$ of $G$.
% in polynomial time $O(|K||E|)$. 
If $|B|=|K|$, then we are done, since for $X^+=\emptyset$, properties \textit{(i)} and \textit{(ii)} of Lemma \ref{lem:x+} hold, while we need not check \textit{(iii)}. 
If $|B|<|K|$, then there must
be a subset of $X$ violating the condition of Hall's Theorem.\footnote{The special case of Hall's Theorem \citep{Hall35} used here, states that given a bipartite graph $G = (X\cup Y, E)$, where $X, Y$ are disjoint independent sets with $|X|=|Y|$, there is a perfect matching in $G$ if and only if $|W|\le |\Gamma(W)|$ for every $W \subseteq X$.} 
Let $X_u, X_m$ be the partition of $X$ in unmatched and matched vertices
respectively, according to $B$, with $X_u\neq \emptyset$, $X_m\neq \emptyset$. Similarly, we define $Y_u, Y_m$. 

We now construct a directed graph $G'=(X\cup Y, E')$, where we direct all edges of $G$ from $X$ to $Y$, and on top of that, we add one copy of each edge of the matching but with direction from $Y$ to $X$. In particular, $\forall i \in X, \forall j \in Y$, if $(i, j)\in E$ then $(i, j)\in E'$, and moreover if $(i, j)\in B$ then $(j, i)\in E'$.
%\vspace{-5 pt}\begin{description}
%\setlength\itemsep{-12 pt}
%  \item[\ \ \ \ --] $\forall i \in X, \forall j \in Y$ if $(i, j)\in E$ then $(i, j)\in E'$, \\
%  \item[\ \ \ \ --] $\forall i \in X, \forall j \in Y$ if $(i, j)\in B$ then $(j, i)\in E'$.
%%  \item[\ \ \ \ --] nothing else belongs to $E'$.
%\end{description}\vspace{-5 pt}
We claim that the following set satisfies the desired properties
\begin{equation}
\label{def_x+}
X^+ := X_u \cup \{v\in X: v \mbox{ is reachable from } X_u \mbox{ in } G'\}\, .\nonumber
\end{equation}
Note that $X^+$ is
easy to compute; after finding the maximum matching in $G$, and constructing $G'$, we can run a depth-first search in each connected component of $G'$, starting from the vertices of $X_u$. See also Figure~\ref{fig:X+}, after the proof of Theorem~\ref{thm:2/3} for an illustration. 
%, in time $O(|K|+|E|)$. 

Given the definition of $X^+$, 
we now show property \textit{(ii)}. 
Back to the original graph $G$, we first claim that $|X^+|>|\Gamma(X^+)|$. To prove this, note that 
if $j\in \Gamma(X^+)$ in $G$, then $j\in Y_m$. If not, 
then it is not difficult to see that there is an augmenting path from a vertex in $X_u$ to $j$, which contradicts the maximality of $B$. Indeed, since $j\in \Gamma(X^+)$, let $i$ be a neighbor of $j$ in $X^+$. If $i\in X_u$, then the edge $(i, j)$ would enlarge the matching. Otherwise, $i\in X_m$ and since also $i\in X^+$, there is a path in $G'$ from some vertex of $X_u$ to $i$. But this path by construction of the directed graph $G'$ must consist of an alternation of unmatched and matched edges, hence together with $(i, j)$ we have an augmenting path.

Therefore, $\Gamma(X^+) \subseteq Y_m$, i.e., for any $j\in\Gamma(X^+)$, there is an edge $(i, j)$ in the matching $B$. But then $i$ has to belong to $X^+$ by the construction of $G\rq{}$ (and since $j\in\Gamma(X^+)$).
To sum up: for any $j \in \Gamma(X^+)$, there is exactly one distinct vertex $i$, with $(i, j)\in E$, and $i \in X^+\cap X_m$, i.e., $|X^+\cap X_m|\geq|\Gamma(X^+)|$. In fact, we have equality here, because it is also true that for any $i\in X^+\cap X_m$, there is a distinct vertex $j\in Y_m$ which is trivially reachable from $X^+$. Hence, $|X^+\cap X_m| = |\Gamma(X^+)|$. 
Since $X_u\neq \emptyset$, we have 
$|X^+|=|X_u| + |X^+\cap X_m|\ge 1 + |\Gamma(X^+)|$.
So, $|X^+|>|\Gamma(X^+)|$. 

Also, note that $X^+ \subseteq X\mysetminus \{1\}$, because for any $Z\subseteq X$ that contains
vertex 1 we have $|\Gamma(Z)|=|K|\ge |Z|$. This is due to the fact that for any vertex $j\in Y$, the edge $(1, j)$ is present by the construction, since $v_1(S_j)\ge (1-\varepsilon') \mms_1(k, S) \ge (1-\varepsilon') \rho_n \mms_1(n, M)\ge (1-\varepsilon') \rho_n \xi_1$, for all $1\le j \le |K|$.

We now claim that if we remove $X^+$ and $\Gamma(X^+)$ from $G$, then the restriction of $B$ on the remaining graph, still 
matches all vertices of $X\mysetminus X^+$, establishing property \textit{(ii)}. Indeed, note first that for any $i\in X\mysetminus X^+$, it has to hold that $i\in X_m$, since $X^+$ contains $X_u$. Also, for any edge $(i, j)\in B$ with $i\in X$ and $j\in \Gamma(X^+)$, 
we have $i\in X^+$ by the construction of $X^+$. So, for any $i\in X\mysetminus X^+$, its pair in $B$ belongs to 
$Y\mysetminus \Gamma(X^+)$. Equivalently, $B$ induces a perfect matching between $X\mysetminus X^+$ and a subset of 
$Y\mysetminus \Gamma(X^+)$ (this is the matching $A$ in line \ref{line:allocate} of the algorithm).

What is left to prove is that property \textit{(iii)} also holds for $X^+$. This can be done by the same arguments as in~\citet{PW14}, specifically by the following lemma which can be inferred from their work.

\begin{lemma}[\normalfont{\citep{PW14}, end of Subsection 3.1}]\label{lem:balance}
	Assume that inequality \eqref{recursion-balance} holds for $n$, $M$, $S$, $K$, $V_K$, and let $G$ be the graph defined
	in line \ref{line:G}. For any $Z\subseteq X$, if there exists a perfect matching between 
	$X\mysetminus Z$ and a subset of $Y\mysetminus \Gamma(Z)$, say $Y^*$, and there are no edges between 
	$Z$ and $Y^*$ in $G$, then property \textit{(iii)} holds as well.
\end{lemma}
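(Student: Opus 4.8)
The plan is to verify inequality \eqref{recursion-balance} directly for the tuple $n, M, S^*, Z, V_Z$; that is, to show that for every agent $i\in Z$,
\[ v_i(M\mysetminus S^*) \le (n-|Z|)\rho_n \mms_i(n, M)\,, \]
where $S^*\subseteq S$ is the set of items left unallocated after the matching in line \ref{line:allocate}. The argument is essentially a counting computation resting on additivity of the valuations, so I expect no conceptual obstacle; the only care needed is in identifying which bundles get allocated and in combining the two structural hypotheses of the lemma, namely the size of the matching and the absence of edges between $Z$ and $Y^*$.

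First I would decompose $M\mysetminus S^*$. Since $(S_1,\ldots,S_{|K|})$ partitions $S$ and the matching $A$ allocates exactly the bundles indexed by $Y^*$, the unallocated set is $S^*=\bigcup_{j\notin Y^*} S_j$, and therefore $M\mysetminus S^* = (M\mysetminus S)\,\cup\,\bigcup_{j\in Y^*} S_j$ as a disjoint union. Additivity then gives
\[ v_i(M\mysetminus S^*) = v_i(M\mysetminus S) + \sum_{j\in Y^*} v_i(S_j)\,. \]

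Next I would bound the two terms separately. For the first term, the standing assumption that \eqref{recursion-balance} holds for $n, M, S, K, V_K$ yields $v_i(M\mysetminus S)\le (n-|K|)\rho_n \mms_i(n, M)$ for every $i\in Z\subseteq K$. For the second term, I would invoke the hypothesis that there are no edges between $Z$ and $Y^*$ in $G$: by the definition of $E$ in line \ref{line:G}, $(i,j)\notin E$ means $v_i(S_j) < (1-\varepsilon')\rho_n \xi_i \le \rho_n \mms_i(n, M)$, where the final inequality uses $(1-\varepsilon')\xi_i \le \xi_i \le \mms_i(n, M)$. Since $A$ is a perfect matching between $X\mysetminus Z$ and $Y^*$, we have $|Y^*| = |X\mysetminus Z| = |K|-|Z|$, and summing over the $j\in Y^*$ produces $\sum_{j\in Y^*} v_i(S_j) < (|K|-|Z|)\rho_n \mms_i(n, M)$.

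Finally I would add the two estimates, obtaining $v_i(M\mysetminus S^*) \le (n-|K|)\rho_n \mms_i(n, M) + (|K|-|Z|)\rho_n \mms_i(n, M) = (n-|Z|)\rho_n \mms_i(n, M)$, which is precisely inequality \eqref{recursion-balance} for the recursive call on $X^+=Z$, thereby establishing property \textit{(iii)}. The one point deserving attention is the bookkeeping that $|Y^*|=|K|-|Z|$ and that the no-edge hypothesis applies to exactly these $|Y^*|$ allocated bundles; once those are pinned down the two bounds fit together with no slack in the counting, and the telescoping of $(n-|K|)+(|K|-|Z|)$ to $n-|Z|$ closes the proof.
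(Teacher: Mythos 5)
Your proof is correct. Note that the paper gives no proof of this lemma at all---it is imported from \cite{PW14} (end of Subsection 3.1) with the remark that the same arguments apply---and your argument is exactly the one that citation carries out: write $M\mysetminus S^*$ as the disjoint union of $M\mysetminus S$ and the $|K|-|Z|$ allocated bundles $S_j$, $j\in Y^*$; bound $v_i(M\mysetminus S)$ by $(n-|K|)\rho_n\mms_i(n,M)$ via the standing hypothesis \eqref{recursion-balance}, and each $v_i(S_j)$, $i \in Z$, by $\rho_n\mms_i(n,M)$ via the no-edge condition together with $\xi_i\le\mms_i(n,M)$; then telescope $(n-|K|)+(|K|-|Z|)=n-|Z|$. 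So your write-up correctly supplies the self-contained argument that the paper omits, and the two bookkeeping points you flag ($S^*=S\mysetminus\bigcup_{j\in Y^*}S_j$ and $|Y^*|=|K|-|Z|$) are indeed the only places where care is needed.
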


Clearly, there are no edges between $X^+$ and $Y\mysetminus \Gamma(X^+)$. Hence, Lemma \ref{lem:balance} can be applied to $X^+$, completing the proof. 
%To complete the proof, notice that through the above perfect matching each agent $i$ receives a bundle of goods $S_j$ of total value $v(S_j)\ge (1-\varepsilon')\rho_n \xi_i\ge (1-\varepsilon')^2 \rho_n\mms_i(n, M) > (1-2\varepsilon')\frac{2}{3}\mms_i(n, M)=(\frac{2}{3}-\varepsilon)\mms_i(n, M)$.
%\qed
\end{proof}

Given Lemma \ref{lem:x+}, we can now prove the main result of this section, the correctness of $\textsc{apx-mms}$. 

\begin{proof}[Proof of Theorem \ref{thm:2/3}]
It is clear that the running time of the algorithm is polynomial. Its correctness is based on the correctness of $\textsc{rec-mms}$. The latter can be proven with strong 
induction on $|K|$, the number of still active agents that $\textsc{rec-mms}$ receives as input, under the assumption that \eqref{recursion-balance} holds before each new call of $\textsc{rec-mms}$ (which we have established by Lemma \ref{lem:x+}). For $|K|=1$, assuming that inequality \eqref{recursion-balance} holds, we have for agent $1$ of $K$:
%\[ v_1(S) =  v_1(M) - v_1(M\mysetminus S)  \ge n \mms_1(n, M) - (n-1)\rho_n \mms_1(n, M) \ge \mms_1(n, M) \ge \left( \frac{2}{3}-\varepsilon\right) \mms_1(n, M)\,. \]
\begin{IEEEeqnarray*}{rCl}
v_1(S) & =&  v_1(M) - v_1(M\mysetminus S) \ge  n \mms_1(n, M) - (n-1)\rho_n \mms_1(n, M)  \\
&\ge& \mms_1(n, M) \ge \left( \frac{2}{3}-\varepsilon \right) \mms_1(n, M).
\end{IEEEeqnarray*}

For the inductive step, Lemma \ref{lem:x+} and the choice of $X^+$ are crucial. 
%This is where 
%is really needed. 
Consider an execution of $\textsc{rec-mms}$ during which some agents will receive a subset of items and the rest will form the set $X^+$ to be handled recursively. 
For all the agents in $X^+$ --if any--
we are guaranteed $\big(\frac{2}{3}-\varepsilon\big)$-approximate shares by property \textit{(iii)} of Lemma \ref{lem:x+} and by the inductive hypothesis.
On the other hand, for each agent $i$ that receives a subset $S_j$ of items in line \ref{line:allocate}, we have 
\[v_i(S_j)\ge (1-\varepsilon')\rho_n \xi_i\ge (1-\varepsilon')^2 \rho_n\mms_i(n, M) > (1-2\varepsilon')\frac{2}{3}\mms_i(n, M)=\left( \frac{2}{3}-\varepsilon\right)\mms_i(n, M)\,,\]
where the first inequality holds because $(i, j)\in E(G)$. 
\end{proof}

In Figure~\ref{fig:X+}, we give a simple snapshot to illustrate a recursive call of $\textsc{rec-mms}$. 
In particular, in Subfigure~\ref{subfig:G}, we see a bipartite graph $G$ that could be the current configuration for $\textsc{rec-mms}$, 
along with a maximum matching.
In Subfigure~\ref{subfig:G\rq{}}, we see the construction of $G\rq{}$, as described in Lemma~\ref{lem:x+}, and the set $X^+$.
The bold (black) edges in $G'$ signify that both directions are present. The set $X^+$ consists then of $X_u$ and all other vertices of $X$ reachable from $X_u$.
Finally, Subfigure~\ref{subfig:G\rq{}} also shows the set of agents that are satisfied in the current call along with the corresponding perfect matching, as claimed in Lemma~\ref{lem:x+}.

\begin{figure}[h]
\centering
\subfigure[The graph $G$ defined in line \ref{line:G} of Algorithm \ref{fig:alg-pw_rec} shown with a maximum matching (blue edges). Agent 1 is the top vertex of $X$.]{
\mbox{\includegraphics[scale = 1]{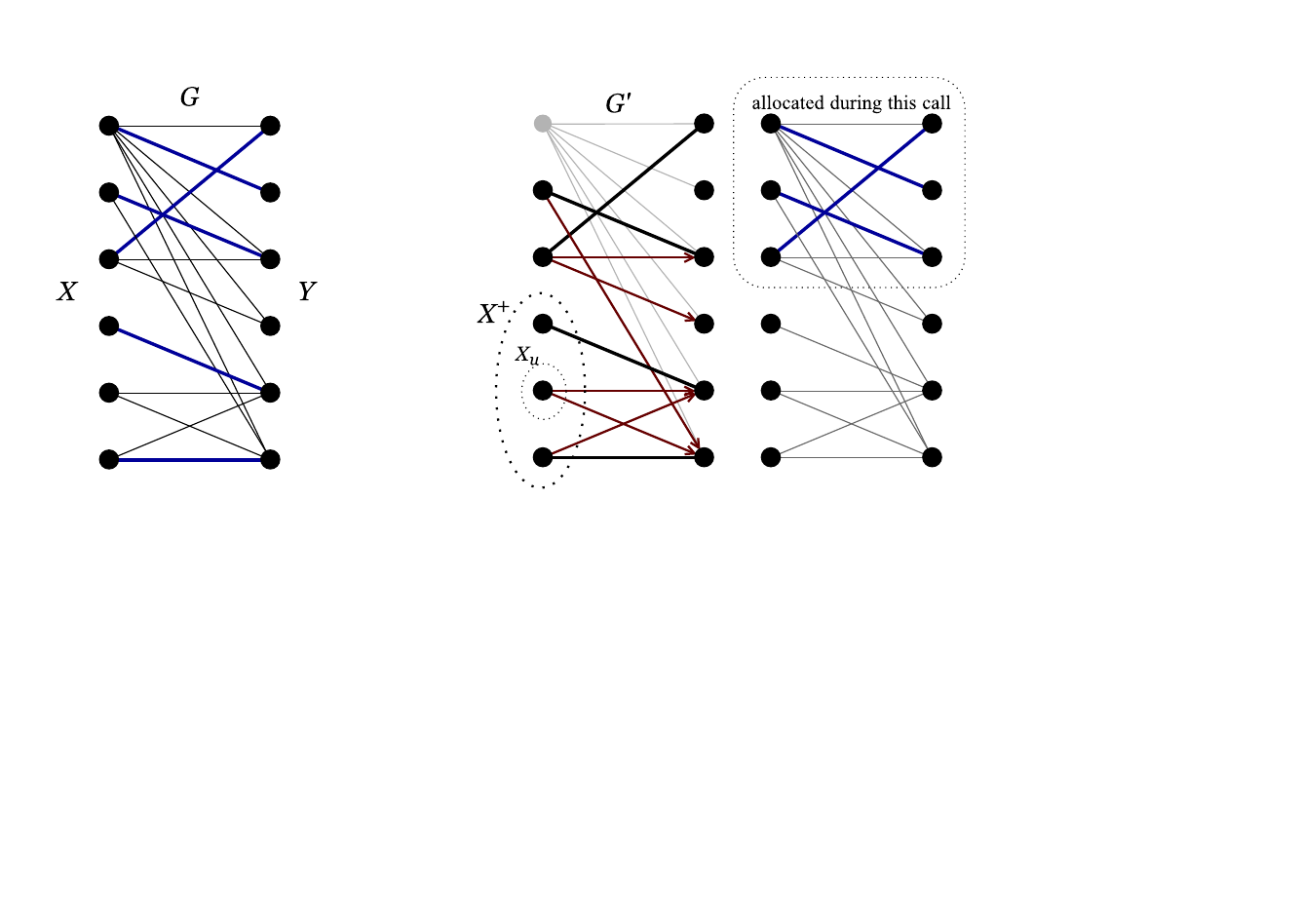}}
\label{subfig:G}
}
\hspace{.05in}
\subfigure[The graph $G'$ defined in the proof of Lemma \ref{lem:x+}, where for clarity, agent 1 and her edges are grayed out. The black edges signify that both directions are present, i.e., they correspond to pairs of anti-parallel edges. On the right we show the actual allocation resulting from $G$.]{
\mbox{\includegraphics[scale = 1]{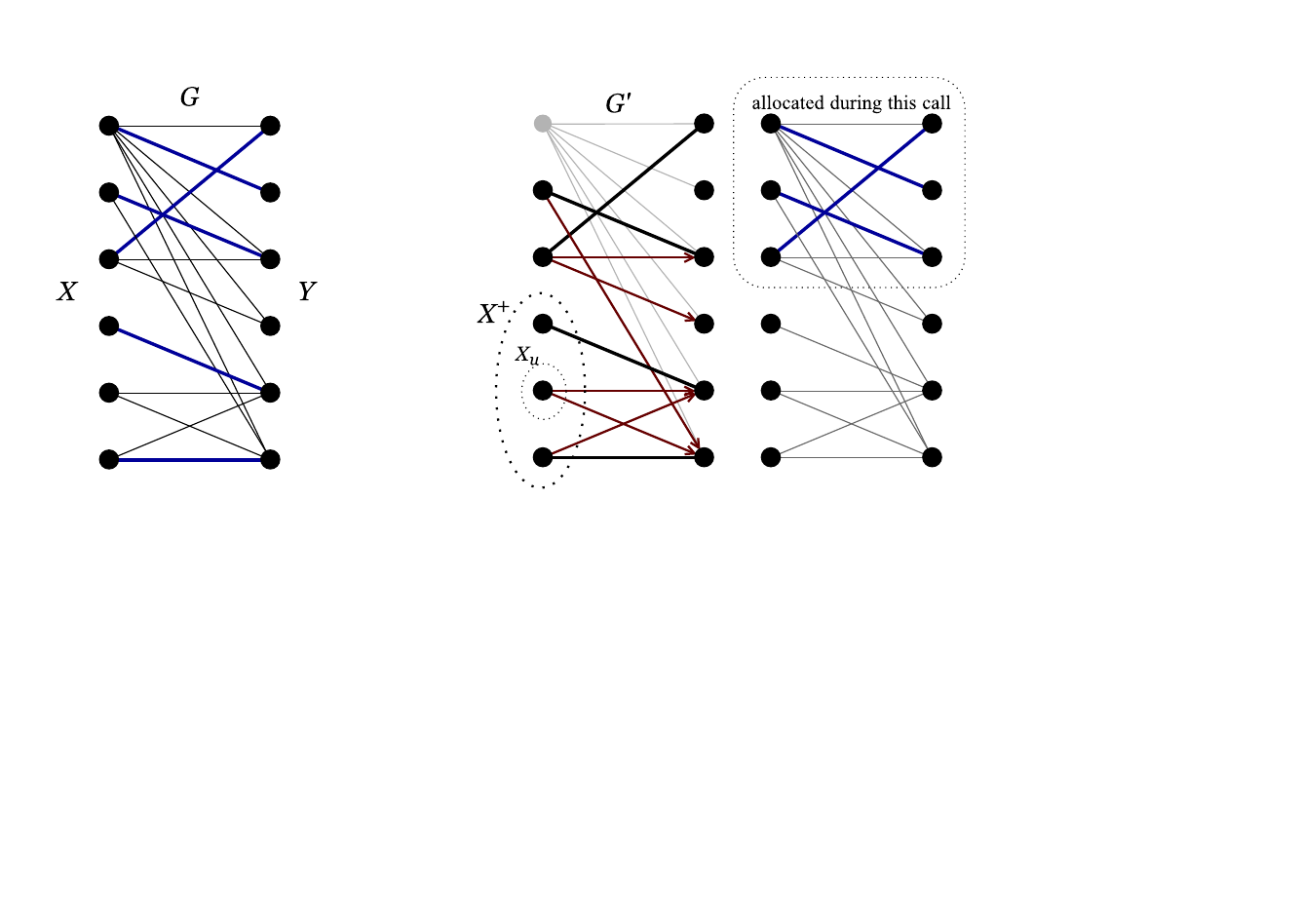}}
\label{subfig:G\rq{}}
}
\caption{Ilustration of $G$, $G'$ and $X^+$.}\label{fig:X+}
\end{figure}

%{\bf A couple of sentences on tightness of the algorithm and on whether a PTAS is ruled out}. 
We note that the analysis of the algorithm is tight, given the analysis on $\rho_n$ (see Section 3.3 of~\citet{PW14}). 
Improving further on the approximation ratio of $2/3$ seems to require drastically new ideas and it is a challenging open problem. We stress that even a PTAS is not currently ruled out by the lower bound constructions~\citep{KPW16,PW14}. 
Related to this, in the next section we consider two special cases in which we can obtain better positive results.

%%%%%%%%%%%%%%%%%%%%%%%%%%%%%%%%%%%%%%%%%%%
%%%%%%%%%%%%%%%%%%%%%%%%%%%%%%%%%%%%%%%%%%%
\section{Two Special Cases}
\label{app-sec:special}
%%%%%%%%%%%%%%%%%%%%%%%%%%%%%%%%%%%%%%%%%%%
%%%%%%%%%%%%%%%%%%%%%%%%%%%%%%%%%%%%%%%%%%%

In this section, we consider two interesting special cases, where we  have improved approximations.
The first is the case of $n=3$ agents, where we obtain a $7/8$-approximation, improving on the $3/4$-approximation of~\citet{PW14}. 
The second is the case where all values for the goods belong to $\{0, 1, 2\}$. This is an extension of the $\{0, 1\}$ setting discussed in~\citet{BL16}, and we show how to get an exact allocation without any approximation loss.  

%%%%%%%%%%%%%%%%%%%%%%%%%%%%%%%%%%%%%%%%%%%
\subsection{The Case of $n=3$ Agents}
\label{app-sec:n=3}
%%%%%%%%%%%%%%%%%%%%%%%%%%%%%%%%%%%%%%%%%%%

%Here we focus on the intriguing case of $3$ agents.
For $n=2$, it is pointed out in~\citet{BL16} that maximin share allocations exist via an analog of the cut and choose protocol.
Using the PTAS of~\citet{Woeginger97}, we can then have a ($1-\varepsilon$)-approximation in polynomial time. 
In contrast, as soon as we move to $n=3$, things become more interesting. It is proven that with $3$ agents there exist instances where no maximin share allocation exists~\citep{PW14}. The best known approximation guarantee is $\frac{3}{4}$ by 
observing that the quantity $\rho_n$, defined in Section~\ref{sec:2/3}, satisfies $\rho_3\geq \frac{3}{4}$. 

We provide a different algorithm, improving the approximation to $\frac{7}{8}-\varepsilon$.
To do this, we combine ideas from both algorithms presented so far in Sections~\ref{sec:1/2} and~\ref{sec:2/3}. 
The main result of this subsection is as follows:

\begin{theorem}\label{thm:7/8}
	Let $N=\{1, 2, 3\}$ be a set of three agents with additive valuations, and let $M$ be a set of goods. 
	For any constant $\varepsilon>0$, Algorithm \ref{fig:alg-n=3}
	produces in polynomial time  an allocation $(S_1,S_2,S_3)$,
	such that 
	\[ v_i(S_i) \geq \left(\frac{7}{8}-\varepsilon\right) \mms_i(3, M)\,,\ \forall i\in N\,.\]
\end{theorem}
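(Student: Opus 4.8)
The plan is to combine the single-good extraction idea of Section~\ref{sec:1/2} with the bipartite-matching machinery of Section~\ref{sec:2/3}, specialized to three agents. First I would normalize: using Woeginger's PTAS (Theorem~\ref{thm:woeginger}) compute values $\xi_i$ with $(1-\varepsilon')\,\mms_i(3,M)\le \xi_i\le \mms_i(3,M)$ for a suitable $\varepsilon'=\Theta(\varepsilon)$, and aim to hand every agent a bundle worth at least $\tfrac{7}{8}\xi_i$; since $\tfrac78\xi_i\ge(\tfrac78-\varepsilon)\,\mms_i(3,M)$, this already yields the claimed ratio. Throughout I would use Claim~\ref{cl:upper_bound_1} in the form $v_i(M)\ge 3\,\mms_i(3,M)$ and the monotonicity property (Lemma~\ref{lem:monotonicity}) to justify dropping agents.

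The first phase handles \emph{large} goods. If some still-active agent $i$ owns a single good $j$ with $v_{ij}\ge \tfrac78\xi_i$, I would allocate $j$ to $i$ and recurse on the two remaining agents and the remaining goods. For two agents a maximin allocation always exists (the cut-and-choose analogue noted at the start of this section), so the PTAS delivers a $(1-\varepsilon)$-approximation there; by Lemma~\ref{lem:monotonicity} each such agent's two-agent maximin share dominates her original three-agent share, so she receives at least $(1-\varepsilon)\,\mms_i(3,M)\ge\tfrac78\,\mms_i(3,M)$, while $i$ keeps $j$ worth $\ge\tfrac78\xi_i$. This disposes of every instance containing a large good.

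In the remaining \emph{core} case every good is small, i.e.\ $v_{ij}<\tfrac78\xi_i$ for all active $i$. Here I would compute, via the PTAS, a $3$-partition $(S_1,S_2,S_3)$ near-balanced for agent~$1$, and form the graph of Section~\ref{sec:2/3} with edges $v_i(S_j)\ge \tfrac78\xi_i$. Agent~$1$ is adjacent to all three bundles, and since $\sum_j v_i(S_j)=v_i(M)\ge 3\,\mms_i(3,M)$ each of agents~$2,3$ is adjacent to at least one bundle. If the graph has a perfect matching I allocate along it and stop. By Hall's theorem the only obstruction is that agents~$2$ and~$3$ are adjacent to a single common bundle, say $S_1$; then $v_i(S_1)>\bigl(3-2\cdot\tfrac78\bigr)\mms_i(3,M)=\tfrac54\,\mms_i(3,M)$ for $i\in\{2,3\}$, while $v_i(S_2),v_i(S_3)<\tfrac78\,\mms_i(3,M)$.

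To resolve this contested case I would give $S_2$ (worth $\ge\tfrac78\,\mms_1(3,M)$) to agent~$1$ and split the superset $T=S_1\cup S_3$ between agents~$2$ and~$3$; note $v_i(T)=v_i(M)-v_i(S_2)>\bigl(3-\tfrac78\bigr)\mms_i(3,M)$ for $i\in\{2,3\}$, with every good in $T$ small. In any two-agent split of $T$ the chooser already exceeds $\mms_i(3,M)$, so the whole difficulty concentrates on guaranteeing the \emph{cutter} at least $\tfrac78\,\mms$. I expect this to be the main obstacle and the place where the improvement over the $3/4$ of~\cite{PW14} is earned: the generic near-balanced cut only yields roughly $\tfrac58$, and instances with a few ``medium'' goods (each just below $\tfrac78\,\mms$) show that a single balanced cut of $T$ cannot always reach $\tfrac78$. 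The resolution is to look more deeply into the admissible matchings --- considering small (e.g.\ two-element) bundles and a finer case split on the number and sizes of the medium goods in the contested set, so that the cutter can be handed a bundle of value at least $\tfrac78\,\mms$ while the chooser, already above $\mms$, retains enough. Verifying the routine constants and enumerating the medium-good configurations would then complete the argument.
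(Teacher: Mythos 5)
Your algorithmic skeleton matches the paper's Algorithm~\ref{fig:alg-n=3} almost line for line: the PTAS normalization, the large-good phase resolved by cut-and-choose (lines~\ref{line:1st_if_1st}--\ref{line:cut-n-choose}), the reduction to the contested case where agents $2$ and $3$ approve only the common bundle $S_1$, and the observation that in the final two-way split the \emph{chooser} trivially clears $\mms_i(3,M)$ while the \emph{cutter} is the crux. But your proof stops exactly where the theorem lives: the cutter's $\tfrac78$ guarantee is the paper's Lemma~\ref{lem:7/8}, and you replace it with a conjecture (``enumerating the medium-good configurations would then complete the argument''). Worse, the design you commit to --- hand $S_2$ to agent~$1$ and split only $T=S_1\cup S_3$ --- provably cannot work. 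Writing $\mms_2$ for $\mms_2(3,M)$ and taking $v_3=v_2$, let $S_1$ contain two goods worth $\tfrac34\mms_2$ each, $S_3$ one good worth $\tfrac34\mms_2$, and $S_2$ many tiny goods of total value $\tfrac34\mms_2$ (agent~$1$ can value the two $S_1$-goods at $\tfrac12$, the $S_3$-good at $1$ and the tiny goods at $1$ in total, making $(S_1,S_2,S_3)$ an exact maximin partition for her). Then $\mms_2(3,M)=\mms_2$ (pair each large good with a third of the tiny ones), every good is small, agents $2,3$ approve only $S_1$, yet $\mms_2(2,S_1\cup S_3)=\tfrac34\mms_2$, since one side of any split of three goods holds at most one of them. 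Since \emph{no} two-way split of $T$ gives its cutter more than $\tfrac34\mms_2$, no refinement of the cut of $T$ (two-element bundles, case analysis on medium goods, etc.) can reach $\tfrac78-\varepsilon$; you would land exactly back at the old $\tfrac34$ bound.

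The missing idea is that the algorithm must try \emph{both} recombinations, $A_1\cup A_2$ and $A_1\cup A_3$, and keep the one with the larger $2$-maximin value (line~\ref{line:discards} of Algorithm~\ref{fig:alg-n=3}); in the instance above, splitting $S_1\cup S_2$ gives the cutter $\tfrac98\mms_2$. The substance of the theorem is then the dichotomy that at least one of $\mms_2(2,A_1\cup A_2)$, $\mms_2(2,A_1\cup A_3)$ is at least $\tfrac78\mms_2$, and the paper proves it not by enumerating configurations but by comparing against agent~$2$'s \emph{exact} $3$-maximin partition $(A_1',A_2',A_3')$: setting $F_i=A_i'\cap A_3$ with $v_2(F_1)\le v_2(F_2)\le v_2(F_3)$, either $v_2(F_1)\le\tfrac18\mms_2$, in which case $\bigl(A_1'\mysetminus A_3,\,(A_2'\cup A_3')\mysetminus A_3\bigr)$ witnesses $\mms_2(2,A_1\cup A_2)\ge\tfrac78\mms_2$; or $v_2(F_1)>\tfrac18\mms_2$, in which case recombining the $F_i$ with an exact $2$-maximin split $(C_1,C_2)$ of $A_1$ (and using that all goods are small) produces a split of $A_1\cup A_3$ with both sides at least $\tfrac78\mms_2$ (Claims~\ref{cl:sums7/8} and~\ref{cl:sums7/8_second}). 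This step --- arguing about how the agent's own optimal partition distributes the bundle surrendered to agent~$1$, rather than relying only on aggregate value --- is precisely what pushes the ratio from $\tfrac34$ to $\tfrac78$, and it is absent from your proposal.
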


%The statement of the algorithm (Algorithm \ref{fig:alg-n=3}) and 
The algorithm is shown below. Before we prove
Theorem \ref{thm:7/8}, we provide here a brief outline of how the algorithm works.\medskip

\noindent {\bf Algorithm Outline:} 
%Then, we give the algorithm and prove its correctness in Theorem \ref{thm:7/8} and Lemma \ref{lem:7/8}.
First, approximate values for the $\mms_i$s are calculated as before. Then, if there are items with large value to some agent, in analogy to 
Algorithm~\ref{fig:alg-1/2}, we first allocate
one of those reducing this way the problem to the simple case of $n=2$. 
%We saw the idea of first allocating items of large value in Algorithm
%\ref{fig:alg-2}. 
If there are no items of large value, then  the first agent partitions the items as in Algorithm \ref{fig:alg-pw_rec}. In the case where this partition does not satisfy all three agents, then the second agent repartitions two of the bundles of the first agent. Actually, she tries two different such repartitions, and we show that at least one of them works out.
The definition of a bipartite preference graph and a corresponding matching (as in Algorithm \ref{fig:alg-pw_rec}) is never mentioned explicitly here.
However, the main idea (and the difference with Algorithm \ref{fig:alg-pw_rec}) is that if there are several  ways to pick a perfect matching between $X\mysetminus X^+$ and a subset of 
$Y\mysetminus \Gamma(X^+)$, then we try them all and choose the best one. Of course, since $n=3$, if there is no perfect matching
in the preference graph, then $X\mysetminus X^+$ is going to be just a single vertex, and we only have to examine two possible perfect 
matchings between $X\mysetminus X^+$ and a subset of $Y\mysetminus \Gamma(X^+)$. %\vspace{3pt}

\begin{algorithm}
	\DontPrintSemicolon 
	{\small
		$\varepsilon'=\frac{8}{7}\varepsilon$ \;
		Compute a $(1-\varepsilon)$-approximation $\xi_i$ of $\mms_i(3, M)$ for $i\in \{1, 2, 3\}$. \;
		\eIf {$\exists i\in \{1, 2, 3\}, j\in M$ such that $v_{ij}\ge \frac{7}{8} \xi_i$}   {  \label{line:1st_if_1st} 
			Give item $j$ to agent $i$ and divide $M\mysetminus \{j\}$ among the other two agents in a ``cut-and-choose" fashion.\;	\label{line:cut-n-choose}}
		{
			Agent $1$ computes a $(1-\varepsilon)$-approximate maximin partition of $M$ into three sets, say $(A_1, A_2, A_3)$.   \; \label{line:1st_else_1st}
			\eIf {$\exists j_2, j_3\in \{1, 2, 3\}$ such that $j_2\neq j_3$, $v_2(A_{j_2})\ge \frac{7}{8} \xi_2$ 
				and $v_3(A_{j_3})\ge \frac{7}{8} \xi_3$}    { \label{line:2nd_if_1st}
				Give set $A_{j_2}$ to agent $2$, set $A_{j_3}$ to agent $3$, and the last set to agent $1$.  \;  \label{line:2nd_if_last} }
			{
				There are two sets that have value less than $\frac{7}{8}\xi_2$ w.r.t.~agent 2, say for simplicity $A_2$ and $A_3$. \; \label{line:main_1st}
				Agent $2$ computes $(1-\varepsilon')$-approximate 2-maximin partitions of $A_1\cup A_2$ and $A_1\cup A_3$, say $(B_1, B_2)$ and $(B_1', B_2')$ respectively, and discards the partition with the smallest maximin value. Let $(D_1, D_2)$ be the partition she keeps.\;  \label{line:discards}
				Agent $3$ takes the set she prefers from $(D_1, D_2)$; agent $2$ gets the other, and agent $1$ gets $M\mysetminus(D_1\cup D_2)$.  \; \label{line:main_last}
			}}
		}
		\caption{$\textsc{apx-3-mms}(\varepsilon, M, v_1, v_2, v_3)$}\label{fig:alg-n=3}
	\end{algorithm}

	\begin{proof}[Proof of Theorem \ref{thm:7/8}]
		First, note that for constant $\varepsilon$ the algorithm runs in time polynomial in $|M|$. 
		%Actually, the only steps 
		%in the algorithm that are computationally non trivial, are the approximate maximin partitions, and they can be computed 
		%efficiently for constant $\varepsilon$. 
		Next, we prove the correctness of the algorithm.
		
		If the output is computed in lines \ref{line:1st_if_1st}-\ref{line:cut-n-choose} then for agent $i$, as defined in line~\ref{line:1st_if_1st}, the value 
		she receives is at least
		$\frac{7}{8} \xi_i \ge \frac{7}{8}(1-\varepsilon)\mms_i(3, M) > \big(\frac{7}{8}-\varepsilon\big)\mms_i(3, M)$.  
		The remaining two agents $i_1, i_2$ essentially apply an approximate version of a cut and choose protocol. Agent $i_1$ computes a 
		$(1-\varepsilon)$-approximate 2-maximin partition of $M\mysetminus \{j\}$, say $(C_1, C_2)$, then agent $i_2$ takes the set she prefers among $C_1$ and $C_2$, and agent $i_1$ gets the other.
		By the monotonicity lemma (Lemma \ref{lem:monotonicity}), we know that $\mms_{i_1}(2, M\mysetminus\{j\}) \ge \mms_{i_1}(3, M)$, and thus 
		%$\min\{v_{i_1}(A_1), v_{i_1}(A_2)\} \ge (1-\varepsilon)\mms_{i_1}(2, M\mysetminus\{j\})\ge (1-\varepsilon)\mms_{i_1}(3, M)$. 
		no matter which set is left for agent ${i_1}$, she is guaranteed a total value of at least 
		$(1-\varepsilon)\mms_{i_1}(3, M) > \big(\frac{7}{8}-\varepsilon\big)\mms_{i_1}(3, M)$. 
		Similarly, we have $\mms_{i_2}(2, M\mysetminus\{j\}) \ge \mms_{i_2}(3, M)$, and therefore $v_{i_2}(M\mysetminus\{j\})\ge 2\mms_{i_2}(3, M)$.
		%By the pigeonhole principle, at least one of $A_1, A_2$ has value at least $\mms_{i_2}(3, M)$ for agent $i_2$. 
		Since $i_2$ chooses before $i_1$, she is guaranteed a total value that is at least $\mms_{i_2}(3, M) > \big(\frac{7}{8}-\varepsilon\big)\mms_{i_2}(3, M)$.
		
		If the output is computed in lines \ref{line:1st_else_1st}-\ref{line:2nd_if_last} then clearly all agents receive a ($7/8-\varepsilon$)-approximation, 
		% $1$ receives value that is at least
		%$\xi_i \ge (1-\varepsilon)\mms_1(3, M) > (\frac{7}{8}-\varepsilon)\mms_1(3, M)$, no 
		since for agent $1$ it does not matter which of the $A_i$s she gets. 
		%Also, by the choice of
		%$j_2, j_3$, agents 2 and 3 receive total value that is at least $\frac{7}{8} \xi_i  > (\frac{7}{8}-\varepsilon)\mms_i(3, M)$, for $i=2$
		%and $i=3$ respectively.
		
		The most challenging case is when the output is computed in lines \ref{line:main_1st}-\ref{line:main_last} (starting with the partition from line \ref{line:1st_else_1st}). Then, as before, agent $1$ 
		receives a value that is at least a ($7/8-\varepsilon$)-approximation no matter which of the three sets she gets. For agents 2 and 3, 
		however, the analysis is not straightforward. We need the following lemma. 
		
		\begin{lemma}\label{lem:7/8}
			Let $N, M, \varepsilon$ be as above, such that for all $j\in M$ we have $v_{2j}<\frac{7}{8} \xi_2$.
			Consider any partition of $M$ into 3 sets $A_1, A_2, A_3$ and assume that there are no $j_2, j_3\in \{1, 2, 3\}$ such that $j_2\neq j_3$,  
			$v_2(A_{j_2})\ge \frac{7}{8} \xi_2$ and $v_3(A_{j_3})\ge \frac{7}{8} \xi_3$. Then lines \ref{line:main_1st}-\ref{line:main_last} of 
			Algorithm \ref{fig:alg-n=3} produce an allocation $(S_2,S_3)$ for agents 2 and 3, 
			such that for $i\in \{2, 3\}$:
			$ v_i(S_i) \geq \left(\frac{7}{8}-\varepsilon\right) \mms_i(3, M).$
			Moreover, if agent 1 is given set $A_k$, then $S_2\cup S_3 = \bigcup_{\ell\in N\mysetminus k} A_{\ell}$.
		\end{lemma}
		
		Clearly, Lemma \ref{lem:7/8} completes the proof.
		%\qed
	\end{proof}
	
	Before stating the proof of Lemma \ref{lem:7/8}, we should mention how it is possible to go beyond the previously known $\frac{3}{4}$-approximation. 
	As noted above, $\rho_n$ is by definition the best guarantee we can get, based only on the fact that the complement of the set left to be shared is not too large.
	As a result, the $\frac{7}{8}$ ratio cannot be guaranteed 
	just by the excess value. Instead, in addition to making sure that the remaining items are valuable enough for the remaining agents, we further argue about how a maximin partition would distribute those items.
%	In order to beat $\rho_3$, one has to argue about how a maximin partition would redistribute the items in the aforementioned complement.
%	Here, at some point, we have to argue about how a maximin partition would redistribute the items in the aforementioned complement.
%
%	
%	 but by using the way that an optimal partition would divide specific subsets of items.

There is an alternative interpretation of Algorithm \ref{fig:alg-n=3} in terms of Algorithm \ref{fig:alg-pw}. Whenever only a single agent (i.e., agent 1) is going to become satisfied in the first recursive call, we try all possible maximum matchings of the graph $G$ for the calculation of $X^+$. Then we proceed with the ``best'' such matching. Here, for $n=3$, this means we only have to consider two possibilities for the set agent 1 is going to get matched to; it is either $A_2$ or $A_3$ (subject to the assumptions in Algorithm \ref{fig:alg-n=3}).

	\begin{proof}[Proof of Lemma \ref{lem:7/8}]
		%The proof is mainly based on Lemma~\ref{lem:properties} and Lemma~\ref{lem:A1A3}, which are stated and proved below.
		First, recall that $v_2(M) \ge 3\mms_2(3, M) \ge 3\xi_2$. 
		Like in the description of the algorithm we may assume that agent 1 gets set $A_3$, without loss of generality. 
		Before we move to the analysis we should lay down some facts. Let $(B_1, B_2)$ be agent 2's $(1-\varepsilon')$-approximate maximin 
		partition of $A_1\cup A_2$ computed in line \ref{line:discards}; similarly $(B'_1, B'_2)$ is agent 2's $(1-\varepsilon')$-approximate maximin 
		partition of $A_1\cup A_3$. We may assume that $v_2(B_1)\ge v_2(B_2)$. Also, assume that in line \ref{line:discards} of the algorithm we have  
		$(D_1, D_2)=(B_1, B_2)$, i.e., $\min\{v_2(B_1'), v_2(B_2')\}\le v_2(B_2)$ and $M\mysetminus(D_1\cup D_2)=A_3$. The case where $(D_1, D_2)=(B_1', B_2')$ is symmetric. Our goal is to show that $v_2(B_2) \ge \left( \frac{7}{8}-\varepsilon\right) \mms_2(3, M)$.
		%, and therefore agent 2 will receive a $\frac{7}{8}-\varepsilon$ fraction of her 
		%maximin value from any one of $B_1$ and $B_2$. 
		For simplicity, we write $\mms_2$ instead of $\mms_2(3, M)$.
		
		Note, towards a contradiction, that 
		%\vspace{-20pt}
\begin{IEEEeqnarray*}{rCl}
			v_2(B_2) & < & \left(\frac{7}{8}-\varepsilon\right)\mms_2 \Rightarrow \\
			(1-\varepsilon')\mms_2(2, A_1\cup A_2) & < & \left(\frac{7}{8}-\varepsilon\right)\mms_2\Rightarrow \\
			(1-\varepsilon')\mms_2(2, A_1\cup A_2) & < & \left(\frac{7}{8}-\frac{7}{8}\varepsilon'\right)\mms_2\Rightarrow \\
			\mms_2(2, A_1\cup A_2) & < & \frac{7}{8}\mms_2\,.
		\end{IEEEeqnarray*}
		
		Moreover, this means $\min\{v_2(B_1'), v_2(B_2')\} < \left(\frac{7}{8}-\varepsilon\right)\mms_2$ as well, which leads to
		$\mms_2(2, A_1\cup A_3)  <  \frac{7}{8}\mms_2$.  So, it suffices to show that either
		$\mms_2(2, A_1\cup A_2)$ or $\mms_2(2, A_1\cup A_3)$ is at least $\frac{7}{8}\mms_2$.
		%, in order to guarantee that agent 2 is ``satisfied" by any one of $B_1$ and $B_2$.
		This statement is independent of the $B_i$s and in what follows we consider exact maximin partitions with respect to agent 2.
		Before we proceed, we should 
		make clear that for the case we are analyzing there are indeed exactly two sets in $\{A_1, A_2, A_3\}$ each with value less than $\frac{7}{8}\mms_2$ with 
		respect to agent 2, as claimed in line \ref{line:main_1st} of the algorithm. Indeed, notice that in any partition of $M$ there is at least one set with value 
		at least $\mms_2$ with 
		respect to agent $2$, due to the fact that $v_2(M)\ge 3 \mms_2$ and
		by the definition of a maximin partition. If, however, there were at least 2 sets in $\{A_1, A_2, A_3\}$ with value at least
		$\frac{7}{8} \xi_2$, then we would be at the case handled in steps \ref{line:1st_else_1st}-\ref{line:2nd_if_last}.
		%and 
		%$v_3(A_{j_3})\ge \frac{7}{8} \xi_3$. 
		Hence, there will be exactly two sets each with value less than 
		$\frac{7}{8}\xi_2\le\frac{7}{8}\mms_2$ for agent 2 and as stated in the algorithm we assume these are the sets $A_2, A_3$.
		
		Consider a $3$-maximin share allocation $(A_1', A_2', A_3')$ of $M$ with respect to agent 2. Let $F_i=A_i'\cap A_3$ for $i=1, 2, 3$.
		Without loss of generality, we may assume that $v_2(F_1)\le v_2(F_2)\le v_2(F_3)$. 
		
		If $v_2(F_1)\le \frac{1}{8}\mms_2$, then 
		the partition $(A_1'\mysetminus A_3, (A_2'\cup A_3')\mysetminus A_3)$ is a partition of $A_1\cup A_2$ such that 
		\[ v_2(A_1'\mysetminus A_3)= v_2(A_1')-v_2(F_1)\ge \mms_2 - \frac{1}{8}\mms_2 = \frac{7}{8}\mms_2\]
		and
		\[ v_2((A_2'\cup A_3')\mysetminus A_3) \ge v_2(A_2') + v_2(A_3') - v_2(A_3)\ge 2\mms_2 - \frac{7}{8}\mms_2 = \frac{9}{8}\mms_2\,. \] 
		So, in this case we conclude that $\mms_2(2, A_1\cup A_2)\ge \frac{7}{8}\mms_2$.
		
		On the other hand, if $v_2(F_1) > \frac{1}{8}\mms_2$ we are going to show that $\mms_2(2, A_1\cup A_3)\ge \frac{7}{8}\mms_2$.
		Towards this we consider a $2$-maximin share allocation $(C_1, C_2)$ of $A_1$ with respect to agent 2 and 
		let us assume that $v_2(C_1)\ge v_2(C_2)$. For a rough depiction of the different sets involved in the following arguments, see Figure \ref{fig:sec5-fig1}.

\begin{figure}[h]
\centering
\mbox{\includegraphics[scale = 0.45]{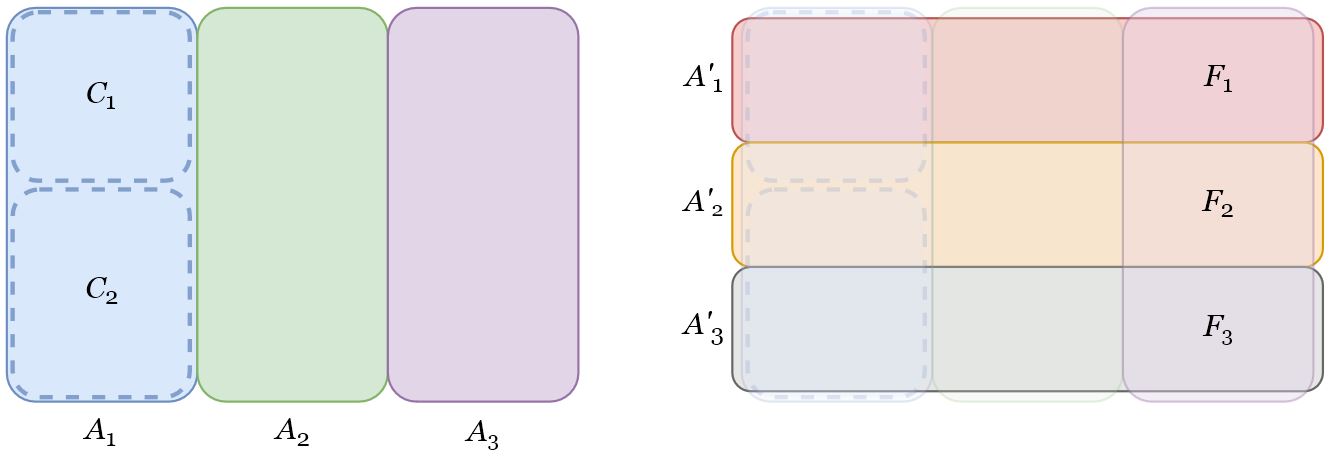}}
\caption{Assuming that the set of items $M$ is represented by a rectangle, here is a depiction of several sets involved in the proof of Lemma \ref{lem:7/8}. Recall that $(A_1, A_2, A_3)$ and $(A_1', A_2', A_3')$ are partitions of $M$, $(C_1, C_2)$ is a partition of $A_1$, and  $F_i=A_i'\cap A_3$ for $i=1, 2, 3$.} \label{fig:sec5-fig1}
\end{figure}

\begin{myclaim}\label{cl:sums7/8}
For  $C_1, C_2, A_3, F_1, F_2, F_3$ as above, we have
%\vspace{-4pt}
\begin{itemize}
\item[(i)] $v_2(A_3)+v_2(C_2) \ge \frac{7}{8}\mms_2$, and
\item[(ii)] $v_2(F_1) + v_2(F_2) + v_2(C_1) > \frac{7}{8}\mms_2$. 
\end{itemize}
\end{myclaim}
		\begin{proof} \renewcommand{\qedsymbol}{{\footnotesize $\boxdot$}}
			Note that 
			\[v_2(C_1)+v_2(C_2)+v_2(A_3) = v_2(M)-v_2(A_2) > 3\mms_2 - \frac{7}{8}\mms_2 = \frac{17}{8}\mms_2\,.\]
			If $v_2(A_3)+v_2(C_2) < \frac{7}{8}\mms_2$ then $v_2(C_1)> \frac{10}{8}\mms_2$.  Moreover, 
			\[v_2(A_3)=v_2(F_1)+v_2(F_2)+v_2(F_3)\ge 3v_2(F_1)>\frac{3}{8}\mms_2\,,\]
			so $v_2(A_3)+v_2(C_2) < \frac{7}{8}\mms_2$ implies that $v_2(C_2)< \frac{4}{8}\mms_2$.
			
			Let $d$ denote the difference $v_2(C_1)-v_2(C_2)$; clearly $d >\frac{6}{8}\mms_2$.
			It is not hard to see that $\min_{j\in C_1}v_{2j}\ge d$. Indeed, suppose there existed some $ j\in C_1$ such that $v_{2j} < d$. Then, by moving $j$ from $C_1$ to $C_2$ we increase the minimum value of the partition, which contradicts the choice of $(C_1, C_2)$. 

Since $v_2(C_1)> \frac{10}{8}\mms_2$ and no item has value more than $\frac{7}{8}\mms_2$ for agent 2, this means that $C_1$ contains at least two items. Thus, $v_2(C_1) \ge \min_{j\in C_1}v_{2j} > \frac{12}{8}\mms_2$.
			
			Now, for any item $g\in \argmin_{j\in C_1}v_{2j}$, the partition $(\{g\}, A_1\mysetminus \{g\})$ is strictly better than $(C_1, C_2)$, since 
			$v_{2g} > \frac{6}{8}\mms_2 > v_2(C_2)$ and $v_2(A_1\mysetminus \{g\}) = v_2(A_1)-v_{2g} \ge v_2(C_1)-v_{2g} > \frac{12}{8}\mms_2-\frac{6}{8}\mms_2 = \frac{6}{8}\mms_2 > v_2(C_2)$.
			Again, this contradicts the choice of $(C_1, C_2)$. 
			Hence, it must be that $v_2(A_3)+v_2(C_2) \ge \frac{7}{8}\mms_2$.\medskip

%			Now, for any item $g\in C_1$, the partition $(\{g\}, A_1\mysetminus \{g\})$ is strictly better than $(C_1, C_2)$. To see this, notice that 
%			$v_{2g}\ge \frac{6}{8}\mms_2 > v_2(C_2)$ and 
%\begin{IEEEeqnarray*}{rCl}
%v_2(A_1\mysetminus \{g\}) & = & v_2(A_1)-v_{2g} > \\
% & > & \frac{1}{8}\mms_2 +\frac{1}{8}\mms_2 +  \frac{1}{2}\big(3\mms_2 - \frac{7}{8}\mms_2 - \frac{7}{8}\mms_2\big) = \frac{7}{8}\mms_2\,.\qed
%\end{IEEEeqnarray*}
%$v_2(A_1\mysetminus \{g\}) = v_2(A_1)-v_{2g}> \frac{17}{8}\mms_2 - \frac{7}{8}\mms_2 = \frac{10}{8}\mms_2$.
%			Again, this contradicts the choice of $(C_1, C_2)$. 
%			Hence, it must be that $v_2(A_3)+v_2(C_2) \ge \frac{7}{8}\mms_2$.\medskip
			
			The proof of {\it (ii)} is simpler. Notice that 
\begin{IEEEeqnarray*}{rCl}
v_2(F_1) + v_2(F_2) + v_2(C_1)& \ge & v_2(F_1) + v_2(F_1) + \frac{1}{2}v_2(A_1) \\
 & > & \frac{1}{8}\mms_2 +\frac{1}{8}\mms_2 +  \frac{1}{2}\big(3\mms_2 - \frac{7}{8}\mms_2 - \frac{7}{8}\mms_2\big) = \frac{7}{8}\mms_2\,.\qedhere
		\end{IEEEeqnarray*}
%			\[ v_2(F_1) + v_2(F_2) + v_2(C_1) \ge v_2(F_1) + v_2(F_1) + \frac{1}{2}v_2(A_1) > \frac{1}{8}\mms_2 +\frac{1}{8}\mms_2 +  \frac{1}{2}\big(3\mms_2 - \frac{7}{8}\mms_2 - \frac{7}{8}\mms_2\big) = \frac{7}{8}\mms_2\,.\]
		\end{proof}
		
		Now, if $v_2(C_1) \ge \frac{7}{8}\mms_2$ then {\it (i)} of Claim \ref{cl:sums7/8} implies that 
		$\min \{v_2(C_1), v_2(A_3\cup C_2)\} \ge \frac{7}{8}\mms_2$. Similarly, if $v_2(F_3) + v_2(C_2) \ge \frac{7}{8}\mms_2$ 
		then {\it (ii)} of Claim \ref{cl:sums7/8} implies that 
		$\min \{v_2(F_1\cup F_2\cup C_1), v_2(F_3\cup C_2)\} \ge \frac{7}{8}\mms_2$. In both cases, we have 
		$\mms_2(2, A_1\cup A_3)\ge \frac{7}{8}\mms_2$. So, it is left to examine the case where both $v_2(C_1)$
		and $v_2(F_3) + v_2(C_2)$ are less than $\frac{7}{8}\mms_2$. 
		
		\begin{myclaim}\label{cl:sums7/8_second}
			Let $C_1, C_2, A_3, F_1, F_2, F_3$ be as above and $\max\{v_2(C_1), v_2(F_3\cup C_2)\} < \frac{7}{8}\mms_2$. Then 
			$\min \{v_2(F_1\cup C_1), v_2(F_2\cup F_3\cup C_2)\} \ge \frac{7}{8}\mms_2$.
		\end{myclaim}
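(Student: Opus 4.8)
The plan is to certify $\mms_2(2,A_1\cup A_3)\ge\tfrac78\mms_2$ (which is all the surrounding argument requires) by exhibiting the explicit $2$-partition $(F_1\cup C_1,\ F_2\cup F_3\cup C_2)$ of $A_1\cup A_3$ and showing that both parts are worth at least $\tfrac78\mms_2$ to agent $2$. This is a genuine partition of $A_1\cup A_3$, since $C_1,C_2$ partition $A_1$ and $F_1,F_2,F_3$ partition $A_3$. Writing $\mms_2$ for $\mms_2(3,M)$ and abbreviating $f_i=v_2(F_i)$, $c_i=v_2(C_i)$ (so $v_2(F_3\cup C_2)=f_3+c_2$ by additivity), I must establish $f_1+c_1\ge\tfrac78\mms_2$ and $f_2+f_3+c_2\ge\tfrac78\mms_2$. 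A direct attack is hopeless: the readily available bounds $f_1>\tfrac18\mms_2$ and $c_1\ge\tfrac12v_2(A_1)>\tfrac58\mms_2$ only give $f_1+c_1>\tfrac34\mms_2$. The idea instead is to bound each part as the \emph{complement} of the other inside the generous global budget $v_2(A_1)+v_2(A_3)=v_2(M)-v_2(A_2)>3\mms_2-\tfrac78\mms_2=\tfrac{17}{8}\mms_2$.

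First I would record two size bounds on the $f_i$ coming purely from the ordering $f_1\le f_2\le f_3$ and $f_1+f_2+f_3=v_2(A_3)<\tfrac78\mms_2$. Since $3f_1\le v_2(A_3)$, the smallest piece satisfies $f_1<\tfrac{7}{24}\mms_2$. And since $2f_2\le f_2+f_3=v_2(A_3)-f_1<\tfrac78\mms_2-\tfrac18\mms_2=\tfrac34\mms_2$ (invoking the case hypothesis $f_1>\tfrac18\mms_2$), we get $f_2<\tfrac38\mms_2$.

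With these recorded, each of the two target inequalities becomes a one-line subtraction from the $\tfrac{17}{8}\mms_2$ budget, each consuming exactly one of the two case hypotheses of the claim. For the second part, using $c_1<\tfrac78\mms_2$ and $f_1<\tfrac{7}{24}\mms_2$,
\[ f_2+f_3+c_2=\bigl(v_2(A_1)+v_2(A_3)\bigr)-c_1-f_1>\tfrac{17}{8}\mms_2-\tfrac78\mms_2-\tfrac{7}{24}\mms_2=\tfrac{23}{24}\mms_2>\tfrac78\mms_2. \]
For the first part, using the other hypothesis $f_3+c_2<\tfrac78\mms_2$ together with $f_2<\tfrac38\mms_2$,
\[ f_1+c_1=\bigl(v_2(A_1)+v_2(A_3)\bigr)-(f_2+f_3+c_2)>\tfrac{17}{8}\mms_2-\bigl(\tfrac38\mms_2+\tfrac78\mms_2\bigr)=\tfrac78\mms_2. \]
Taking the minimum of the two proves the claim.

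The main obstacle here is conceptual rather than computational: one must resist bounding $f_1+c_1$ and $f_2+f_3+c_2$ term by term (which stalls at $\tfrac34\mms_2$) and instead recognize that the two case hypotheses $v_2(C_1)<\tfrac78\mms_2$ and $v_2(F_3\cup C_2)<\tfrac78\mms_2$ are \emph{precisely} the right quantities to subtract off, once the ordering $f_1\le f_2\le f_3$ is used to tame the residual terms $f_1$ and $f_2$. It is worth noting that Claims \textit{(i)} and \textit{(ii)} play no role in this case — they handle the complementary regimes $v_2(C_1)\ge\tfrac78\mms_2$ and $v_2(F_3\cup C_2)\ge\tfrac78\mms_2$ — so the present case is exactly the one carrying enough slack to be closed by this budgeting argument alone.
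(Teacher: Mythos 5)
Your proof is correct and is essentially the paper's own argument: the paper proves both inequalities by contradiction, using the same $\frac{17}{8}\mms_2$ budget, the same hypotheses $v_2(C_1)<\frac78\mms_2$ and $v_2(F_3\cup C_2)<\frac78\mms_2$, and the same facts $f_1>\frac18\mms_2$, $f_1\le f_2\le f_3$, $v_2(A_3)<\frac78\mms_2$ (there, assuming a part is small forces $f_2>\frac38\mms_2$ resp.\ $f_1>\frac38\mms_2$, contradicting $v_2(A_3)<\frac78\mms_2$). You run exactly these inequalities in contrapositive, direct form --- extracting $f_1<\frac{7}{24}\mms_2$ and $f_2<\frac38\mms_2$ first and then subtracting from the budget --- so this is the same approach, merely reorganized.
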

		\begin{proof} \renewcommand{\qedsymbol}{{\footnotesize $\boxdot$}}
			Recall that $v_2(A_1)+v_2(A_3)>\frac{17}{8}\mms_2$. Suppose $v_2(F_1\cup C_1)< \frac{7}{8}\mms_2$. Then $v_2(F_2\cup F_3\cup C_2)> \frac{10}{8}\mms_2$. Since $v_2(F_3\cup C_2)< \frac{7}{8}\mms_2$ we have $v_2(F_2)> \frac{3}{8}\mms_2$. But then we get the contradiction
			\[ \frac{7}{8}\mms_2 > v_2(A_3)= v_2(F_1) + v_2(F_2) + v_2(F_3) \ge \frac{1}{8}\mms_2 + \frac{3}{8}\mms_2 + \frac{3}{8}\mms_2 = \frac{7}{8}\mms_2\,.\]
			Hence, $v_2(F_1\cup C_1) \ge \frac{7}{8}\mms_2$. Similarly, suppose $v_2(F_2\cup F_3\cup C_2)< \frac{7}{8}\mms_2$. Then $v_2(F_1\cup C_1)> \frac{10}{8}\mms_2$. Since $v_2(C_1)< \frac{7}{8}\mms_2$ we have $v_2(F_1)> \frac{3}{8}\mms_2$. Then we get the contradiction
			\[ \frac{7}{8}\mms_2 > v_2(A_3)= v_2(F_1) + v_2(F_2) + v_2(F_3) \ge \frac{3}{8}\mms_2 + \frac{3}{8}\mms_2 + \frac{3}{8}\mms_2 = \frac{9}{8}\mms_2\,.\]
			Hence, $v_2(F_2\cup F_3\cup C_2) \ge \frac{7}{8}\mms_2$. 			
		\end{proof}
		
		Claim \ref{cl:sums7/8_second} implies $\mms_2(2, A_1\cup A_3)\ge \frac{7}{8}\mms_2$ and this concludes the proof.\qedhere
	\end{proof}

% % % % % % % % % % % % % % %
% % % % % % % % % % % % % % %
% % % % % % % % % % % % % % %

\subsection{Values in $\{0, 1, 2\}$}
\label{subsec:012}
\citet{BL16} consider a binary setting where all valuation functions take values in $\{0, 1\}$, i.e.,
for each $i\in N$, and $j\in M$, $v_{ij}\in \{0, 1\}$. This can correspond to expressing approval or disapproval for each item.
It is then shown that it is always possible to find a maximin share allocation in polynomial time. In fact, they show that the Greedy Round-Robin algorithm, presented in Section \ref{sec:1/2}, computes such an allocation in this case.  

Here, we extend this result to the setting where each $v_{ij}$ is in $\{0, 1, 2\}$, allowing the agents to express two types of approval for the items.
Enlarging the set of possible values from $\{0, 1\}$ to $\{0, 1, 2\}$ by just one extra possible value 
makes the problem significantly more complex. Greedy Round-Robin does not work in this case, so a different algorithm is developed.

\begin{theorem}
\label{thm:012}
Let $N=[n]$ be a set of agents and $M=[m]$ be a set of items. If for any $i\in N$, agent $i$ has 
a valuation function $v_i$ such that $v_{ij}\in\{0, 1, 2\}$ for any $j\in M$, then we can find, 
in time $O(nm \log m)$, an allocation $(T_1, \ldots, T_n)$ of $M$ so that $v_i(T_i)\ge \mms_i(n, M)$ 
for every $i\in [n]$.
\end{theorem}

To design our algorithm, we make use of an important observation by~\citet{BL16} that allows us to reduce 
appropriately the space of valuation functions that we are interested in. We say that the agents have 
{\it fully correlated} valuation functions if they agree on a common ranking of the items in decreasing 
order of values. That is, $\forall i \in N$, if $M = \{1, 2, ...,m\}$, we have $v_{i1} \ge v_{i2} \ge \ldots \ge v_{im}$. 
In \citet{BL16}, the authors show that to find a maximin share allocation for any set of  
valuation functions, it suffices to do so in an instance where the valuation functions are fully 
correlated. This family of instances seems to be the difficulty in computing such allocations.
%and each valuation function is a proper permutation of the original. 
Actually, 
%given such a maximin-share allocation we can get an allocation for the original problem in linear time, and 
their result preserves approximation ratios as well (with the same proof); hence we state 
this stronger version. % in Theorem \ref{thm:fully-corr} in Appendix \ref{appendix:012}. The details of the algorithm (Algorithm \ref{fig:alg-012}) and its proof of correctness are included in Appendix \ref{appendix:012} as well.
%To design our algorithm, we make use of Theorem \ref{thm:fully-corr} from Section \ref{sec:random} to reduce 
%the space of valuation functions that we are interested in only to fully correlated functions. 
For a valuation function $v_i$ let $\sigma_i$ be a permutation on the items such that 
$v_i(\sigma_i(j)) \ge v_i(\sigma_i(j+1))$ for $j\in \{1, \ldots, m-1\}$. We denote  the function 
$v_i(\sigma_i(\cdot))$ by $v_i^{_\uparrow}$. 
Note that $v_1^{_\uparrow}, v_2^{_\uparrow}, \ldots, v_n^{_\uparrow}$
are now fully correlated.

\begin{theorem}[\normalfont{\citep{BL16}}]
\label{thm:fully-corr}
Let $N=[n]$ be a set of agents with additive valuation functions, $M=[m]$ be a set of goods and $\rho\in(0, 1]$.
Given an allocation $(T_1, \ldots, T_n)$ of $M$ so that 
$v_i^{_\uparrow}(T_i)\ge \rho \mms_i(n, M)$ 
for every $i$, one can produce in linear time an allocation $(T'_1, \ldots, T'_n)$ of $M$ so that 
$v_i(T'_i)\ge \rho \mms_i(n, M)$ for every $i$.
\end{theorem}

We are ready to state a high level description of our algorithm. The detailed description, however, is deferred to the end of this subsection. The reason for this is that the terminology needed is gradually introduced through a series of lemmas motivating the idea behind the algorithm and proving its correctness. In fact, the remainder of the subsection is the proof of Theorem \ref{thm:012}. Algorithm \ref{fig:alg-012} in the end summarizes all the steps.   \medskip

\noindent {\bf Algorithm Outline:} We first construct $v_1^{_\uparrow}, v_2^{_\uparrow}, \ldots, v_n^{_\uparrow}$ and work with them instead. 
The Greedy Round-Robin algorithm may not directly work, but we partition the items in a similar fashion, although without giving them 
to the agents. Then, we show that it is possible to choose  some subsets of items and redistribute them in a way that guarantees that 
everyone can get a bundle of items with enough value. At a higher level, we could say that the algorithm simulates a variant of the 
Greedy Round-Robin, where for an appropriately selected set of rounds the agents choose in the reverse order. Finally, a maximin 
share allocation can be obtained for the original $v_i$s,  as described in \citet{BL16}. \smallskip

%In what follows, we prove Theorem \ref{thm:012} through a series of lemmas motivating the idea behind the algorithm and proving its correctness. Algorithm \ref{fig:alg-012} in the end of the section summarizes all the steps.   

\begin{proof}[Proof of Theorem \ref{thm:012}]
According to Theorem \ref{thm:fully-corr} it suffices to focus on instances where
the valuation functions take values in $\{0, 1, 2\}$ and are fully correlated. Given such an instance
we distribute the $m$ objects into $n$ buckets in decreasing order, i.e., bucket $i$ will get items 
$i, n+i, 2n+i, \ldots$. Notice that this is compatible with how the Greedy Round-Robin algorithm could
distribute the items; however, we do not assign any buckets to any agents yet. We may assume that 
$m=kn$ for some  $k\in \mathbb{N}$; if not, we just add a few extra items with 0 value to everyone. 
It is convenient to picture the collection of buckets as the matrix
\[B ={ %\small
 \begin{pmatrix}
  (k-1)n+1 & (k-1)n+2 & \cdots & kn \\
  %(k-2)n+1 & (k-2)n+2 & \cdots & (k-1)n \\
  \vdots  & \vdots  & \ddots & \vdots  \\
  n+1 & n+2 & \cdots & 2n \\
  1 & 2 & \cdots & n
 \end{pmatrix}}\,,\] 
since our algorithm will systematically redistribute groups of items corresponding to rows of $B$. 

Before we state the algorithm, we establish some properties regarding these buckets and the way each 
agent views the values of these bundles.
First, we introduce some terminology. 
\begin{definition}
We say that agent $i$ is 
%\vspace{-5pt}
\begin{itemize}
\item \emph{satisfied} with respect to the current buckets, if all the buckets have 
value at least $\mms_i(n, M)$ according to $v_i$. 
\item \emph{left-satisfied} with respect to the current buckets, if she is not satisfied, but at least the $n/2$ leftmost buckets have 
value at least $\mms_i(n, M)$ according to $v_i$. 
\item \emph{right-satisfied} if the same as above hold, but for the rightmost $n/2$ buckets.
\end{itemize}
\end{definition} 

Now suppose that we see agent $i$'s view of the values in the buckets. A typical view would have the following form (recall the goods are ranked from highest to lowest value):
\[{\small
 \begin{pmatrix}
  0 & 0 & 0 & 0 & 0 & 0 &\cdots & 0 & 0 & 0 \\
  \cdot & \cdot & \cdot & \cdot & \cdot & \cdot &\cdots & \cdot & \cdot & \cdot \\
  1 & 1 & 1 & 1 & 1 & 0 &\cdots & 0 & 0 & 0 \\
  \cdot & \cdot & \cdot & \cdot & \cdot & \cdot &\cdots & \cdot & \cdot & \cdot \\
  1 & 1 & 1 & 1 & 1 & 1 &\cdots & 1 & 1 & 1 \\
  2 & 2 & 2 & 1 & 1 & 1 &\cdots & 1 & 1 & 1 \\
  \cdot & \cdot & \cdot & \cdot & \cdot & \cdot &\cdots & \cdot & \cdot & \cdot \\
  2 & 2 & 2 & 2 & 2 & 2 &\cdots & 2 & 2 & 2 
 \end{pmatrix}}\, \]
% \[{\scriptsize
%  \begin{pmatrix}
%   0 & 0 & 0 & 0 & 0 & 0 &\cdots & 0 & 0 & 0 \\
%   \cdot & \cdot & \cdot & \cdot & \cdot & \cdot &\cdot\cdot\cdot & \cdot & \cdot & \cdot \\
%   1 & 1 & 1 & 1 & 1 & 0 &\cdots & 0 & 0 & 0 \\
%   \vdots & \vdots & \vdots & \vdots & \vdots & \vdots &\ddots & \vdots & \vdots & \vdots \\
%   1 & 1 & 1 & 1 & 1 & 1 &\cdots & 1 & 1 & 1 \\
%   2 & 2 & 2 & 1 & 1 & 1 &\cdots & 1 & 1 & 1 \\
%   \vdots & \vdots & \vdots & \vdots & \vdots & \vdots &\ddots & \vdots & \vdots & \vdots \\
%   2 & 2 & 2 & 2 & 2 & 2 &\cdots & 2 & 2 & 2 
%  \end{pmatrix}}\, \]
A row that has only $2$s for $i$ will be called a $\scriptstyle 2$-row for $i$. A row that has both $2$s 
and $1$s will be called a $\scriptstyle 2/1$-row for $i$, and so forth. An agent can also have a $\scriptstyle 2/1/0$-row. It is not necessary, of course, 
that an agent will have all possible types of rows in her view. Note, however, that there can be
at most one $\scriptstyle 1/0$-row and at most one $\scriptstyle 2/1$-row in her view. 
We first prove the following lemma for agents that are not initially satisfied. %\vspace{-10pt}

%\begin{adjustwidth*}{2em}{2em}
\begin{lemma}
\label{lem:left-satisfied}
Any agent not satisfied with respect to the initial buckets must have both a $\scriptstyle 1/0$-row and a 
$\scriptstyle 2/1$-row in her view of $B$. Moreover, initially all agents are either satisfied or 
left-satisfied.
\end{lemma}

\begin{proof} \renewcommand{\qedsymbol}{{\footnotesize $\boxdot$}}
Let us focus on the multiset of values of an agent that is not satisfied, say $i$. 
It is straightforward to see that if $i$ has no $1$s, or the number of $2$s is a multiple of $n$
(including 0), then agent $i$ gets value $\mms_i(n, M)$ from any bucket. So, $i$ must have a row with
both $2$s and $1$s. If this is a $\scriptstyle 2/1/0$-row, then again it is easy to see that the initial
allocation is a maximin share allocation for $i$. So, $i$ has a $\scriptstyle 2/1$-row. The only case
where she does not have a $\scriptstyle 1/0$-row is if the total number of $1$s and $2$s is a multiple
of $n$.But then the maximum and the minimum value of the initial buckets differ by 1, hence we have a
maximin share allocation and $i$ is satisfied.

Next we show that an agent $i$ who is not initially satisfied is left-satisfied. 
In what follows we only refer to $i$'s view. 
Buckets $B_{1}$ and $B_{n}$, indexed by the corresponding columns of $B$, have 
maximum and minimum total value respectively. Since $i$ is not
satisfied, we have $v_i(B_{1}) \ge v_i(B_{n}) + 2$, but the way we distributed the items guarantees that the difference between
any two buckets is at most the largest value of an item; so $v_i(B_{1}) = v_i(B_{n}) + 2$.
Moreover, since $v_i(M)\ge n \mms_i(n, M)$ and $v_i(B_{n})<\mms_i(n, M)$, we must have $v_i(B_{1})>\mms_i(n, M)$ .
This implies that $v_i(B_{1})=\mms_i(n, M) +1$ and $v_i(B_{n})=\mms_i(n, M) -1$.

More generally, we have buckets of value $\mms_i(n, M) +1$ (leftmost columns), we have buckets of value 
$\mms_i(n, M) -1$ (rightmost columns), and maybe some other buckets of value $\mms_i(n, M)$ (columns 
in the middle). We know that the total value of all the items is at least $n\mms_i(n, M)$, so, by 
summing up the values of the buckets, we conclude that there must be at most $n/2$ buckets of value 
$\mms_i(n, M) -1$. Therefore $i$ is left-satisfied.
%\qed
\end{proof}
%\end{adjustwidth*}

So far we may have some agents that could take any bucket and some agents that would take any of the 
$n/2$ (at least) first buckets. Clearly, if the left-satisfied agents are at most $n/2$ then we can
easily find a maximin share allocation. However, there is no guarantee that there are not too many
left-satisfied agents initially, so we try to fix this by reversing some of the rows of $B$. 
To make this precise, we say that {\it we reverse the $i$th row of $B$} when we take items 
$(i-1)n+1, (i-1)n+2, \ldots, in$ and we put item $in$ in bucket 1, item $in-1$ in bucket 2, etc.

The algorithm then proceeds by picking a subset of rows of $B$ and reversing them. The rows are chosen appropriately
so that the resulting buckets (i.e., the columns of $B$) can be easily paired with the agents 
to get a maximin share allocation.
First, it is crucial to understand the effect that the reversal of a set of rows has to an agent. %\vspace{-10pt}

%\begin{adjustwidth*}{2em}{2em}
\begin{lemma}
\label{lem:reverse-satisfied}
Any agent satisfied with respect to the initial buckets remains satisfied independently of the rows 
of $B$ that we may reverse. On the other hand, any agent not satisfied with respect to the initial 
buckets, say agent $i$, is affected if we reverse her $\scriptstyle 1/0$-row or her 
$\scriptstyle 2/1$-row. If we reverse only one of those, then $i$ becomes satisfied with respect to 
the new buckets; if we reverse both, then $i$ becomes right-satisfied. The reversal of any other 
rows is irrelevant to agent $i$. 
\end{lemma}

\begin{proof} \renewcommand{\qedsymbol}{{\footnotesize $\boxdot$}}
Fix an agent $i$. First notice that, due to symmetry, reversing any row that for $i$ is a $\scriptstyle 2$-row, 
a $\scriptstyle 1$-row, or a $\scriptstyle 0$-row does not improve or
worsen the initial allocation from $i$'s point of view. Also, clearly, reversing both the 
$\scriptstyle 1/0$-row and the $\scriptstyle 2/1$-row of a left-satisfied agent makes her right-satisfied.
Similarly, if $i$ is satisfied and has a $\scriptstyle 2/1/0$-row, or has a $\scriptstyle 2/1$-row but
no $\scriptstyle 1/0$-row, or has a $\scriptstyle 1/0$-row but no $\scriptstyle 2/1$-row, then 
reversing those keeps $i$ satisfied. 

The interesting case is when $i$ has both a $\scriptstyle 1/0$-row and a $\scriptstyle 2/1$-row.
If $i$ is satisfied, then even removing her $\scriptstyle 1/0$-row leaves all the buckets with at least 
as much value as the last bucket; so reversing it keeps $i$ satisfied. A similar argument  holds for 
$i$'s  $\scriptstyle 2/1$-row as well.
If $i$ is not satisfied, then the difference of the values of the first and the last bucket will be 2. 
Like in the proof of Lemma \ref{lem:left-satisfied}, the number of columns that have
$1$ in $i$'s $\scriptstyle 1/0$-row and $2$ in $i$'s $\scriptstyle 2/1$-row (i.e., total value 
$\mms_i(n, M) +1$) are at least as many as the columns that have $0$ in $i$'s $\scriptstyle 1/0$-row 
and $1$ in $i$'s $\scriptstyle 2/1$-row (i.e., total value $\mms_i(n, M) -1$). So, by reversing her 
$\scriptstyle 1/0$-row, the values of all the ``worst'' (rightmost) buckets increase by 1, the values 
of some of the ``best'' (leftmost) buckets decrease by 1, and the values of the buckets in the middle 
either remain the same or increase by 1. The difference between the best and the worst buckets now 
is 1 (at most), so this is a maximin share allocation for $i$ and she becomes satisfied. Due to 
symmetry, the same holds for reversing $i$'s $\scriptstyle 2/1$-row only.
%\qed
\end{proof}
%\end{adjustwidth*}

Now, what Lemma \ref{lem:reverse-satisfied} guarantees is that when we reverse some of the rows of the 
initial $B$, we are left with agents that are either satisfied, left-satisfied, or right-satisfied. 
If the rows are chosen so that there are at most $n/2$ left-satisfied and at most $n/2$ right-satisfied
agents, then there is an obvious maximin share allocation: to any left-satisfied agent we 
arbitrarily give 
one of the first $n/2$ buckets, to any right-satisfied agent we arbitrarily give one of the last 
$n/2$ buckets, and to each of the remaining agents we arbitrarily give one of the remaining buckets.
In Lemma \ref{lem:easy-coloring} below, we prove that it is easy to find which rows to reverse to 
achieve that. 

We use a graph theoretic formulation of the problem for clarity. 
With respect to the initial buckets, we define a graph $G=(V, E)$ with $V=[k]$, i.e., $G$ has a
vertex for each row of $B$. Also, for each left-satisfied agent $i$, $G$ has an edge connecting
$i$'s $\scriptstyle 1/0$-row and $\scriptstyle 2/1$-row. We delete, if necessary, any multiple edges 
to get a simple graph with $n$ edges at most. We want to color the vertices of $G$ with two colors, 
``red'' (for reversed rows) and ``blue'' (for non reversed), so that the number of edges having both 
endpoints red is at most $n/2$ and at the same time the number of edges having both endpoints blue is 
at most $n/2$.
Note that if we reverse the rows that correspond to red vertices, then the agents with red endpoints
become right-satisfied, the agents with blue endpoints remain left-satisfied and the agents with both
colors become satisfied. Moreover, the initially satisfied agents are not affected, and we can find
a maximin share allocation as previously discussed. %\vspace{-10pt}
This is illustrated in Figure \ref{fig:alg-012-graph} below.

\begin{figure}[h]
\centering
\mbox{\includegraphics[scale = 0.85]{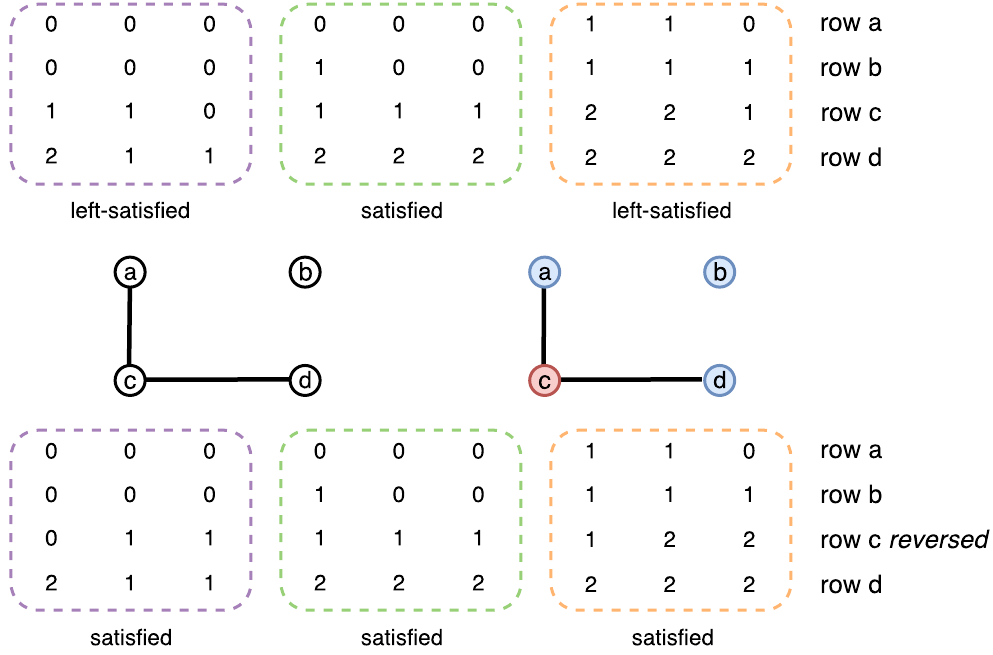}}
\caption{Assuming an instance with 3 agents and 11 items, the tables on top are the three different views on the initial buckets. This results in the graph shown in the middle---before and after the coloring. By reversing row $c$ that corresponds to a red vertex, every agent becomes satisfied and thus any matching of the columns to the agents defines an MMS allocation.} \label{fig:alg-012-graph}
\end{figure}

%\begin{adjustwidth*}{2em}{2em}
\begin{lemma}
\label{lem:easy-coloring}
Given graph $G$ defined above, in time $O(k+n)$ we can color the vertices with two colors, red and blue, 
so that the number of edges with two red endpoints is less than $n/2$ and the number of edges with two 
blue endpoints is at most $n/2$.
\end{lemma}

\begin{proof} \renewcommand{\qedsymbol}{{\footnotesize $\boxdot$}}
We start with all the vertices colored blue, and we arbitrarily recolor vertices red, one at a time, 
until the number of edges with two blue endpoints becomes at most $|E|/2$ for the first time. Assume this 
happens after recoloring vertex $u$. Before turning $u$ from blue to red, the number of edges with 
at most one blue endpoint was strictly less than $|E|/2$. Also, the recoloring of $u$ did not force any 
of the edges with two blue endpoints to become edges with two red endpoints. So, the number of edges 
with two red endpoints after the recoloring of $u$ is at most equal to the number of edges with at 
most one blue endpoint before the recoloring of $u$, i.e., less than $|E|/2$. To complete the proof,
notice that $|E|\le n$. For the running time, notice that each vertex changes color at most once
and when this happens we only need to examine the adjacent vertices in order to update the counters on
each type of edges (only red, only blue, or both).
%\qed
\end{proof}
%\end{adjustwidth*}

Lemma \ref{lem:easy-coloring} completes the proof of correctness for Algorithm \ref{fig:alg-012} that is summarized 
below. For the running time notice that $v_1^{_\uparrow}, \ldots, v_n^{_\uparrow}$ can be computed 
in $O(nm \log m)$, since we get $v_i^{_\uparrow}$ by sorting $v_{i1}, \ldots, v_{im}$. Also step 
\ref{line:left-sat} can be  computed in $O(nm)$; for each agent $i$ we scan the first column of $B$ 
to find her (possible) $\scriptstyle 1/0$-row  and $\scriptstyle 2/1$-row, and then in $O(n)$ we 
check whether she is left-satisfied by checking that the positions that have $1$ in $i$'s 
$\scriptstyle 1/0$-row and $2$ in $i$'s $\scriptstyle 2/1$-row are at least as many as the positions 
that have $0$ in $i$'s $\scriptstyle 1/0$-row and $1$ in $i$'s $\scriptstyle 2/1$-row.
\end{proof}

\begin{algorithm}[H]
\DontPrintSemicolon 
{\small
Let $k=\lceil\frac{m}{n}\rceil$. Add $k n -m$ dummy items with value 0 for everyone. \;
\If {$v_1, \ldots, v_n$ are not fully correlated} {
Compute $v_1^{_\uparrow}, \ldots, v_n^{_\uparrow}$ and use them instead. \;
}
Construct a $k\times n$ matrix $B$ so that $B_{ij}$ is the $(i-1)n+j$th item. \;
Find the set of left-satisfied agents and their corresponding $\scriptstyle 1/0$-rows 
			and $\scriptstyle 2/1$-rows. \;	\label{line:left-sat}
Construct a graph $G=([k], E)$ with $E=\{\{i,j\} | \exists \text{ left-satisfied  agent that }  
			 i \text{ and } j \text{ are her } \allowbreak {\scriptstyle 1/0}\text{-row and }
			 {\scriptstyle 2/1}\text{-row}\}$. \;
Color the vertices of $G$ with two colors, red and blue, so that the number of edges having both 
		endpoints red, and the number of edges having both endpoints blue, each is $\le n/2$. \;
Reverse the rows of $B$ that correspond to red vertices, and keep track of who is satisfied, 
		left-satisfied, or right-satisfied. \;
Arbitrarily give some of the first $n/2$ buckets (columns of $B$) to each of the left-satisfied agents 
		and some of the last $n/2$ buckets to each of the right-satisfied agents. Arbitrarily give the rest of 
		the buckets to the satisfied agents. \; \label{line:allocate_012}
\eIf {$v_1^{_\uparrow}, \ldots, v_n^{_\uparrow}$ were used} {
Based on the allocation in step \ref{line:allocate_012} compute and return a maximin share allocation for the original $v_i$s as described in \citep{BL16}. \;
}{Return the allocation in step \ref{line:allocate_012}.}
}
\caption{$\textsc{exact-mms}_{0,1,2}(N, M, V_N)$}\label{fig:alg-012}
\end{algorithm}

%%%%%%%%%%%%%%%%%%%%%%%%%%%%%%%%%%%%%%%%%%%
%%%%%%%%%%%%%%%%%%%%%%%%%%%%%%%%%%%%%%%%%%%
\section{A Probabilistic Analysis}
\label{sec:random}
%%%%%%%%%%%%%%%%%%%%%%%%%%%%%%%%%%%%%%%%%%%
%%%%%%%%%%%%%%%%%%%%%%%%%%%%%%%%%%%%%%%%%%%

As argued in the previous works~\citep{BL16,PW14}, it has been quite challenging to prove impossibility results.
Setting efficient computation aside, what is the best $\rho$ for which a $\rho$-approximate allocation does exist? All we know so far is that $\rho\neq 1$ by the elaborate constructions by~\citet{KPW16}, and \citet{PW14}. However, extensive experimentation
by~\citet{BL16} (and also by~\citet{PW14}), showed that in all generated instances, there always existed a maximin share allocation. 
Motivated by these experimental observations and by the lack of impossibility results, we present a probabilistic analysis, showing that indeed
we expect that in most cases there exist allocations where every agent receives her maximin share. In particular, we analyze the Greedy Round-Robin algorithm from Section \ref{sec:1/2} when each $v_{ij}$ is drawn from the uniform distribution over $[0,1]$. 

Recently, \citet{KPW16} show similar results  for a large set of distributions over $[0, 1]$,
including $U[0,1]$. Although, asymptotically, their results yield a theorem that is more general than ours, we consider our analysis to be of independent interest, since we have much better bounds on the probabilities for the special case of $U[0,1]$, even for relatively small values of $n$.

For completeness, before stating and proving our results, we include the version of Hoeffding's inequality we are going to use.

\begin{theorem}[\normalfont{\citep{Hoeffding63}}]\label{thm:Hoeffding}
	Let $X_1, X_2, \ldots, X_n$ be independent random variables with $X_i \in [0, 1]$ for $i\in[n]$. Then for the empirical mean $\bar{X} = \frac{1}{n} (X_1+\ldots +X_n)$ we have 
	$ \mathrm{P}\left( \bar{X}-\mathrm{E}[\bar{X}] \ge t\right)  \le \exp(-2nt^2) $.
\end{theorem}

We start with Theorem \ref{thm:random-prop}. Its proof is based on tools like Hoeffding\rq{}s and Chebyshev\rq{}s inequalities, and on a  careful estimation of the probabilities 
when $m<3n$. 
Note that for $m\geq 2n$, the theorem provides an even stronger guarantee than the maximin share (by Claim \ref{cl:upper_bound_1}).

\begin{theorem}\label{thm:random-prop}
	Let $N=[n]$ be a set of agents and $M=[m]$ be a set of goods, and assume that the $v_{ij}$s are i.i.d.\ random variables 
	that follow $U[0,1]$. Then, for $m\ge 2n$ and large enough $n$, the Greedy Round-Robin algorithm allocates to each agent $i$ 
	a set of goods of total value at least $\frac{1}{n}\sum_{j=1}^m v_{ij}$ with probability $1-o(1)$. %\linebreak
	The $o(1)$ term is $O(1/n)$ when $m>2n$ and $O(\log n / n)$ when $m = 2n$.
\end{theorem}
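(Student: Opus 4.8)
The plan is to reduce the statement to a proportionality guarantee and then control the output of Greedy Round-Robin agent by agent. By Claim~\ref{cl:upper_bound_1} the proportional share $\frac1n\sum_{j}v_{ij}$ upper bounds $\mms_i(n,M)$, so the displayed inequality is in fact the stronger \emph{proportional} guarantee, and it is exactly the quantity we must certify. I would prove the simultaneous statement by a union bound over the $n$ agents: it suffices to bound, for each fixed agent $i$, the probability $P\!\big(v_i(S_i)<\frac1n\sum_{j}v_{ij}\big)$ and to sum these over $i$. The two regimes ($m>2n$ versus $m=2n$) will differ only in how small this per-agent probability can be made.

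First I would handle the target $\frac1n\sum_j v_{ij}$ by concentration. Each $\sum_j v_{ij}$ is a sum of $m$ i.i.d.\ $U[0,1]$ variables, so by Hoeffding's inequality it lies within $O(\sqrt{m\log n})$ of $m/2$ except with probability $n^{-c}$; hence with very high probability the proportional share is at most $\frac{m}{2n}+O\!\big(\sqrt{m\log n}/n\big)$, and this upper tail is all that the comparison needs. Next I would lower bound the value an agent actually receives. Fix agent $i$ in position $p$ of the fixed order. The crucial structural observation is that the $p-1$ items removed before $i$'s first pick are chosen by the preceding agents using their own valuations, which are \emph{independent} of $v_i$; therefore $i$'s first pick is distributed as $\max$ of $m-(p-1)\ge m-n+1$ i.i.d.\ $U[0,1]$ values. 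Since $P(\max\text{ of }t\text{ uniforms}<1-s)=(1-s)^t\le e^{-st}$, choosing $s=\Theta(\log n/n)$ makes this at most $n^{-2}$, so after a union bound every agent's first pick exceeds $1-O(\log n/n)$ with high probability. For subsequent rounds I would push the same independence-plus-coupling idea: the first pick an agent makes in each round she participates in is again a maximum over the still-present items, a set of size $\ge n+1$ in every round except possibly the last, and summing these per-round contributions gives the bundle value.

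Combining the two pieces yields the result in the easy regime. When $m>2n$, every agent makes at least two genuine ``maximum'' picks over sets of size $\ge n+1$, each worth $1-O(\log n/n)$, so her bundle overshoots $\frac{m}{2n}$ by a margin that dominates the $O(\sqrt{m\log n}/n)$ upper fluctuation of the target except with probability $O(1/n)$; summing over agents gives the stated $O(1/n)$. The delicate case, and what I expect to be the main obstacle, is $m=2n$. There the agent occupying the \emph{last} position receives exactly one true maximum pick (over $n+1$ items) together with a single leftover item, and I can lower bound that leftover no better than by a single uniform draw. Reconciling this lone leftover with the $\Theta(1/\sqrt n)$-scale upper fluctuation of $\frac1n\sum_j v_{ij}$ is exactly where the analysis is tightest: the first pick already delivers $1-O(\log n/n)$ of the roughly unit-sized target, so everything hinges on the remaining gap, which is of order $\sqrt{\log n/n}$ on the unfavorable half of the probability space. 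I would attack this last step by exploiting the near-deterministic value of the first pick (it is $1-O(\log n/n)$ uniformly over agents) and the positive correlation between the leftover's value and the bulk sum that defines the target, and then close with the union bound over the $n$ agents to obtain the $O(\log n/n)$ bound. This final pick of the last agent is the crux; all the other positions and rounds are comfortably controlled by the independence and maximum-concentration arguments above.
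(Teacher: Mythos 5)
Your high-level architecture is the same as the paper's: a union bound over agents, a threshold comparison splitting each failure event into ``the bundle falls below $a$'' and ``the proportional share exceeds $a$'', Hoeffding for the second piece, and the key structural fact that the items removed by other agents are independent of agent $i$'s own values, so that each of her picks is a maximum of i.i.d.\ uniforms over the items still present (the paper handles $m\ge 2.5n$ by a Chebyshev bound on the sum of the first $k-1$ picks, and $2n\le m<2.5n$ by computing the two-pick distribution). Within this shared skeleton you have two genuine gaps. The smaller one: for $2n<m<3n$ it is simply false that ``every agent makes at least two genuine maximum picks over sets of size $\ge n+1$''. Agent $j$'s second pick is a maximum over $m-n-j+1$ items; when $m=2n+1$ the last agent's second pick is a maximum of $2$ uniforms, nowhere near $1-O(\log n/n)$. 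In that whole regime one must instead show that a maximum of $m-2n+1$ uniforms clears the much smaller gap $\frac{m-2n}{2n}+O\big(\sqrt{\log n/n}\big)$ --- exactly the kind of per-agent computation you deferred --- so as written your argument covers $m\ge 3n$ and attempts $m=2n$, but leaves $2n<m<3n$ unproved.

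The serious gap is the one you flagged yourself, $m=2n$, and it cannot be closed, because the stated $O(\log n/n)$ bound is unattainable there: the last agent alone fails with probability $\Theta(1/\sqrt{n})$. Her bundle is $M_1+U$ with $M_1\le 1$ the maximum of the $n+1$ values she sees at her first pick and $U$, conditionally on that value multiset, a uniformly random non-maximal one of them; her target $\frac1n\sum_j v_{nj}$ fluctuates around $1$ on scale $1/\sqrt{n}$. Concretely, realize the leftover as $X_1$ on the event $\{X_1\ne\max_i X_i\}$, where $X_1,\ldots,X_{n+1}$ are her i.i.d.\ round-one values, and let $T$ be the sum of her values of the $n-1$ items removed before her first pick. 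The failure event contains $\{X_1\le\frac{1}{8\sqrt n}\}\cap\{\sum_{i\ge 2}X_i\ge\frac n2+\frac34\sqrt n\}\cap\{T\ge\frac{n-1}{2}+\frac34\sqrt n\}$: on it the bundle is at most $1+\frac{1}{8\sqrt n}$ while the share is at least $1+\frac{3}{2\sqrt n}-\frac{1}{2n}$. These are three independent events with probabilities $\frac{1}{8\sqrt n}$, $\Omega(1)$, $\Omega(1)$ (CLT), so failure has probability $\Omega(1/\sqrt n)$. Your hoped-for rescue via positive correlation cannot help: $U$ is one of $2n$ summands of the target, so conditioning on $U$ small shifts the target by only $O(n^{-3/2})$, negligible against its $1/\sqrt n$-scale fluctuations. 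Note that this also conflicts with the paper's own treatment of $2n\le m<2.5n$, which evaluates $P(Y_n\le a)$ through $\int_0^a P\big(M_1\le t\wedge U\le a-t\big)\,dt$; that is not a valid identity (the correct formula integrates the \emph{density} of $M_1$, namely $(n+1)t^n\,dt$, not its CDF), and the corrected computation gives a per-agent probability of order $a-1=\Theta\big(\sqrt{\log n/n}\big)$ rather than the $\Theta\big((a-1)^2\big)$ the paper obtains. So the obstruction you ran into at $m=2n$ is real and not a defect of your technique relative to the paper's; a correct proof can only deliver the weaker bound $O(1/\sqrt n)$ in that case.
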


\begin{proof} %{Proof of Theorem \ref{thm:random-prop}}
In what follows we assume that agent $1$ chooses first, agent $2$ chooses second, and so forth. 
We consider several cases for the different ranges of $m$. We first assume that $2n\le m <3n$. 

It is illustrative to consider the case of $m=2n$ and examine the $n$th agent that chooses last. Like all the agents in this case, she receives exactly two items; let $Y_n$ be the total value of those items.  From her perspective, she sees $n+1$ 
values chosen uniformly from $[0,1]$, picks the maximum of those, then u.a.r.~$n-1$ of the rest are removed, and she takes the last one 
as well. %Without loss of generality 
 If we isolate this random experiment, it is as if we take $Y_n = \max\{X_1,...,X_{n+1}\}+X_Y$, where $Y\sim U\left(\{1,2,...,n+1\}\mysetminus \{\mu\}\right)$, $\mu \in \argmax\{X_1,...,X_{n+1}\}$,  
$X_i\sim U[0,1] ~\forall i\in [n+1]$, and all the $X_i$s are independent. We estimate now the probability $\mathrm{P}(Y_n \le a)$ for $1<a<2$. We will set $a$ to a particular value in this interval later on. In fact, we bound this probability using the corresponding probability for $Z_n = \max\{X_1,...,X_{n+1}\}+X_{Y'}$, where $Y'\sim U\{1,2,...,n+1\}$. For $Z_n$ we have
\begin{IEEEeqnarray*}{rCl}
\mathrm{P}(Z_n \le a) & = & \sum_{i=1}^{n+1} \int_0^a \mathrm{P}\left( \max_{1\le j\le n+1}X_j \le t \wedge Y=i \wedge X_i \le a-t \right) dt\\
& = &  (n+1) \int_0^a \mathrm{P}\left(  \max_{1\le j\le n+1}X_j \le t \wedge Y=1 \wedge X_1 \le a-t\right)  dt \\
& = &  \int_0^a \mathrm{P}\left( \max_{1\le j\le n+1}X_j \le t \wedge X_1 \le a-t \right) dt  \\
& = &  \int_0^a \mathrm{P}(X_1 \le t \wedge X_1 \le a-t \wedge X_2 \le t\wedge \ldots \wedge X_{n+1} \le t)dt\\
& = &  \int_0^{a/2} \mathrm{P}(X_1 \le t \wedge X_2 \le t\wedge \ldots \wedge X_{n+1} \le t)dt + \\
&& +\:\int_{a/2}^1 \mathrm{P}(X_1 \le a-t \wedge X_2 \le t\wedge \ldots \wedge X_{n+1} \le t)dt + \\ 
&& +\:\int_1^{a} \mathrm{P}(X_1 \le a-t \wedge X_2 \le t\wedge \ldots \wedge X_{n+1} \le t)dt \\
& = &  \int_0^{a/2} t^{n+1} dt + \int_{a/2}^1 (a-t)t^n dt + \int _1^a (a-t)dt \,.
\end{IEEEeqnarray*}
Also, by the definition of $Y'$ we have $\mathrm{P}(Y'\notin\argmax\{X_1,...,X_{n+1}\}) = {n}/({n+1})$. 
% and $\mathrm{P}(Z_n \le a\ \wedge\ Y'\notin\argmax\{X_1,...,X_{n+1}\}) \le \mathrm{P}(Z_n \le a)$.
Therefore, for $Y_n$ we get
\begin{IEEEeqnarray*}{rCl}
\mathrm{P}(Y_n \le a) & = & \mathrm{P}(Z_n \le a \ \mid Y'\notin\argmax\{X_1,...,X_{n+1}\}) \\
& = &  \frac{\mathrm{P}(Z_n \le a\ \wedge\ Y'\notin\argmax\{X_1,...,X_{n+1}\})}{\mathrm{P}(Y'\notin\argmax\{X_1,...,X_{n+1}\})} \\
& \le &  \frac{\mathrm{P}(Z_n \le a)}{\mathrm{P}(Y'\notin\argmax\{X_1,...,X_{n+1}\})} = \frac{n+1}{n} \mathrm{P}(Z_n \le a)  \\
& = & \frac{n+1}{n} \left( \int_0^{a/2} t^{n+1} dt + \int_{a/2}^1 (a-t)t^n dt + \int _1^a (a-t)dt \right)\,,
\end{IEEEeqnarray*}
where for the inequality we used the fact that $\mathrm{P}(A\cap B)\le \mathrm{P}(A)$ for any events $A, B$.

A similar analysis for the $j$th agent yields
%\begin{IEEEeqnarray*}{rCl}
%\mathrm{P}(Y_j \le a) &\le& \frac{2n-j+1}{n} \left(\int_0^{a/2} t^{2n-j+1} dt + \int_{a/2}^1 (a-t)^{n-j+1}t^n dt +  \int _1^a (a-t)^{n-j+1} dt \right)\\
%& \le &  2 \left(\int_0^{a/2} t^{2n-j+1} dt + \int_{a/2}^1 (a-t)^{n-j+1}t^n dt +  \int _1^a (a-t)^{n-j+1} dt \right) \,.
%\end{IEEEeqnarray*}
\[\mathrm{P}(Y_j \le a) \le \frac{2n-j+1}{n} \left(\int_0^{a/2} t^{2n-j+1} dt + \int_{a/2}^1 (a-t)^{n-j+1}t^n dt +  \int _1^a (a-t)^{n-j+1} dt \right).\]
In the more general case where $m=2n+\kappa(n)$,  $0 \le\kappa(n)< n$, we have a similar calculation for the agents that receive only two 
items in the Greedy Round-Robin algorithm, as well as for the first two items of the first $\kappa(n)$ agents (who receive three items each). 
Let $Y_i$ be the total value agent $i$ receives, and $W_i$ be the value of her first two items. Of course, for the last $2n$ players, $Y_i = W_i$. Also, recall
 that $\sum_{j=1}^{m}v_{ij}=v_i(M)$. We now relate the probability that we are interested in estimating, with the probabilities 
 $\mathrm{P}(Y_i \le a)$ that we have already bounded. We will then proceed by setting $\alpha$ appropriately.
 We have
%\begin{IEEEeqnarray*}{rCl}
%\IEEEeqnarraymulticol{3}{l}{
%\mathrm{P}\bigg(\exists i \text{ such that } Y_i<\frac{1}{n}\sum_{j=1}^{m}v_{ij}\bigg) 
%			\le \sum_{i=1}^{n} \mathrm{P} \bigg(Y_i<\frac{v_i(M)}{n}\bigg) =
%}\\ \qquad\qquad
%&=& \sum_{i=1}^{n} \mathrm{P} \bigg(Y_i<\min\bigg\lbrace \frac{v_i(M)}{n}, a\bigg\rbrace 
%										\vee \frac{v_i(M)}{n}>\max\left\lbrace Y_i, a\right\rbrace \bigg)\le \\
%&\le& \sum_{i=1}^{n} \mathrm{P} \bigg(Y_i<\min\bigg\lbrace \frac{v_i(M)}{n}, a\bigg\rbrace\bigg)
%+\sum_{i=1}^{n} \mathrm{P} \bigg(\frac{v_i(M)}{n}>\max\left\lbrace Y_i, a\right\rbrace \bigg)\le \\
%& \le & \sum_{i=1}^{n} \mathrm{P}(Y_i<a) + \sum_{i=1}^{n} \mathrm{P}\bigg(\frac{v_i(M)}{n}>a\bigg)\,.
%\end{IEEEeqnarray*}
\begin{IEEEeqnarray*}{rCl}
\mathrm{P}\bigg( \exists i \text{ such that } Y_i &<& \frac{1}{n}\sum_{j=1}^{m}v_{ij}\bigg)  
			\le \sum_{i=1}^{n} \mathrm{P} \left( Y_i<\frac{v_i(M)}{n}\right)  \\ 
&=& \sum_{i=1}^{n} \mathrm{P} \left( Y_i<\min\bigg\lbrace \frac{v_i(M)}{n}, a\bigg\rbrace 
										\vee \frac{v_i(M)}{n}>\max\left\lbrace Y_i, a\right\rbrace \right)  \\
&\le& \sum_{i=1}^{n} \mathrm{P} \left( Y_i<\min\bigg\lbrace \frac{v_i(M)}{n}, a\bigg\rbrace\right) 
+\sum_{i=1}^{n} \mathrm{P} \left( \frac{v_i(M)}{n}>\max\left\lbrace Y_i, a\right\rbrace \right)  \\
& \le & \sum_{i=1}^{n} \mathrm{P}(Y_i<a) + \sum_{i=1}^{n} \mathrm{P}\left( \frac{v_i(M)}{n}>a \right) \,.
\end{IEEEeqnarray*}
To upper bound the first sum we use the $W_i$s, i.e.,  we do not take into account the third item that the first $\kappa(n)$ agents receive.  By the definition of $Y_i, W_i$, for these first $\kappa(n)$ agents we have $\mathrm{P}(Y_i  < a) \le \mathrm{P}( W_i<a)$, while for the remaining agents we have $\mathrm{P}(Y_i  < a) = \mathrm{P}( W_i<a)$. Note that the bounds for $\mathrm{P}(Y_i \le a)$ calculated above, here hold for $\kappa(n)+1\le i \le n$. For $1\le i \le \kappa(n)$ the same bounds hold for $\mathrm{P}(W_i \le a)$.
%\begin{IEEEeqnarray*}{l}
%\sum_{i=1}^{n}\mathrm{P}(Y_i<a) \le 
%            \sum_{i=1}^{\kappa(n)} \mathrm{P}( W_i<a) + 
%                        \sum_{i=\kappa(n)+1}^{n} \mathrm{P}(Y_i<a) =\\
%= \sum_{i=1}^{n} \! \bigg(\! \int_0^{a/2}\!\!\!\! t^{2n+\kappa(n)-i+1} dt + \!\!\int_{a/2}^1 \!\!\!\! (a-t)^{n+\kappa(n)-i+1}t^n dt + 
%                        \!\! \int _1^a \!\!\! (a-t)^{n+\kappa(n)-i+1} dt \bigg) =\\
%= \sum_{j=1}^{n} \bigg( \int_0^{a/2} t^{n+\kappa(n)+j} dt + \int_{a/2}^1 (a-t)^{\kappa(n)+j}t^n dt + 
%            \int _1^a (a-t)^{\kappa(n)+j} dt \bigg) =\\
%= \sum_{j=1}^{n}{\frac{(a/2)^{n+\kappa(n)+j+1}}{n+\kappa(n)+j+1}}  +
%            \sum_{j=1}^{n}{\int_{a/2}^1 (a-t)^{\kappa(n)+j}t^n dt} + \sum_{j=1}^{n}{\int _0^{a-1} u^{\kappa(n)+j}}  du\,.
%\end{IEEEeqnarray*}
\begin{IEEEeqnarray*}{rCl}
\sum_{i=1}^{n}\!&\mathrm{P}&\!\!(Y_i  < a) \le 
            \sum_{i=1}^{\kappa(n)} \mathrm{P}( W_i<a) + 
                        \sum_{i=\kappa(n)+1}^{n} \mathrm{P}(Y_i<a) \\
&\le& \sum_{i=1}^{n} \textstyle\frac{m-i+1}{n} \displaystyle \bigg( \int_0^{a/2} t^{m-i+1} dt + \int_{a/2}^1  (a-t)^{n+\kappa(n)-i+1}t^n dt + 
                         \int _1^a  (a-t)^{n+\kappa(n)-i+1} dt \bigg) \\
&\le& 3 \sum_{j=1}^{n} \bigg( \int_0^{a/2} t^{n+\kappa(n)+j} dt + \int_{a/2}^1 (a-t)^{\kappa(n)+j}t^n dt + 
            \int _1^a (a-t)^{\kappa(n)+j} dt \bigg) \\
&=& 3 \left(\,\sum_{j=1}^{n}{\frac{(a/2)^{n+\kappa(n)+j+1}}{n+\kappa(n)+j+1}}  +
             \sum_{j=1}^{n}{\int_{a/2}^1 (a-t)^{\kappa(n)+j}t^n dt} +  \sum_{j=1}^{n}{\int _0^{a-1} u^{\kappa(n)+j}}  du \right) \,.
\end{IEEEeqnarray*}

We are going to bound each sum separately. We set $a=1+\frac{\kappa(n)}{2n}+\sqrt{\frac{3\ln n}{n}}= \frac{m}{2n}+\sqrt{\frac{3\ln n}{n}}$.
Note that for $n\ge 46$ we have $a\in (1, 2)$. Consider the first sum: 
\begin{IEEEeqnarray*}{rCl}
\sum_{j=1}^{n}{\frac{(a/2)^{n+\kappa(n)+j+1}}{n+\kappa(n)+j+1}} & \le & \frac{(a/2)^{n+\kappa(n)+2}}{n+\kappa(n)+2}\sum_{i=0}^{n-1}(a/2)^{i}  \\
&<&  \frac{1}{n+\kappa(n)+2} \cdot \frac{(a/2)^{n+\kappa(n)+2}}{1-a/2} =  O(1/n)\,,
\end{IEEEeqnarray*}
where we got $O(1/n)$ because the bound is at most 
$\frac{3}{n}$ for $n\ge 57$
and for any value of $\kappa(n)$. 

Next, we deal with the second sum:
%\begin{IEEEeqnarray*}{rCl}
%\IEEEeqnarraymulticol{3}{l}{
%\sum_{j=1}^{n}{\int_{a/2}^1 (a-t)^{\kappa(n)+j}t^n dt} <  \int_{a/2}^1 t^n \bigg( \sum_{j=0}^{\infty} (a-t)^{\kappa(n)+j} \bigg) dt \le
%}\\ \qquad\qquad
%&\le& \int_{a/2}^1 t^n (a/2)^{\kappa(n)} \frac{1}{1-a+t}  dt \le \frac{(a/2)^{\kappa(n)}}{1-a/2} \int_{a/2}^1 t^n  dt =\\
%&=&   \frac{(a/2)^{\kappa(n)}}{1-a/2} \bigg( \frac{1-(a/2)^{\kappa(n)}}{n+1} \bigg)= O(1/n)\,.
%\end{IEEEeqnarray*}
\begin{IEEEeqnarray*}{rCl}
\sum_{j=1}^{n}{\int_{a/2}^1 (a-t)^{\kappa(n)+j}t^n dt} &<&  \int_{a/2}^1 t^n \left(  \sum_{j=0}^{\infty} (a-t)^{\kappa(n)+j} \right)  dt \le 
\int_{a/2}^1 t^n (a/2)^{\kappa(n)} \frac{1}{1-a+t}  dt  \\
&\le& \frac{(a/2)^{\kappa(n)}}{1-a/2} \int_{a/2}^1 t^n  dt = \frac{(a/2)^{\kappa(n)}}{1-a/2} \left( \frac{1-(a/2)^{n+1}}{n+1} \right) = O(1/n)\,.
\end{IEEEeqnarray*}
Here, for 
$n\ge 58$ the bound is at most $\frac{10}{n}$
for any $\kappa(n)$. 

Finally, for the third sum, we rewrite it as 
\[ \sum_{j=1}^{n}{\int _0^{a-1} u^{\kappa(n)+j}} du = \sum_{j=1}^{n} {\frac{(a-1)^{\kappa(n)+j+1}}{\kappa(n)+j+1}}
= \sum_{i=\kappa(n)+2}^{n+\kappa(n)+1} {\frac{(a-1)^{i}}{i}}\,. \]
We are going to bound each term separately. Consider the case where $\kappa(n)\ge 5\sqrt{n}$. For 
$n\ge 64$, it can be shown that
$\frac{1}{5}\left( \frac{\kappa(n)}{2n}+\sqrt{\frac{3\ln n}{n}}\,\right) ^{5\sqrt{n}}<\frac{10}{n^{3/2}}$.
So, 
\[\sum_{i=\kappa(n)+2}^{n+\kappa(n)+1} {\frac{(a-1)^{i}}{i}} \le \sum_{i=1}^{n}{\frac{(a-1)^{\kappa(n)}}{\kappa(n)}}
\le n\cdot{\frac{\left( \frac{\kappa(n)}{2n}+\sqrt{\frac{3\ln n}{n}}\,\right) ^{5\sqrt{n}}}{5\sqrt{n}}} \le n \cdot\frac{10}{n^{2}}= \frac{10}{n}\,. \]
On the other hand, when $\kappa(n)< 5\sqrt{n}$, we have $a-1<\frac{2.5+\sqrt{3\ln n}}{\sqrt{n}}$. For $n\ge 59$ and $j \ge 10$ it can be shown
that $\frac{1}{j}\left( \frac{2.5+\sqrt{3\ln n}}{\sqrt{n}}\right)^j< \frac{30}{n^{2}}$. Of course, for $3\le j\le 9$ it is true that 
$\frac{1}{j}\left( \frac{2.5+\sqrt{3\ln n}}{\sqrt{n}}\right)^j=o(1/n)$, and  particularly 
for 
$n\ge 59$ the sum $\sum_{i=3}^{9}\frac{1}{j}\Big( \frac{2.5+\sqrt{3\ln n}}{\sqrt{n}}\Big)^j$ is bounded by $\frac{25}{n}$. 
In general, it is to be expected to have relatively large hidden constants when $m$ is  very close to $2n$. This changes quickly though; when $\kappa(n)>21$ the whole sum is less than $1/n$. In any case, if $\kappa(n)>0$
\begin{IEEEeqnarray*}{rCl}
\sum_{i=\kappa(n)+2}^{n+\kappa(n)+1} {\frac{(a-1)^{i}}{i}} & \le & \sum_{i=3}^{n+2}{\frac{(a-1)^{i}}{i}} \le
 \sum_{i=3}^{9}{\textstyle\frac{1}{i}\Big( \frac{2.5+\sqrt{3\ln n}}{\sqrt{n}}\Big)}^i +
  \sum_{i=10}^{n+2}\textstyle{\frac{1}{i}\Big(\frac{2.5+\sqrt{3\ln n}}{\sqrt{n}}\Big)}^i \le\\
&\le&  O(1/n) + (n-7)\frac{30}{n^{2}}= O(1/n)\,.
\end{IEEEeqnarray*}
However, if $\kappa(n)=0$, we have
\[\sum_{i=2}^{n+1} {\frac{(a-1)^{i}}{i}} = {\textstyle \left( \frac{2.5+\sqrt{3\ln n}}{\sqrt{n}}\right)^2} + \sum_{i=3}^{n+1}{\frac{(a-1)^{i}}{i}}= O\left( \textstyle\frac{\log n}{n}\right)   \,. \]

So far, we have $\sum_{i=1}^{n} \mathrm{P}(Y_i<a)=O(1/n)$ (or $O(\log n /n)$ when $m=2n$). In order to complete the proof for this case we use Hoeffding's inequality 
to bound the probability that the average of the values for any agent is too large.
%\begin{IEEEeqnarray*}{rCl}
%\sum_{i=1}^{n} \mathrm{P}\bigg(\frac{v_i(M)}{n}>a\bigg)\! &\le & n \mathrm{P}\bigg(\frac{v_1(M)}{n}>a\bigg) \!
%            = n \mathrm{P}\bigg(\frac{v_1(M)}{m}>\frac{n}{m}\bigg(\frac{m}{2n}+\textstyle\sqrt{\frac{2.5\ln n}{n}}\bigg)\!\bigg)\! =\\
%& = & n \mathrm{P}\bigg(\frac{v_1(M)}{m}-\frac{1}{2} > \frac{n}{m}\textstyle\sqrt{\frac{2.5\ln n}{n}}\bigg) \le n e^{-2m \left( \frac{n}{m}\sqrt{\frac{2.5\ln n}{n}} \right)^2 }=\\
%& = & n e^{-2 \frac{n}{m}\cdot2.5\ln n}    \le n e^{-2 \ln n}    = \frac{1}{n}\,.
%\end{IEEEeqnarray*}
\begin{IEEEeqnarray*}{rCl}
\sum_{i=1}^{n} \mathrm{P}\left( \frac{v_i(M)}{n}>a\right)  &\le & n \cdot \mathrm{P}\left( \frac{v_1(M)}{n}>a\right)  
            = n \cdot \mathrm{P}\left( \frac{v_1(M)}{m}>\frac{n}{m}\left( \frac{m}{2n}+\textstyle\sqrt{\frac{3\ln n}{n}}\right) \right)  \\
& = & n \cdot \mathrm{P}\left( \frac{v_1(M)}{m}-\frac{1}{2} > \frac{n}{m}\textstyle\sqrt{\frac{3\ln n}{n}}\right)  \le n \cdot e^{-2m \left(  \frac{n}{m}\sqrt{\frac{3\ln n}{n}} \right) ^2 } \\
& = & n \cdot e^{-2 \frac{n}{m}\cdot 3\ln n}    \le n \cdot e^{-2 \ln n}    = \frac{1}{n}\,.
\end{IEEEeqnarray*}
Hence, 
\[\mathrm{P}\left( \exists i \text{ such that } Y_i<\frac{v_i(M)}{n}\right) = \begin{cases} O\Big(\frac{\log n}{n}\Big) & \text{ if } m=2n \\ O\Big(\frac{1}{n}\Big) & \text{ if } 2n<m<3n \end{cases} \,.\]

The remaining cases are for $m\ge 3n$.  
We give the proof for $m\ge 4n$. The cases for $3n\le m <3.5n$ and $3.5n\le m <4n$ differ in small details but they essentially follow the same analysis. We briefly discuss these cases at the end of the proof. 

Assume that $kn\le m < (k+1)n, k\ge 4$. 
%Assume that $m = kn + \kappa(n)$,  $k\ge 4, 0 \le\kappa(n)< n$
%Each agent gets $k$ or $k+1$ items, but for the analysis below we only consider $k$ of them. 
We focus on the agent that choses last, i.e., agent 
$n$, who has the smallest expected value. She  gets exactly $k$ items, and like before let $Y_n$ be
the total value she receives. %, and $W_n$ is the value of her first $k-1$ items. Clearly, for any $\beta$, $\mathrm{P}\left(Y_n<\beta\right) \le  \mathrm{P}\left(W_n <\beta\right)$.
In order to bound $\mathrm{P}\left(Y_n <\beta\right)$ we introduce the random variables $Z_n$ and $W_n$. Consider the following random experiment involving the independent random variables $X_1, \ldots, X_{m-n+1}$, $X_i\sim U[0,1]\ \forall i\in[m-n+1]$. Given a realization of the $X_i$s, i.e., some values $x_1, \ldots, x_{m-n+1}$ in $[0,1]$, $Z_n$ is defined similarly to $Y_n$:
\begin{itemize}
\item Initially, $Z_n=0$.
\item While there are still $x_i$s left, take the maximum of the remaining $x_i$s, add it to $Z_n$, remove it from the available numbers, and then  remove the $x_i$s with the $n-1$ highest indices.
\item Return $Z_n$.
\end{itemize}
On the other hand, $W_n=\sum_{i=1}^{k-1}X_{(m+1-in, m-i(n-1))}$, where $X_{(j, t)}$ is the $j$th order statistic of $X_1, \ldots, X_{t}$. That is, $W_n$ is defined as the sum of the largest of all $x_i$s, the second largest of the first $m-n+1$ $x_i$s, the third largest of the first $m-2n+2$ $x_i$s, and so on. 

It is not hard to see that always $W_n\le Z_n$ (in fact, each term of $W_n$ is less than or equal to the corresponding term of $Z_n$) and that $Z_n$ follows the same distribution as $Y_n$. So, 
$\mathrm{P}\left(Y_n<\beta\right) = \mathrm{P}\left(Z_n <\beta\right) \le  \mathrm{P}\left(W_n <\beta\right)$.
Using the fact that the $i$th order statistic in a sample of size $\ell$ drawn independently from $U[0, 1]$ has expected value $\frac{i}{\ell+1}$ and variance $\frac{i(\ell-i+1)}{(\ell +1)^2 (\ell +2)}$ \citep{Gentle}, we get
\begin{IEEEeqnarray*}{rCl}
\mathrm{E}[W_n]&=& \frac{m-n+1}{m-n+2}+\frac{m-2n+1}{m-2n+3}+\ldots+ \frac{m-(k-1)n+1}{m-(k-1)n+k}  \\
&\ge& \frac{(k-1)n+1}{(k-1)n+2}+\frac{(k-2)n+1}{(k-2)n+3}+\ldots+ \frac{n+1}{n+k} \\
&>& k-1 -\frac{1}{(k-1)n}-\frac{2}{(k-2)n}-\ldots- \frac{k-1}{n} > k-1 -\frac{(k-1)H_{k-1}}{n}\,.
\end{IEEEeqnarray*}
Moreover, if $X_i' = X_{(m+1-in, m-i(n-1))}$ we have
%\begin{IEEEeqnarray*}{rCl}
%\sigma_{W_n}^2 &=& \mathrm{Var}(W_n)= \sum_{i=1}^{k-1}\sum_{j=1}^{k-1} \mathrm{Cov}(X_i, X_j) \le 
%                        \sum_{i=1}^{k-1}\sum_{j=1}^{k-1} \sqrt{\mathrm{Var}(X_i)\mathrm{Var}(X_j)}\le \\
%%&\le& \sum_{i=1}^{k-1}\sum_{j=1}^{k-1} \sqrt{\mathrm{Var}(X_i)\mathrm{Var}(X_j)} \le \\
%&\le& \left(\sum_{i=1}^{k-1} \sqrt{\mathrm{Var}(X_i)}\right)^2\!\! < 
%        \left(\sum_{i=1}^{k-1} \frac{1}{m-i n+1}\right)^2\!\! \le \left(\sum_{i=1}^{k-1} \frac{1}{(k-i)n}\right)^2\!\! = \frac{1}{n^2}H_{k-1}^2 \,,
%\end{IEEEeqnarray*}
\begin{IEEEeqnarray*}{rCl}
\sigma_{W_n}^2 = \mathrm{Var}(W_n) &=& \sum_{i=1}^{k-1}\sum_{j=1}^{k-1} \mathrm{Cov}(X_i, X_j) \le 
                        \sum_{i=1}^{k-1}\sum_{j=1}^{k-1} \sqrt{\mathrm{Var}(X_i)\mathrm{Var}(X_j)} \le \left(\sum_{i=1}^{k-1} \sqrt{\mathrm{Var}(X_i)}\right)^2 \\
%&\le& \sum_{i=1}^{k-1}\sum_{j=1}^{k-1} \sqrt{\mathrm{Var}(X_i)\mathrm{Var}(X_j)} \le \\
&<& \left(\sum_{i=1}^{k-1} \frac{\sqrt{i\,}}{m-i n+ i +1}\right)^2 <  \left(\sqrt{k-1}\sum_{i=1}^{k-1} \frac{1}{(k-i)n}\right)^2 = \frac{(k-1)H_{k-1}^2}{n^2} \,,
\end{IEEEeqnarray*}
where $H_{k-1}$ is the $(k-1)$-th harmonic number. Now we can bound the probability that any agent receives value
less than $1/n$ of her total value.
\[ \mathrm{P}\left( Y_i<\frac{v_i(M)}{n}\right)  \le  \mathrm{P}\left( Y_n<\frac{v_n(M)}{n}\right)  \le \mathrm{P}\left(Y_n<\frac{13k}{20}\right) + \mathrm{P}\left( \frac{v_n(M)}{n} > \frac{13k}{20}\right) \,.\]
%\begin{IEEEeqnarray*}{rCl}
%\mathrm{P}\bigg(Y_i<\frac{v_i(M)}{n}\bigg)& \le & \mathrm{P}\bigg(Y_n<\frac{v_n(M)}{n}\bigg)\le\\
%&\le& \mathrm{P}\left(Y_n<\frac{13k}{20}\right) + \mathrm{P}\bigg(\frac{v_n(M)}{n} > \frac{13k}{20}\bigg)\,.
%\end{IEEEeqnarray*}
Next, using Chebyshev's inequality we have 
\begin{IEEEeqnarray*}{rCl}
\mathrm{P}\left(Y_n<\frac{13k}{20}\right)& \le & \mathrm{P}\left(W_n<\frac{13k}{20}\right) = 
                                        \mathrm{P}\left(\mathrm{E}[W_n] -W_n > \mathrm{E}[W_n]-\frac{13k}{20}\right) \\
&\le& \mathrm{P}\left(\left|\mathrm{E}[W_n] -W_n \right| > k-1 -\frac{(k-1)H_{k-1}}{n}-\frac{13k}{20}\right) \\
&\le& \mathrm{P}\left(\left|\mathrm{E}[W_n] -W_n \right| > \frac{\frac{7k}{20}-1 -\frac{(k-1)H_{k-1}}{n}}{\frac{\sqrt{k-1} H_{k-1}}{n}}\sigma_{W_n}\right)\\
&\le& \frac{(k-1) H_{k-1}^2}{\left(\left(\frac{7k-20}{20}\right)n- (k-1) H_{k-1}\right)^2}\,.
\end{IEEEeqnarray*}
On the other hand, using Hoeffding's inequality,
\begin{IEEEeqnarray*}{rCl}
\mathrm{P}\left( \frac{v_n(M)}{n} > \frac{13k}{20}\right) & = & 
        \mathrm{P}\left( \frac{v_n(M)}{m} -\frac{1}{2} > \frac{n}{m}\frac{13k}{20}-\frac{1}{2}\right)  \\
& \le & \mathrm{P}\left( \frac{v_n(M)}{m} -\frac{1}{2} > \frac{13k}{20(k+1)}-\frac{1}{2}\right)  \\
&\le& e^{-2m\left( \frac{3k-10}{20(k+1)} \right)^2} \le e^{-2kn\left( \frac{3k-10}{20(k+1)} \right)^2}\,.
\end{IEEEeqnarray*}
Finally, we take a union bound to get
\begin{IEEEeqnarray*}{rCl}
\mathrm{P}\left( \exists i \text{ s.t. } Y_i<\frac{v_i(M)}{n}\right) &\le & \sum_{i=1}^{n} \mathrm{P}\left( Y_i<\frac{v_i(M)}{n}\right)\\ 
&\le& n\left( {\textstyle \frac{(k-1) H_{k-1}^2}{\left(\left(\frac{7k-20}{20}\right)n- (k-1) H_{k-1}\right)^2}} + e^{-2kn\left( \frac{3k-10}{20(k+1)} \right)^2} \right)= O(1/n) \,.
\end{IEEEeqnarray*}
%\begin{IEEEeqnarray*}{rCl}
%\mathrm{P}\bigg(\exists i \text{ such that } Y_i<\frac{v_i(M)}{n}\bigg)&\le&  \sum_{i=1}^{n} \mathrm{P}\bigg(Y_i<\frac{v_i(M)}{n}\bigg)\le\\
%&\le& n\left( \frac{H_{k-1}^2}{\left(\left(\frac{7k-20}{20}\right)n- H_{k-1}\right)^2} + e^{-2kn\left( \frac{3k-10}{20(k+1)} \right)^2} \right)= O(1/n) \,.
%\end{IEEEeqnarray*}
%
%
The exact same proof works when $3n\le m <3.5n$, but instead of $\frac{3k-10}{20(k+1)}$ in Hoeffding's inequality, we have $\frac{3\cdot 3-5}{20(3+0.5)}$ and of course we should adjust $\mathrm{E}[W_n]$ and $\mathrm{Var}(W_n)$ accordingly. When $3.5n\le m <4n$ on the other hand, we need to consider three items in $W_n$ instead of two, since two items are not  enough anymore to guarantee separation of $Y_i$ and $\frac{1}{n}\sum_{j=1}^{m}v_{ij}$ with high probability. That said, the proof is the same, but we should adjust $\mathrm{E}[W_n]$ and $\mathrm{Var}(W_n)$, and instead of $\frac{13k}{20}=\frac{39}{20}$ we may choose $2.5$. %\qed
\end{proof}

We now state a similar result for any $m$, generalizing Theorem \ref{thm:random-prop} that only holds when $m\ge 2n$. We use a  modification
of Greedy Round-Robin. While $m<2n$, the algorithm picks any agent uniformly at random and gives her only her 
``best'' item (phase 1). 
When the number of available items becomes two times the number of active agents, the algorithm proceeds as 
usual (phase 2). We note that while for $m\ge 2n$ Theorem \ref{thm:random-prop} gives the stronger guarantee of $\frac{v_i(M)}{n}$ for each agent $i$, here we can only have a guarantee of $\mms_i(n, M)$.

\begin{theorem}\label{thm:random-mms}
	Let $N=[n]$, $M=[m]$, and the $v_{ij}$s be as in Theorem \ref{thm:random-prop}. Then, for any $m$ and large enough $n$, the Modified Greedy Round-Robin algorithm allocates to each agent $i$ 
	a set of items of total value at least $\mms_i(n, M)$ with probability $1-o(1)$. The $o(1)$ term is $O(1/n)$ when $m>2n$ and $O(\log n / n)$ when $m \le 2n$.
\end{theorem}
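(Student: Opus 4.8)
\textbf{Proof plan for Theorem~\ref{thm:random-mms}.}
The plan is to reduce the claim to Theorem~\ref{thm:random-prop} whenever possible, and to handle the regime $m < 2n$ separately via the modified phase~1. Recall from Claim~\ref{cl:upper_bound_1} that $\mms_i(n,M)\le \frac{1}{n}\sum_{j} v_{ij}$, so for $m > 2n$ the proportional guarantee already dominates the maximin share; thus Theorem~\ref{thm:random-prop} directly gives the $1-o(1)$ result with $o(1)=O(1/n)$, and nothing new is needed in that regime. The entire work is therefore in showing that when $m \le 2n$ the Modified Greedy Round-Robin still delivers each agent her $\mms_i(n,M)$ with probability $1 - O(\log n / n)$.

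First I would understand what the maximin share looks like when $m$ is comparable to (or smaller than) $n$. When $m \le n$, an optimal $n$-partition for agent $i$ can place at most one item in each of the $m$ nonempty bundles, so $\mms_i(n,M) = \min_{j\in M} v_{ij}$ if $m=n$, and $\mms_i(n,M)=0$ as soon as $m<n$ (some bundle is forced to be empty). More generally, for $n < m < 2n$, no bundle in the optimal partition can contain three items without another being empty unless items are cheap, and a careful look shows $\mms_i(n,M)$ is governed by the smallest items; the natural target is that each agent should receive her single most valuable available item, or a comparably valued pair, during phase~1. So the key structural step is to argue that, with high probability over the i.i.d.\ $U[0,1]$ draws, $\mms_i(n,M)$ is small --- of order the minimum order statistic --- and hence that handing each agent one well-chosen item suffices.

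Next I would analyze phase~1 itself. While $m < 2n$ the algorithm repeatedly selects a uniformly random active agent and awards her her favorite remaining item. The crucial point is that when an agent is selected she grabs $\max_j v_{ij}$ over the currently unallocated goods, which with high probability comfortably exceeds her maximin share (the latter being controlled by a \emph{minimum}-type quantity, as above). I would bound the failure probability by a union bound over the $n$ agents, showing that for each fixed agent the event ``the item she receives in phase~1 is worth less than $\mms_i(n,M)$'' has probability $O(\log n / n^2)$, which after the union bound yields the claimed $O(\log n / n)$. The $\log n$ factor is the expected source of the extra loss: it should enter through controlling the minimum of $\Theta(n)$ uniform values (whose typical scale is $1/n$) and through the reshuffling of which items remain available as random agents are served, for which a coupon-collector / occupancy estimate naturally produces a $\log n$. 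Once $m$ drops to exactly twice the number of remaining active agents, phase~2 begins on a sub-instance to which Theorem~\ref{thm:random-prop} applies (via monotonicity, Lemma~\ref{lem:monotonicity}, to relate the residual maximin shares back to $\mms_i(n,M)$), closing the induction.

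The main obstacle I anticipate is the phase-transition regime $n \le m < 2n$, where neither the clean proportional argument of Theorem~\ref{thm:random-prop} nor a trivial ``$\mms_i=0$'' observation applies, and where the randomness of \emph{which agent is served when} interacts with \emph{which items survive}. Specifically, proving that every agent's phase-1 allocation beats her maximin share requires simultaneously lower-bounding the value an agent extracts (a max over a shrinking random set) and upper-bounding her maximin share (a min-type quantity), uniformly over all $n$ agents, and the dependence introduced by sequential removal of items breaks naive independence. I would expect to handle this through order-statistic estimates for $U[0,1]$ together with Hoeffding's and Chebyshev's inequalities, as the paper signals, paying the $O(\log n / n)$ price precisely here.
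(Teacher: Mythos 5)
Your high-level decomposition matches the paper's: for $m \ge 2n$ reduce to Theorem~\ref{thm:random-prop} via Claim~\ref{cl:upper_bound_1}, and for $m < 2n$ analyze phase~1 and then hand phase~2 off to Theorem~\ref{thm:random-prop} through Lemma~\ref{lem:monotonicity}. But the heart of your plan for phase~1 is both unnecessary and, in part, wrong. The paper's phase-1 argument is purely \emph{deterministic}: whenever the remaining items and agents satisfy $|M_a| < 2|N_a|$, the pigeonhole principle forces every $|N_a|$-partition of $M_a$ to contain a bundle with at most one item, hence
\[
\mms_i(|N_a|, M_a) \le \max_{j \in M_a} v_{ij} \qquad \text{for every } i \in N_a\,,
\]
and since each phase-1 step removes exactly one agent and one item, repeated application of Lemma~\ref{lem:monotonicity} gives $\mms_i(|N_a|, M_a) \ge \mms_i(n, M)$. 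So the agent served at any phase-1 step receives her best remaining item, whose value is at least $\mms_i(n,M)$ \emph{with probability one}; no order statistics, no union bound, and no per-agent failure probability $O(\log n/n^2)$ enter at all. The entire $O(\log n/n)$ failure term comes from the single application of Theorem~\ref{thm:random-prop} to the phase-2 sub-instance (which starts exactly when $|M_a| = 2|N_a|$), not from phase~1.

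Moreover, the structural claim you plan to prove --- that for $n < m < 2n$ the share $\mms_i(n,M)$ is ``of order the minimum order statistic'' --- is false near the top of that range, so the probabilistic route you sketch would break down there. Take $m = 2n-1$: an optimal partition for agent $i$ consists of $n-1$ pairs plus one singleton, and with i.i.d.\ $U[0,1]$ values this typically gives $\mms_i(n,M)$ bounded below by a constant (pairing large items with small ones makes every bundle worth roughly $1$), not $O(1/n)$. What is true, and all that is needed, is the deterministic pigeonhole bound $\mms_i(n,M)\le\max_j v_{ij}$ valid whenever $m<2n$. Your worry about the dependence between which agents are served and which items survive also dissolves under the correct argument: phase~1 needs no distributional facts whatsoever, and for phase~2 the surviving items' values for the surviving agents are still i.i.d.\ $U[0,1]$, because the removed items were selected using only the valuations of the already-served agents (and uniform random agent choices), which are independent of the remaining agents' valuations. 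As written, your proof could not be completed along the lines you describe; replacing the probabilistic phase-1 analysis with the pigeonhole-plus-monotonicity observation fixes it and recovers the paper's proof.
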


\begin{proof}
If $m\ge 2n$ then this is a corollary of Theorem \ref{thm:random-prop}. When $m<2n$, then for any agent $i$ we have 
$\max_j\{v_{ij}\}\ge \mms_i(n, M)$. So the first agent that receives only her most 
valuable item has total value at least $\mms_i(n, M)$. If $N_a, M_a$ are the sets of remaining agents and items respectively, 
after several agents were assigned one item in phase 1 of the algorithm, then by Lemma \ref{lem:monotonicity}, for any agent 
$i\in N_a$, we have $\mms_i(|N_a|, M_a) \ge \mms_i(n, M)$. If $|M_a|<2|N_a|$ it is also true that 
$\max_{j\in M_a}v_{ij}\ge \mms_i(|N_a|, M_a)$, so correctness of phase 1 follows by induction. If 
$|M_a|=2|N_a|$, then by Theorem \ref{thm:random-prop} phase 2 guarantees that with high probability each agent $i\in N_a$
will receive a set of items with total value at least $\frac{1}{|N_a|}v_{i}(M_a) \ge \mms_i(|N_a|, M_a) \ge \mms_i(n, M)$.
\end{proof}

\noindent{\bf Remark 1.}
The implicit constants in the probability bounds of Theorems \ref{thm:random-prop} and \ref{thm:random-mms} depend heavily on $n$ and $m$, as well as on the point one uses to separate $Y_i$ and $\frac{1}{n}\sum_{j=1}^{m}v_{ij}$ in the proof of Theorem \ref{thm:random-prop}.
Our analysis gives good bounds for the case $2n\le m<3n$ without requiring very large values for $n$ (especially when $\kappa(n)$ in the proof of Theorem \ref{thm:random-prop} is not small). 
For example, if $m=2.4n$ an appropriate adjustment of our bounds gives a $o(1)$ term less than $1.7/n$ for $n\ge 41$.
When we switch from the detailed analysis of the $2n\le m<3n$ case to the sloppier general treatment for $m \ge 3n$, there is definitely 
some loss,
%on how tight the bounds are. For instance, if we fine-tune the point where we break the probabilities for
e.g., for $m=4n$ we get that the $o(1)$ term is less than $130/n$ for $n> 450$. This is corrected relatively quickly as $m$ grows, e.g., for $m=13n$ the $o(1)$ term can be made less than $8/n$ for $n\ge 59$. One can significantly improve the constants by breaking the interval $kn\le m<(k+1)n$ into smaller intervals (not unlike the $3n\le m<3.5n$ case).\medskip

%\noindent{\bf Remark 2.}
Theorems \ref{thm:random-prop} and \ref{thm:random-mms} may leave the impression that $n$ has to be large. 
%for such a statement. 
Actually, 
there is no reason why we cannot consider $n$ fixed and let $m$ grow. Following closely the proof of Theorem \ref{thm:random-prop}
for $m \ge 4n$, we get the next corollary. Notice that now we can use $\mathrm{E}[W_n] \ge 0.7k$ and $\sigma_{W_n}^2 < k$.
\begin{corollary}\label{thm:random-prop-m}
	Let $N=[n]$, $M=[m]$, and the $v_{ij}$s be as in Theorem \ref{thm:random-prop}. 
	Then, for fixed $n$ and large enough $m$, the Greedy Round-Robin algorithm allocates to each agent $i$ 
	a set of goods of total value at least $\frac{1}{n}\sum_{j=1}^m v_{ij}$ with probability $1-O(1/m)$. 
\end{corollary}

\section{Conclusions}
%%%%%%%%%%%%%%%%%%%%%%%%%%%%%%%%%%%%%%%%%%%
%%%%%%%%%%%%%%%%%%%%%%%%%%%%%%%%%%%%%%%%%%%
The most interesting open question is undoubtedly whether one can improve on the $2/3$-appro\-xi\-ma\-tion. 
Going beyond $2/3$ seems to require a drastically different approach.  One idea that may deserve further exploration is to pick in each step of Algorithm \ref{fig:alg-pw_rec}, the best out of all possible matchings (and not just an arbitrary matching as is done in line~\ref{line:allocate} of the algorithm). This is essentially what we exploit for the special case of $n=3$ agents. However, for a larger number of agents, this seems to result in a heavy case analysis without any visible benefits. In terms of non-combinatorial techniques, we are not currently aware of any promising LP-based approach to the problem. 

Even establishing better ratios for special cases could still provide new insights into the problem. It would be interesting, for example, to see if we can have an improved ratio for the special case studied in \citet{BS06} for the Santa Claus problem. In this case of additive functions, the value of a good $j$ takes only two distinct values, $0$ or $v_j$.
On the other hand, obtaining negative results seems to be an even more challenging task, given our probabilistic analysis and the results of related works. 
The negative results~\citep{KPW16,PW14} require  very elaborate constructions, which still do not yield an inapproximability factor far away from $1$.
Apart from improving the approximation quality, exploring practical aspects of our algorithms is another direction, see e.g.,~\citet{spliddit}. Finally, we have not addressed here the issues of truthfulness and mechanism design, a stimulating topic for future work, studied recently by~\citep{ABM16,ABCM17}. %and \citet{ABCM17}. 
These works still leave several open questions regarding the approximability that can be achieved under truthfulness (without payments) for more than two agents. It is also not clear if more positive results can arise 
%Given the impossibility results in~\citep{MP11} for a related problem, we expect similar negative results here as well, in the context of mechanism design without money. 
when payments are allowed. %however, more interesting questions may arise. 
Similar mechanism design questions also remain open for a related problem studied by~\citet{MP11}.  

%Na poume kati kai gia tou IJCAI edw an einai.

\section*{Acknowledgements}
Research co-financed by the European Union (European Social Fund - ESF) and Greek national funds through the Operational Program ``Education and Lifelong Learning'' of the National Strategic Reference Framework (NSRF) - Research Funding Program: ``THALES - Investing in knowledge society through the European Social Fund''.
%%%%%%%%%%%%%%%%%%%%%%%%%%%%%%%%%%%%%%%%%%%
%%%%%%%%%%%%%%%%%%%%%%%%%%%%%%%%%%%%%%%%%%%

\bibliographystyle{plainnat}
\bibliography{fairdivrefs}

%%%%%%%%%%%%%%%%%%%%%%%%%%%%%%%%%%%%%%%%%%%
%%%%%%%%%%%%%%%%%%%%%%%%%%%%%%%%%%%%%%%%%%%

\end{document}